\newtheorem{theorem}{Theorem}[section]
\newtheorem{definition}[theorem]{Definition}
\newtheorem{lemma}[theorem]{Lemma}
\newtheorem{proposition}[theorem]{Proposition}
\newtheorem{assumption}[theorem]{Assumption}
\newtheorem{notation}[theorem]{Notation}
\def\w#1{\mathop{:}\nolimits\!#1\!\mathop{:}\nolimits}
\newtheorem{remark}[theorem]{Remark}
\newtheorem{example}[theorem]{Example}
\numberwithin{equation}{section}
\numberwithin{theorem}{section}
\newcommand{\qed}{\hfill$\Box$}
\newenvironment{proof}{\begin{trivlist}\item[]{\em Proof:}\/}{%
\qed\end{trivlist}}
\newenvironment{proofof}[1]{%
\begin{trivlist}\item[]{\em Proof of #1}\ }{\qed\end{trivlist}}
\newcommand{\Nsphere}{S^{N-1}(\sqrt{N})}
\newcommand{\E}{{\mathbb E}}
\newcommand{\R}{{\mathbb R}}
\newcommand{\C}{{\mathbb C\hspace{0.05 ex}}}
\newcommand{\N}{{\mathbb N}}
\newcommand{\cf}[1]{{\mathbbm 1}_{\{#1\}}}
\renewcommand{\epsilon}{\varepsilon}
\renewcommand{\abs}[1]{\left| #1 \right|}
\newcommand{\rme}{{\rm e}}
\newcommand{\rmd}{{\rm d}}
\renewcommand{\norm}[1]{\Vert #1\Vert}
\newcommand{\set}[1]{\{#1\}}
\newcommand{\mean}[1]{\langle #1\rangle}
\newcommand{\vep}{\varepsilon}
\newcommand{\defem}[1]{{\em #1\/}}
\newcommand{\wick}[1]{\mathlarger{:}#1\mathlarger{:}}
\newcommand{\inner}[2]{\langle #1, #2 \rangle}
\newcommand{\len}{\mathrm{len}}
\newcommand{\email}[1]{E-mail: \tt #1}
\newcommand{\emailjani}{\email{jani.lukkarinen@helsinki.fi}}
\newcommand{\emailaleksis}{\email{aleksis.vuoksenmaa@helsinki.fi}}
\newcommand{\UHaddress}{\em University of Helsinki, Department of Mathematics 
and Statistics\\
\em P.O. Box 68, FI-00014 Helsingin yliopisto, Finland}
\newcounter{jlisti}
\begin{document}

\title{Generation of chaos in the cumulant hierarchy of the stochastic Kac model}
\author{Jani Lukkarinen\thanks{\emailjani}, %
Aleksis Vuoksenmaa\thanks{\emailaleksis}
\\[1em]%
$\,^*,\,^\dag$\UHaddress}
\date{\today}

\maketitle

\abstract{We study the time-evolution of cumulants of velocities and kinetic energies in the stochastic Kac model for velocity exchange of $N$ particles, with the aim of quantifying how fast these degrees of freedom become chaotic in a time scale in which the collision rate for each particle is order one.   Chaos here is understood in the sense of the original Sto\ss zahlansatz, as an almost complete independence of the particle velocities which we measure by the magnitude of their cumulants up to a finite, but arbitrary order.
Known spectral gap results imply that typical initial densities converge to uniform distribution on the constant energy sphere at a time which has order of $N$ expected collisions.
We prove that the finite order cumulants converge to their small stationary values much faster, already at a time scale of order one  collisions, and we state how this convergence can be controlled via solutions to the kinetic equation of this model.
The proof relies on stability analysis of the closed, but nonlinear, hierarchy of energy cumulants around the fixed point formed by their values in the stationary spherical distribution. It provides the first example of an application of the cumulant hierarchy method to control the properties of a microscopic model related to kinetic theory.}

\section{Introduction}

The stochastic Kac model for velocity exchange was introduced in 1956 by Mark Kac \cite{kac_foundations_1956}, as a toy model to study the emergence and accuracy of kinetic theory as a description of large scale properties of a particle system. This is a stochastic mean-field model and, as such, amenable to various techniques, coming both from physics and probability theory.
We will define the model in full detail and discuss its known properties in Section \ref{sec:Kacdefinition}. We start by presenting its relation to the kinetic theory of a rarefied gas of particles which serves as one important motivation for the present work.

In the original work by Boltzmann, the collision operator in the kinetic equation was derived assuming
that the collisions between particles are well separated in time and space, so that only collisions between two particles need to be considered in detail.
More precisely, it is assumed that the particles otherwise move freely, apart from brief moments in time in which a pair of incoming particle velocities $(v_i,v_j)$, $i,j$ the labels for the colliding particles, are replaced by outgoing velocities
$(v'_i,v'_j)$, according to the rules of elastic collisions between these particles.

The main assumption about these collisions was that the incoming velocities $v_i,v_j$ are \defem{independent} random variables.
This assumption is now called
the \defem{molecular chaos hypothesis}, or \defem{Sto\ss zahlansatz}. Albeit perhaps intuitively believable, there is an inherent problem in the above strict molecular chaos hypothesis: the outgoing velocities $(v'_i,v'_j)$ cannot be statistically independent if the incoming velocities are that.  Hence, if these particles can have a recollision later in time, the molecular chaos hypothesis is in doubt, and although the recollision probability is expected to be small in these systems, it is not zero for the original rarefied gas model.

This has lead to the following \defem{propagation of chaos} problem:
if in the initial state the particle velocities are nearly independent, will they be so also later in time?  If yes, for how long and can we quantify how ``nearly''?

The stochastic Kac model allows to focus on one part of the original problem by ignoring all spatial structure
and most correlations introduced by the collisions in the original model. More precisely, each particle has only velocity as its dynamical variable, we assume that collision times are determined by a random Poisson distributed clock, and the pair of colliding particles are chosen randomly.  In a collision, the outgoing velocities are otherwise completely random but we do require that the collision conserves kinetic energy, i.e., we require that $(v'_i)^2+ (v'_j)^2=v_i^2+ v_j^2$.  As a consequence, the total kinetic energy of particles, $\sum_{i=1}^Nv_i^2$ is still a constant of motion in the stochastic Kac model.

It is quite remarkable, as proven already by Kac,
that if the system is started from a state which is close to a product state for velocities but which has a deterministic total energy fixed to $N$, then the
evolution does remain close to a product state and the velocity marginal distribution evolves according to kinetic equation which is similar to the
spatially homogeneous Boltzmann equation, at least up to times of order $N$.

It took several decades to reach similar conclusions for the original rarefied gas system, in an analogous Boltzmann--Grad scaling limit \cite{lenard_entropy_1973} --- a review of the current status for rarefied gas models can be found from \cite{gallagher_newton_2014}.  Significant progress has also been made for weakly nonlinear wave equations, such as for the weakly nonlinear Schr\"odinger equation \cite{deng_full_2023}, \cite{deng_propagation_2024}.

The main method to study propagation of chaos in the above deterministic examples comes from hierarchical evolution equations for moments, such as the BBGKY hierarchy for classical particle system. This is a delicate task, as almost independence manifests itself as an almost factorization property for moments which is difficult to propagate. These technical difficulties are also reflected in the length of such studies, often running into more than hundred pages.

Already for a while now, our group has advocated the idea that cumulants would serve as an easier quantifier of statistical independence than moments.
Namely, the cumulant of a finite collection of random variables is zero already if any one of the variables is independent from the rest.  This allows to use smallness of joint cumulants to quantify also those cases in which the random variables are almost, but not completely, independent.
It was also observed in \cite{lukkarinen_wick_2016} that
a regularization of powers of random variables known as Wick polynomials could be a useful tool in derivation of evolution equations for cumulants.

Indeed, what we call cumulants here have already been used in statistical physics, as well, under the names of Ursell functions or connected correlation functions.  Decay properties of cumulants of initial data have also been an important technical tool in studies of moment hierarchies of dynamical systems with random initial data \cite{erdos_linear_2000}, \cite{lukkarinen_weakly_2011}, and lately also for control of fluctuations in the rarefied gas case above \cite{gallagher_newton_2014}.

Main motivation for the present work has been to check how far can we push the cumulant hierarchy techniques in the study of chaoticity in the stochastic Kac model. We expect that the ideas and techniques presented here will also facilitate the control of cumulant hierarchies in deterministic models with random initial data. 
However, additional challenges will be posed by mechanisms which generate dependence between the relevant random variables. Such mechanisms are provided by practically all Hamiltonian dynamics. Examples include recollisions in a rarefied gas, and constructive interference by wave-like evolution, such as the nonlinear Schrödinger equation.

Due to the strongly stochastic mean-field nature of the present dynamics, we are able to prove much more than just propagation of chaos. Namely, also \defem{generation of chaos} holds in this system: even if there are significant correlations between the energies initially in the system, they will decrease with time.  Our main result is summarized in Theorems \ref{thm:generation-of-alpha-chaotic-bounds} and  \ref{thm:convergence-to-stationary-non-chaotic} in which we conclude that such correlations decrease exponentially fast and converge to their small stationary values, at a rate which is often uniformly order one in the number of particles, $N$, and never more than proportional to $\ln N$.

As a further generalization, we can extend the earlier proofs of propagation of chaos to larger class of initial data, including deterministic data which are not absolutely continuous with respect to the stationary measure.
This is done without resorting to scaling limits but we instead provide error estimates for any sufficiently large systems and times.  In particular, we derive in Theorem \ref{thm:kinetic-accuracy} explicit estimates for how well the solutions to the Boltzmann--Kac kinetic evolution equation describe the evolution of energy cumulants, after the state has become chaotic.

The proof is based on finding suitable norms and iterative structures in the cumulant hierarchy of the system.  The system has a unique stationary state, and also the cumulant hierarchy has a unique fixed  point.  We find a formulation of the cumulant hierarchy which will allow performing a standard stable manifold inspection around this fixed point, with an explicit control over the exponential convergence to the fixed point values.  As a by-product, we thus also obtain quantitative estimates for how fast the cumulants equilibrate in the stochastic Kac model.

Since all these results are obtained by using cumulant hierarchy analysis, derived with the help of Wick polynomials, we can conclude that these techniques can facilitate the analysis of the propagation and generation of chaos, at least in the stochastic toy model considered here.  It should also be stressed that, not only we can relax some of the assumptions, but also the derivation of the these results takes less than half of the usual number of pages associated with such kinetic theory computations.

The paper is organized as follows: Firstly, we will introduce the stochastic Kac model and comment on some of the known results concerning it. In section \ref{sec:Kacdefinition}, we define the velocity and energy cumulants and write down their evolution time-evolution, as specified by the Kac master equation. We then explain how the symmetry of the underlying measure on \(S^{N-1}(\sqrt{N})\) is reflected on the level of joint cumulants. We define a family of norms on the spaces of joint cumulants. These norms aim to describe the degree of chaoticity of states of the model. Theorems \ref{thm:generation-of-alpha-chaotic-bounds} and \ref{thm:convergence-to-stationary-non-chaotic} describe the time-evolution of the joint cumulants and the behavior of their norm in these spaces. In section \ref{sec:initdata}, we describe different kinds of initial data and establish bounds for the joint energy cumulants when the underlying probability measure is symmetric. In section \ref{sec:Nonrep}, we solve the time-evolution of an important subclass of energy cumulants consisting joint cumulants with no repeated particle indices. Section \ref{sec:Repeated} contains the proofs of Theorems \ref{thm:generation-of-alpha-chaotic-bounds} and \ref{thm:convergence-to-stationary-non-chaotic}, and in the final Section \ref{sec:KacBEaccuracy} we discuss the connection to the Boltzmann--Kac kinetic equation, and prove Theorem \ref{thm:kinetic-accuracy} about its accuracy.

\subsection{Stochastic Kac model}\label{sec:Kacdefinition}

In this section, we recall the definition of  the stochastic model introduced by Kac in his 1956 article \cite{kac_foundations_1956}, and
discuss earlier results about the properties of its evolution. We will follow the usual conventions and choose the parameters of this model based on computational convenience instead of the greatest physical generality.  These choices will also be explained in this section.

Take a number \(N \geq 2\) of particles, which we label by \([N] \coloneqq \{1,\dots,N\}\). At each moment in time, the particle with label \(i \in [N]\) is associated with one-dimensional velocity \(v_i \in \R\), and the collective velocities of all the particles form a velocity vector \(v = (v_1,\dots ,v_N) \in \R^{N}\).
Each velocity vector constitutes a configuration of the system.
Originally, Kac consider the physical case with three-dimensional velocities, $v_i\in \R^3$, but this would lead to more complex collision rules and indexing of the configuration vectors.  Here, as in most mathematical work on this model, we focus on the one-dimensional case.
 As noted earlier, we do not track the positions of the particles and accordingly they are assumed to have no influence on the time-evolution of the velocities.

The configurations are updated as follows. Assume that the velocities are at some time given by \(v \in \R^N\).
As we move to the next instant, we update the velocity vector by picking, uniformly at random, two particles \(i\) and \(j\), with $j\ne i$, and a ``collision'' angle \(\theta \in (-\pi, \pi]\), all independent from each other. We then update the velocity vector to
\[v^* \coloneqq R_{i,j}(\theta)v \in \R^{N}.\]

	Here the matrix \(R_{i,j}(\theta)\) is obtained as a permutation of the following \(N \times N\)-matrix 
	\begin{align}
	R_{1,2}(\theta) &\coloneqq \begin{pmatrix}
	\cos(\theta) & \sin(\theta) \\
	-\sin(\theta) & \cos(\theta)
	\end{pmatrix} \oplus I_{N-2}\,,
	\end{align}
	where $I_n$ denotes the $n$-dimensional identity operator.
	The non-zero elements of the matrix \(R_{i,j}(\theta)\) are given by
	\begin{align*}
		(R_{i,j}(\theta))_{\ell,\ell} &= 1, \quad \ell \neq i,j \\
		(R_{i,j}(\theta))_{i,i} &= \cos(\theta) \\
		(R_{i,j}(\theta))_{i,j} &= \sin(\theta) \\
		(R_{i,j}(\theta))_{j,i} &= -\sin(\theta) \\
		(R_{i,j}(\theta))_{j,j} &= \cos(\theta)\, .
	\end{align*}
	For any fixed \(i,j\) and \(\theta\), the matrix \(R_{i,j}(\theta)\) is orthogonal. Thus the Euclidean norm of the velocity vectors is preserved,
	\begin{align}
		\abs{R_{i,j}(\theta)v} = \abs{v}.
	\end{align} 
	Therefore, we can consider the dynamics as taking place on a (hyper-)sphere \(S^{N-1}(r)\),
	with the radius \(r\) specified by the Euclidean norm of the initial configuration.

	The above specification of the dynamics on the configuration space lifts to the space observables 
	\(\phi \in C_c(S^{N-1}(r))\) in terms of the transition operator, which specifies the conditional expectation of \(\phi(v^*)\) based on the previous configuration:
	\[(Q_N \phi)(v) \coloneqq \frac{1}{N(N-1)}\sum_{i,j=1}^{N} \cf{i\neq j}\int_{-\pi}^{\pi} \frac{\rmd \theta}{2\pi}\phi(R_{i,j}(\theta)v).\]
	Kac's stochastic model is now obtained by assuming that the above collision times are given  by a Poisson process.  As usual,
	we want to consider time scales at which the average rate of collisions experienced by one particle remains order one, i.e., it is bounded from below and above by some uniform constants for all relevant \(N\).  If the Poisson process has rate one, this can be achieved by speeding up time by a factor of \(N\).  For the sake of convenience, we follow the usual convention for time-scales in this model and choose the scale in which the average collision rate per particle is equal to one.

	Since each \(R_{i,j}(\theta)\) is orthogonal and leaves the sphere \(S^{N-1}(r)\) invariant, the operator \(Q_N\)  is a self-adjoint operator on the Hilbert space \(L^2(S^{N-1}(r),\nu_{N,r})\), where \(\nu_{N,r}\) is the uniform probability measure on \(S^{N-1}(r)\); note that $\nu_{N,r}$ is also invariant under rotations.
	Since we are interested in the properties of the system in the thermodynamic limit, we will pick \(r = \sqrt{N}\), and \(\nu_N\) stands for the corresponding uniform measure \(\nu_{N,\sqrt{N}}\).  This choice yields configurations for which the kinetic energy per particle,
	\(\frac{1}{N}\sum_{i=1}^N v_i^2 \), is  almost surely equal to one; note that we choose each particle to have the same mass, equal to two, so that $\frac{m}{2}=1$.

    Implementing the above choices, we can now
    state the evolution equation for probability densities $f_t^N(v)$, $v\in \Nsphere$, $t\ge 0$.
    Namely, if \(f^N_0 \in L^2(\Nsphere
    ,\nu_{N})\) is non-negative and integrates to one, we can consider initial data for the stochastic Kac process given by the measure $f^N_0\nu_{N}$.  Then, the distribution of $v(t)$, $t>0$, is given by $f^N_t\nu_{N}$ where also
    \(f^N_t \in L^2(\Nsphere
    ,\nu_{N})\) and it can be solved from the  following Cauchy problem:
	\begin{align}
		\begin{cases}
		\dv{t} f_t^N &= N(Q_N - I)f_t^N \\
		f_t^{N}\vert_{t=0} &= f_0^N. 
		\end{cases}
		\label{eq:Kac-stochastic-model}
	\end{align}
    More details about the derivation of this equation can be found from \cite{carlen_entropy_2010,mischler_kacs_2013} and the references therein.

The equation  \eqref{eq:Kac-stochastic-model} is called the  \emph{Kac Master Equation}, and it has been studied extensively.
	In conjunction with its introduction in \cite{kac_foundations_1956}, Kac showed that the evolution propagates approximate tensorization, meaning that if the marginals \(\Pi^k[f_0\nu_N]\), for a fixed number of variables $k$, of the initial measure converge to tensor products of a one-particle measure, then the time-evolved measure satisfies the same property. Moreover, the limit $g_t$ of the first marginals of the measure was then shown to satisfy the homogeneous Boltzmann--Kac equation
	\begin{align}
		\label{eq:boltzmann-kac}
		\partial_t g_t(v) &= C[g_t,g_t](v),
	\end{align}
	with the bilinear collision operator \(C[f,g]\) given by
	\begin{align}
		C[f,g](v) = 2\int_{-\pi}^{\pi} \int_{\R} \left(f(\cos(\theta)v - \sin(\theta)w)g(\sin(\theta)v + \cos(\theta)w) - f(v)g(w)\right)\rmd w \frac{\rmd \theta}{2\pi}\,.
	\end{align}
	Kac's stochastic equation thus provides an early example of a stochastic system from which a kinetic equation could be derived in a mathematically rigorous way.

	The pioneering work  at the turn of the millennium established that the operator \(N(Q_N-I)\) has an \(N\)-independent gap in its spectrum. Janvresse \cite{janvresse_spectral_2001} used martingale methods to prove that there is \emph{some} spectral gap.  Carlen et al.\ \cite{carlen_determination_2003} pinpointed the size of the gap uniform in \(N\) to be \(\frac{1}{2}\). The same precise value for the gap was reached by Maslen \cite{maslen_eigenvalues_2003} with the help of different techniques around the same time.
	This shows that the density converges to the uniform density exponentially fast, at a rate \(\frac{1}{2}\),
	\begin{align}
		\norm{f_t^N - 1}_{L^2} \leq \rme^{-\frac{t}{2}}\norm{f_0^N -1}_{L^2}\,.
	\end{align}
	Thus, in the context of the Hilbert space \(L^2(\Nsphere)\), the generator of the dynamics is well understood and the semigroup is contractive, with the solutions converging towards the unique stationary solution exponentially fast.  This also yields an  estimate of convergence to equilibrium for any observable $\phi\in L^2(\Nsphere)$:
	\[
	 |\mean{\phi(v(t))}- \mean{\phi}_{\nu_N}|
	 = |\mean{(f_t^N - 1)\phi}_{\nu_N}|
	 \le \norm{f_t^N - 1} \norm{\phi}
	 \le \rme^{-\frac{t}{2}}\norm{f_0^N -1}
	  \norm{\phi}\,.
	\]

	However, for a large subclass of chaotic initial data, like \(f^{N}_0\) obtained by conditioning \(f^{\otimes N}\) onto the hypersphere and with \(f\) satisfying certain moment and concentration properties, the initial \(L^2(\Nsphere)\) distance from the uniform density is exponential in \(N\), simply because the size of the initial data grows exponentially in \(N\).  Namely,
\begin{align}
\norm{f^N_0 - 1}_{L^2} \geq \norm{f^N_0}_{L^2} - 1 \approx C^{N}\,,
\end{align}
where the constant $C$ often is greater than, and not close to, one; this will be discussed in more detail in Sec.\ \ref{sec:intialKac}.
It follows that even though the decay rate of the \(L^2\)-norm is exponential, it is still necessary to wait for times that are \emph{linear} in \(N\) in order for the above spectral gap estimate to imply that all observables $\phi\in L^2(\Nsphere)$ have expectations close to their equilibrium values.

To study if such slow convergence holds for typical thermodynamic quantities, much of the later work has focused on the (relative) \emph{entropy production} of the system. Via Pinsker's inequality, relative entropy gives bounds on the total variation distance to the uniform measure on \(\Nsphere\), and thus estimates on the entropy production provide a way of studying the convergence of the time-evolved measure on \(\Nsphere\) towards the uniform measure. Carlen et al.\ \cite{carlen_entropy_2010} proved that there is no \(N\)-uniform lower bound on the entropy production. This result was improved by Einav \cite{einav_villanis_2011}, who proved that the rate of entropy production is almost as bad as \(\frac{1}{N}\), which means that one has to wait for times that are \emph{almost} linear in the number of particles -- a result that gives relaxation times similar to the ones established by the spectral gap results. Around the same time, Hauray and Mischler \cite{hauray_kacs_2014} provided quantitative estimates for the chaos of a probability measure on submanifolds of \(\R^N\) defined by conservation laws, while Mischler and Mouhot \cite{mischler_kacs_2013} proved quantitative propagation of chaos for related models. Both works rely on delicate analysis of the relative entropy functional.

Coupling methods have been proven useful in estimating the mixing time of the closely related Kac's walk. In \cite{pillai_kacs_2017}, Pillai and Smith used a sequence of couplings for copies of the Kac walk and with this technique managed to prove an \(O(N\log N)\) mixing time for the Kac walk, which after rescaling the time to the same order as in the Kac model corresponds to times of order \(\log(N)\).
 
The main aim of this work is to show that if only a subset of finitely many particles is considered, the relaxation of the joint cumulants of their kinetic energies is much faster, not worse than proportional to \(\log(N)\). Furthermore, if we consider the joint cumulant involving energies of at least two particles, this will quickly become \(o_N(1)\) as \(N\to \infty\).
Therefore, this result also implies that these random variables become ``almost independent'' at logarithmic time scales.

	\subsection*{Acknowledgements}
	
	We are most grateful to Cl\'ement Mouhot for pointing to us this simplified model to try out and develop the cumulant hierarchy techniques.
	We also thank  Gerardo Barrera Vargas, Cristian Giardin\'a,
	Stefan Gro\ss{}kinsky,
	Kalle Koskinen, Jorge Kurchan,
	Herbert Spohn, Gigliola Staffilani, and
Minh Binh Tran, for helpful discussions and additional references.

  The work has been supported by the Academy of Finland, via an Academy project (project No. 339228) and the {\em Finnish Centre of Excellence in Randomness and Structures\/} (project No. 346306).
		JL would also like to thank the Isaac Newton Institute for Mathematical
Sciences, Cambridge, UK, for support and hospitality during the
programme \defem{Frontiers in kinetic theory: connecting microscopic to
macroscopic} (KineCon 2022) where partial work on this paper was
undertaken. This work was also supported by a grant from the Simons
Foundation.	
	
	\section{Cumulants in the Kac model}
	\label{sect:Cumulants-in-the-Kac}
	
For the sake of completeness, we
have summarized the definition of joint cumulants and the Wick polynomial regularizations in
	the Appendix \ref{sec:cumulantsandWick}.

Since cumulants are related to expectations of certain monomials of the random variables, instead of the master equation, we start here from the evolution equation of generic observables.   The stochastic Kac model corresponds to a Feller process whose generator is given by the adjoint of the operator in the Kac master equation.  More precisely, assume $F_0^N$ is a Radon probability measure on $\Nsphere$
which we use to determine the initial data for the stochastic Kac process above.  Let $F_t^N$ denote the resulting distribution for $v(t)$.  Then for any bounded continuous observable $\phi\in C_b(\Nsphere)$ we have
\[
\dv{t}  \mean{\phi}_{F_t^N} = \mean{N(Q_N - I)^* \phi}_{F_t^N}
 = \mean{N(Q_N - I) \phi}_{F_t^N}\,,
\]
using the self-adjointness of the operator $Q_N$.
In other words, for any \(\phi \in C_{b}(\Nsphere)\), we have
	\begin{align}
	\dv{t}\int_{\Nsphere} \phi(v) F_t^N(\rmd v) = \int_{\Nsphere} \left(N(Q_N-I)\phi\right)(v)F_t^N(\rmd v)\,.
	\label{eq:backward-kolmogorov}
	\end{align}
The connection between the Feller generator and the adjoint of the operator in the master equation on $L^2(\Nsphere,\nu_{N})$ is also apparent in this formulation,
although it should be stressed that this formulation makes sense even if the measure \(F_t^N\) does not admit a density with respect to the uniform measure on \(\Nsphere\).

One simplifying feature of the above formulation of Kac model is that it has exponential moments and a moment generating function which is entire.
We will consider both the velocity random variables $v_i$ and the corresponding kinetic energies $e_i:= (v_i)^2$.  Since for any $v\in \Nsphere$
\[
 \sum_{i=1}^N e_i = |v|^2 = N\,,
\]
it follows that $0\le e_i(t)\le N$ and $|v_i(t)|\le \sqrt{N}$ almost surely for any $t\ge 0$,
and thus the functions
\[
 \phi_1(v;\xi) := \rme^{\xi\cdot v}\,,
 \qquad \phi_2(v;\xi) := \rme^{\sum_{i=1}^N\xi_i (v_i)^2}\,,
\]
are bounded and continuous on $\Nsphere$
for any $\xi \in \C^N$.  In addition, we can conclude using Morera's theorem that their expectations
$g_j(\xi,t) :=\mean{\phi_j(v;\xi)}_{F_t^N}$, $j=1,2$,
are entire functions in $\xi \in \C^N$ with $g_j(0,t)=1$.
In addition, since for all $z\in \C$ we have
$|\rme^z-1|\le |z| \rme^{|z|}$, it follows that
\[
 |\phi_j(v(t);\xi)-1|\le \frac{1}{2}\,,
\]
almost surely
for all $\xi \in B(0,\vep)\subset \C^N$, all $t\ge 0$, and both $j=1,2$, at least if $\vep = \frac{1}{4 N}$.
For such $\xi$, we also have
\begin{align}\label{eq:momgenbound}
 |g_j(\xi,t)-1|=|\mean{\phi_j(v(t);\xi)-1}_{F_t^N}|
 \le \frac{1}{2}\,,
\end{align}
and thus also $|g_j(\xi,t)|\ge 1-|1-g_j(\xi,t)|\ge \frac{1}{2}$.

The above results imply that the generating functions of cumulants, $h_j(\xi,t):=\ln g_j(\xi,t)$,
are analytic in the neighborhood $\xi \in B(0,\vep)$ for any $t$.  We also find that for such $\xi$
\[
 \partial_t h_j(\xi,t) = \frac{\partial_t g_j(\xi,t)}{g_j(\xi,t)}
 = \rme^{-h_j(\xi,t)} \mean{N(Q_N - I)\phi_j(\cdot;\xi)}_{F_t^N}
\]
is a continuous function in $t$.  From now on,
given any integrable observable $\Phi:\R^N\to \C$,
we use the shorthand notation $\mean{\Phi(v(t))}$ or $\mean{\Phi(\cdot)}_t$ for the expectation
$\mean{\Phi(\cdot)}_{F_t^N}$. When it is clear from the context, we even use \(\mean{\cdot}\).
Integrating the time-derivative thus results in the formula
\[
 h_j(\xi,t)
 = h_j(\xi,0)
+ \int_0^t\rmd s\, \rme^{-h_j(\xi,s)} \mean{N(Q_N - I)\phi_j(\cdot;\xi)}_{s}\,,
\]
for all $\xi\in B(0,\vep)$.

By analyticity, we can use the Cauchy's integral formula to express all partial derivatives in $\xi$ as iterated contour integrals.  Then, the above bounds allow application of Fubini's theorem to exchange their order with the time-integral and the expectation in the above formula.
This yields the following identity satisfied by cumulants in our setup: Given any sequence of indices $I_j\in [N]$, $j=1,2,\ldots,n$, we have
for velocity cumulants
$\kappa[v(t)_I]  = \partial_{\xi}^I h_1(0,t)=
\prod_{j=1}^n \partial_{\xi_{I_j}}(h_1(\xi,t))|_{\xi=0}$ and thus
\[
 \kappa[v(t)_I] = \kappa[v(0)_I]
 + \int_0^t\rmd s\, \mean{\partial_{\xi}^I\left( \rme^{-h_j(\xi,s)}N(Q_N - I)\phi_1(\cdot;\xi)\right)_{\xi=0}}_{s}\,.
\]

To obtain a simplified hierarchy for the cumulants, we next recall the generating functions of Wick polynomials of the above collections of random variables,
$G_j(Y;\xi,t) = \rme^{\xi \cdot Y-h_j(\xi,t)}$. Then,
for any $v$,
$\rme^{-h_1(\xi,t)}= G_1(v;\xi,t) \phi_1(v;-\xi)$
and with $e_i=v_i^2$, $i=1,2,\ldots,N$, also
$\rme^{-h_2(\xi,t)}= G_2(e;\xi,t) \phi_2(v;-\xi)$.
We will use the notations $\w{v_I}_t$ and $\w{e_I}_t$ for the Wick polynomials $\partial_{\xi}^I G_1(v;\xi,t)|_{\xi=0}$
and $\partial_{\xi}^I G_2(e;\xi,t)|_{\xi=0}$, respectively.  Summarizing, for the velocity cumulants we have
\begin{align}\label{eq:integratedvhierarchy}
 \kappa[v(t)_I] = \kappa[v(0)_I]
 + \int_0^t\rmd s\, \mean{\partial_{\xi}^I\left(
 G_1(v(s);\xi,s) \phi_1(v(s);-\xi) N(Q_N - I)\phi_1(v(s);\xi)\right)_{\xi=0}}\,,
\end{align}
and for the energy cumulants
\begin{align}\label{eq:integratedehierarchy}
 \kappa[e(t)_I] = \kappa[e(0)_I]
 + \int_0^t\rmd s\,
 \mean{ \partial_{\xi}^I\left(G_2(e(s);\xi,s) \phi_2(v(s);-\xi) N(Q_N - I)\phi_2(v(s);\xi)\right)_{\xi=0}}\,.
\end{align}
We will next show that these results allow derivation of evolution equations for the cumulant hierarchy of velocities and of energies.

	\subsection{Time-evolution of velocity and energy cumulants}

Let us first inspect the cumulants of velocities.
We recall the explicit form of the operator $Q_N$
and apply this in (\ref{eq:integratedvhierarchy})
which results in the identity
\[
 \phi_1(v;-\xi) N(Q_N - I)\phi_1(v;\xi)
  =
  \frac{1}{N-1}\sum_{i,j=1}^N \cf{i\neq j}\int_{-\pi}^{\pi} \frac{\rmd \theta}{2\pi}
   \left(\rme^{(R_{i,j}(\theta)-I)v\cdot \xi}- 1\right)\,, \quad v\in \R^N\,.
\]
To evaluate its derivatives,
we use the Leibniz rule which in the subsequence notation can be written as
\[
 \partial_\xi^I(A(\xi) B(\xi) )
 = \sum_{J \subseteq I}
  \partial_\xi^{I\setminus J} A  \, \partial_\xi^J B\,.
\]
Differentiating the result in time yields
	\begin{align}
	\dv{t}\kappa[v(t)_I]
	&= \frac{1}{N-1}\sum_{i,j=1}^N \cf{i\neq j}\int_{-\pi}^{\pi} \frac{\rmd \theta}{2\pi} \sum_{\emptyset \neq J \subseteq I}\mean{\wick{v_{I\setminus J}}_t \partial_{\xi}^J \left(\rme^{(R_{i,j}(\theta)-I)v\cdot \xi}- 1\right)\vert_{\xi=0}}_t \,.
	\end{align}
Here, the term corresponding to an empty sequence $J$ evaluates to zero when $\xi=0$ and one could evaluate the remaining derivative to obtain a polynomial of $v$ which then expands using (\ref{eq:truncated-moments-to-cumulants}) into a nonlinear term in cumulants of $v$.

Instead of velocity cumulants we focus here on the evolution of the energy cumulants.
We follow the same strategy as for the velocities, where in the first step differences in energies appear instead of differences of velocities.
Explicitly, we note that $\phi_2(v;-\xi) \phi_2(v';\xi)$, where $v'$ is the velocity vector after a ``collision'', involves an exponential of a sum over the following terms:
	 \begin{align}
	 (R_{i,j}(\theta)v)_i^2 - v_i^2 &= P_\theta(v_i,v_j) \nonumber \\
	 (R_{i,j}(\theta)v)_j^2 - v_j^2 &= Q_\theta(v_j,v_i) \nonumber \\
	 (R_{i,j}(\theta)v)_k^2 - v_k^2  &= 0, \quad k \neq i,j	\,.
	 \end{align}
	 Here the \emph{energy collision polynomials} \(P_\theta\) and \(Q_\theta\) are given by
		\begin{align}
 			P_\theta(v_i,v_j)&=-\sin(\theta)^2v_i^2 
 			+ 2\cos(\theta)\sin(\theta)v_iv_j + \sin(\theta)^2 v_j^2 \nonumber \\
 			Q_\theta(v_j,v_i)&= -\sin(\theta)^2v_j^2 - 2\cos(\theta)\sin(\theta)v_iv_j + \sin(\theta)^2 v_i^2 \,.
 		\label{eq:energy-collision-polynomials-def}
		\end{align}
		They satisfy the following antisymmetry relation
		\begin{align}
		Q_\theta(v_j,v_i) = (-1)P_\theta(v_i,v_j).
		\end{align}
	The cumulant generating function of the energy variables thus evolves according to
	\begin{align}
	\dv{t} \kappa_t(e_I)
		&=\frac{1}{N-1}\sum_{i,j=1}^N \cf{i\neq j} 
		\sum_{\emptyset \ne J \subseteq I} \int_{-\pi}^{\pi} \frac{\rmd \theta}{2\pi}
		\mean{\wick{e_{I\setminus J}}_t \partial_\xi^{J}\left(\rme^{\xi_i P(v_i,v_j) + \xi_j Q(v_j,v_i)}-1 \right)\vert_{\xi=0}}_t.
		\label{eq:time-evolution-of-energy-cumulants}
 	\end{align}
 	To derive the main results, we will expand and simplify the remaining expectation.  In particular, it will turn out to be a term involving only energy cumulants, so that also the energy cumulant hierarchy is ``closed'' and can be used independently from the velocity cumulants.  This will be done in more detail in Sections \ref{sec:Nonrep} and \ref{sec:Repeated}.
	 
	\subsection{Symmetric measures and partition classifiers of their cumulants}

Our main result will be to control the generation of chaos in the above Kac model in the sense of asymptotic near-independence of energies of different particles.  Similar results could then be derived for the particle velocities, but for the sake of brevity we do not go into this argument in detail here.  The control will be given as an explicit bound for the difference between the energy cumulant and its value at equilibrium, i.e., for the uniform distribution on the energy sphere.

Although the bound will be given for fixed system, in particular, for fixed number of particles $N$, we aim at bounds which are valid for all large enough $N$.  To stress this, we will carefully track the dependence of the constants in the upper bounds on $N$ and on the initial data.
Although we do not consider limiting properties of suitably prepared sequences \((F_0^N)_{N\geq 2}\) of initial probability measures, as is common in derivations of kinetic equations, the bounds here could be employed to make also such scaling limit statements.

Given a number of particles $N\ge 2$, the initial data is taken to be essentially arbitrary, apart from one technically simplifying assumption: we assume that $F_0^N$ is a \defem{symmetric} Radon probability measure on $\Nsphere$.
Here, and in the following, we identify such measures with their extensions to $\R^N$, i.e., with the Borel measure $\hat{F}_0^N$ defined by setting $\hat{F}_0^N(B)= F_0^N(B\cap \Nsphere)$ for any Borel set $B\subset \R^N$; note that then the support of $\hat{F}_0^N$ is contained in  $\Nsphere$ and the restriction to the Borel $\sigma$-algebra of $\Nsphere$ yields back $F^N_0$.
Symmetry then refers to invariance of the extension $\mu$ under permutations of labels of the particles, see Definition \ref{def:symmetric-measure}.

Although this assumption hides some features of the dynamics, it is not a substantial restriction.
For example, given an arbitrary initial Radon probability measure but only asking for properties of expectation values of label permutation invariant observables, we can replace the initial measure by its symmetric projection without changing any of the expectation values of interest.

\begin{definition}[Symmetric probability measure on \(\R^N\)]\label{def:symmetric-measure}
	Suppose \(F^N\) is a Radon probability measure on $\R^N$.   The measure $F^N$ is called symmetric, if for any fixed permutation \(\sigma \colon [N] \to [N]\), its pushforward  by the coordinate permutation function \(\iota_\sigma \colon E^N \to E^N\) with \[\iota(x_1,\dots, x_N) = (x_{\sigma(1)},\dots, x_{\sigma(N)})\] agrees with \(F^N\):
	\begin{align}
	F^N = \iota_{\sigma,\ast}[F^N].
	\end{align}
	
	Equivalently, this can by formulated via test functions. The measure \(F^N\) is symmetric, in case for every test function \(\phi \in C_c(\R^N)\), we have
	\begin{align}
	\int_{\R^N} \phi(v)F^N(\rmd v) = \int_{\R^N} (\phi \circ \iota_{\sigma})(v)F^N(\rmd v),
	\end{align}
	for all \(\sigma \in \mathfrak{S}_N\).
	\end{definition}
	The symmetry condition encapsulates the notion that the individual constituents of the system are indistinguishable from one another. In other words, they are mutually \emph{exchangeable}. Symmetric probability measures have rich mathematical structure, as discussed e.g.\ in \cite{hewitt_symmetric_1955}.

		The assumption that the system starts from a symmetric initial state corresponds to requiring that the initial probability measures  \((F_0^N)_{N\geq 2}\), each defined on the corresponding hypersphere \(\Nsphere\), are symmetric in the sense that each measure \(\hat{F}_0^N(A) = F_0^N(A \cap \Nsphere)\) is a symmetric measure according to Definition \ref{def:symmetric-measure}. This symmetry assumption will permeate the cumulant hierarchy and result in additional symmetries in the joint cumulants. In Sec.~\ref{sec:Nonrep}, we shall explain how this becomes visible in the joint cumulants of the energy variables, and how we can quantify the chaoticity of the joint cumulants. This paves way to the precise statement of the main results of the article.

		The symmetry of the probability measure is reflected in the joint cumulants of the energy variables as an invariance with respect to permutations of the labels of the random variables. Indeed, if \(I = \{(1,i_1),\dots, (n, i_n)\}\) is a sequence of particle labels, then for every permutation \(\sigma \in \mathfrak{S}_N\), we have 
	\begin{align}
		\kappa[e(0)_I] = \kappa[e(0)_{\sigma^{-1}(I)}],	
	\end{align}
	with \(\sigma^{-1}(I) = \{(1,\sigma^{-1}(i_1)),\dots (n,\sigma^{-1}(i_n))\}\).\footnote{It is important to keep in mind that this is distinct from the usual permutation symmetry of the arguments of joint cumulants. The label permutation invariance means that, for instance, \(\kappa[e_1(0),e_2(0),e_2(0)] = \kappa[e_3(0), e_2(0), e_2(0)]\), while \(\kappa_[e_1(0),e_2(0),e_2(0)] = \kappa[e_2(0),e_1(0),e_2(0)]\) is something that is true even if \(F_0\) is not a symmetric measure.}
	
	The time-evolution in \eqref{eq:time-evolution-of-energy-cumulants} is invariant under label permutations, so this property of label permutation invariance carries over to the time-evolved cumulants \(\kappa[e(t)_I]\). Therefore, the value of \(\kappa[e(t)_I]\) is determined simply by how many \emph{different} labels there are in the sequence \(I = \{(1,i_1),\dots, (n,i_n)\}\), and how many times each such label is repeated in the sequence -- no matter what the specific label is. This motivates the following definition of \emph{partition classifiers}, which we will use to index the cumulants after the label permutation symmetry has been factored out.

	\begin{definition}[Partition classifier]
	An \(n\)-tuple \(r = (r_1,\dots, r_n) \in \N_{0}^{n}\) is called a \emph{partition classifier} (or \emph{classifier}) \emph{of order \(n\)}, in case the following two conditions are met: (i) The components of the vector sum to \(n\), i.e. \(\sum_{k=1}^{n}r_k =n\) and (ii) the components have been arranged in a decreasing order, so that \(r_1\geq r_2 \geq \dots \geq r_n \geq 0\).
	\label{def:partition-classifier}
	\end{definition}
		
	Given a partition classifier \(r\), we reserve the notation \(\len(r)\) for the number of its nonzero components. The collection of all partition classifiers of order \(n\) is denoted by \(\mathscr{C}_n\),
    and we often drop the trailing zeroes when denoting elements $r\in \mathscr{C}_n$; note that $n$ is uniquely determined by the sum of the (non-zero) elements in $r$.  The classifier corresponding to the label sequence with no repeated labels has a special role in what follows, and we will denote it by \(1_n \coloneqq (1,1,\dots, 1) \in \N_0^n\). Finally, \(\mathscr{C}'_n \coloneqq \mathscr{C}_n \setminus \{1_n\}\) denotes the set of all particle classifiers that correspond to a label sequence with at least one repeated label.
	
	\begin{remark}
	Each partition classifier of order \(n\) corresponds bijectively to a number theoretic partition of the natural number \(n\) -- hence the name. The number of different partition classifiers therefore satisfies \(\abs{\mathscr{C}_n} \sim \frac{1}{4n\sqrt{3}}\exp\left(\pi \sqrt{\frac{2n}{3}}\right)\) as \(n \to \infty\).
	\end{remark}

	Any partition classifier defines a multi-index of order \(n\), with the additional special property that the components are decreasing. We therefore need to note how to use multi-indices to label joint cumulants. Given any multi-index \(r \in \N_0^n\), we may construct one possible compatible label sequence in the following way. First, we let \(\Delta(r) = (0, r_1, r_1+r_2,\dots, \sum_{i=1}^{n}r_i)\) be a sequence of increments. After this, we define a sequence \(I_r\) by setting
	\[(I_r)_{i} = \sum_{j=1}^n j\cf{i \in (\Delta(r)_j, \Delta(r)_{j+1}]}\,, \quad i=1,2,\ldots,n\,.\]
	For example, with $r=(3,1,1,0,0)\in \mathscr{C}'_5$ we thus have $I_r=(1,1,1,2,3)$ which corresponds to the set $\{(1,1),(2,1),(3,1),(4,2),(5,3)\}$ in the notation used in the beginning of this section.
	
	With this in mind, we define cumulants corresponding to partition classifiers or any other multi-indices in the following way.
	\begin{notation}
	Let \(r \in \N_0^n\) be a multi-index. The joint cumulant corresponding to this multi-index is defined as
	\begin{align}
		\kappa[e_r] \coloneqq \kappa[e_{I_r}].
	\end{align}
	\end{notation}
	
	If \(r\) is a partition classifier of order \(n\) and \(s \in \N_0^n\) is a multi-index, we say that \(r \sim s\), in case there exists a permutation \(\sigma \in \mathfrak{S}_n\) such that \(r = \sigma(s)\) 	
	
	In the following computations, we will encounter situations where certain repetitions of a random variable are removed when computing a Wick polynomial or cumulant corresponding to a collection of random variables (determined by a sequence, multi-index or a partition classifier, dependent on the context). This motivates the following notation:
	\begin{notation}[Removing and adding multi-indices]
	 Suppose that \(r \in \N_0^n\) is a multi-index (possibly a partition classifier). If \(r_i \geq \ell\), we denote by
	\begin{align}
	r-(\ell \times i)
	\end{align}
	the multi-index whose components are given by \((r-(\ell \times i))_k = r_k\) for \(k \neq i\) and \((r-(\ell \times i))_i = r_i - \ell \). Note that even if \(r\) is a partition classifier, this is usually not a partition classifier.
	
	Similarly, given any \(r \in \N_0^n\), 
	\begin{align}
	r + (\ell \times i) 
	\end{align}
	is the multi-index obtained by increasing the \(i\)th component of \(r\) by \(\ell\) while keeping other components the same.
	\end{notation}
	
	The following notation will become relevant later, when we try to isolate the main contribution in the time-evolution of cumulants.
	\begin{definition}
	Let \(r \in \mathscr{C}_n\) be a partition classifier. We define \(\texttt{break}_\ell(r)\) to be the set that consist of those \(q \in \mathscr{C}_n\), for which there exist \(\ell_1,\ell_2 \in [n]\), \(\ell_1 \neq \ell_2\), such that \(r_\ell = q_{\ell_1} + q_{\ell_2}\); and in addition for which the sequence \(r'\) obtained from \(r\) by removing the components \(\ell,n\)  and shifting, is identical to the sequence \(q'\) obtained from \(q\) by removing the components \(\ell_1,\ell_2\) and shifting.
	
	The set \(\texttt{break}(r)\) is the union \(\cup_{\ell=1}^n \texttt{break}_\ell(r)\).
	
	\end{definition}

\begin{example}
Let \(r = (4,0,0,0)\). Now \((3,1,0,0), (2,2,0,0) \in \texttt{break}_1(r)\), whereas \((2,1,1,0) \notin \texttt{break}_1(r)\). Also, \((3,1,0,0)\notin \texttt{break}_2(r)\).
\end{example}

	\begin{definition}
	Fix \(n \in \N\). Define a relation \(\succ\) on \(\mathscr{C}_n\) by setting \(s \succ r\), if and only if \(r \in \texttt{break}(s)\). 	
	
	Then, set \(\mathscr{C}_{n,1} = \{1_n\} = \{(1,\dots, 1)\}\), and define iteratively using the above relation for any $i$ with $1<i\le n$,
\begin{align}
\mathscr{C}_{n,i}=\{s \in \mathscr{C}_n \colon \text{ there exists } r \in \mathscr{C}_{n,i-1} \text{ such that } s\succ r\}.
	\end{align}
	\label{def:partition-classifier-layering}
\end{definition}

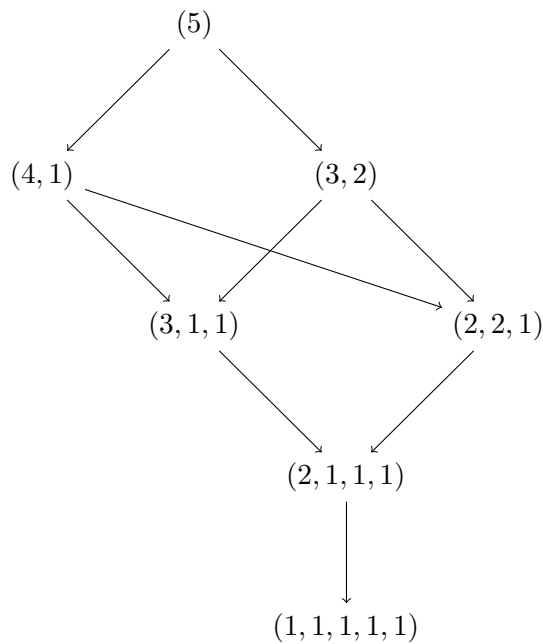
\begin{figure}[ht!]
		\centering
		\begin{tikzpicture}
			
			\node (l1e1) at (0,0) {\((1,1,1,1,1)\)};
			
			\node (l2e1) at (0, 2)  {\((2,1,1,1)\)};
			
			\node (l3e1) at (-2,4) {\((3,1,1)\)};
			
			\node (l3e2) at (2,4) {\((2,2, 1)\)};
			
			\node (l4e1) at (-4,6)  {\((4,1)\)};
			
			\node (l4e2) at (0, 6) {\((3,2)\)};
			
			\node (l5e1) at (-2, 8) {\((5)\)};
			
			\draw[<-] (l1e1) to node [swap, left] {} (l2e1);
			
			\draw[<-] (l2e1) to node [swap, left] {} (l3e1);
			
			\draw[<-] (l2e1) to node [swap, left] {} (l3e2);
			
			\draw[<-] (l3e1) to node [swap, left] {} (l4e1);
			
			\draw[<-] (l3e1) to node [swap, left] {} (l4e2);
			
			\draw[<-] (l3e2) to node [swap, left] {} (l4e2);
			
			\draw[<-] (l4e1) to node [swap, left] {} (l5e1);
			
			\draw[<-] (l4e2) to node [swap, left] {} (l5e1);
			
			\draw[<-] (l3e2) to node [swap, left] {} (l4e1);
		\end{tikzpicture}
		
		\caption{Relations between partition classifiers in \(\mathscr{C}_5\). In the graph, the nodes are in correspondence with elements in \(\mathscr{C}_5\), and there is a directed edge \(s\to r\) in the graph precisely when \(r \in \text{break}(s)\).}
		\label{fig:partition-classifier-layering}
	\end{figure}

\begin{remark}
The above relation can be extended to be a strict partial order on \(\mathscr{C}_n\), but this fact will not be used in what follows. Clearly \(\{\mathscr{C}_{n,i}\}_{i=1}^n\) partitions \(\mathscr{C}_n\), and \(s \in \mathscr{C}_{n,k}\) precisely when it is a partition classifier that has exactly \(n-(k-1)\) non-zero components. Figure \ref{fig:partition-classifier-layering} illustrates these sets and their relation to \(\mathscr{C}_{n}\), when \(n=5\)
\end{remark}

	\subsection{Main Results: controlling chaos and equilibriation via the cumulant hierarchy}
	
 	We will quantify the chaoticity of the energy cumulants by studying their norm in the following normed spaces, which we will call the \(\alpha\)-chaos spaces.
 	
 	\begin{definition}[\(\alpha\)-chaos space]
	Let \(\alpha \in [0,1]\). We define the normed space \((X_{n,N}^{\alpha}, \norm{\cdot}_{\alpha,n,N})\) to consists of vectors \((\kappa_r)_{r\in \mathscr{C}_{n}'} \subset \R^{\mathscr{C}_n'}\), together with the weighted supremum norm
	\begin{align}
		\norm{\kappa}_{\alpha} = \norm{\kappa}_{\alpha,n,N} = \sup_{r \in \mathscr{C}_n'} \abs{\kappa_r}(N-1)^{\alpha(\len(r)-1)}.
	\end{align}
	\label{def:spaces-Xa}
\end{definition}
  We will mainly use these norms with $0<\alpha<1$, but the extremal cases will appear in some of our estimates.  The norms are increasing functions of $\alpha$: For any $r\in \mathscr{C}_{n}'$
  and $0\le \alpha\le \alpha'\le 1$ we have
  $\norm{\kappa}_{\alpha,n,N}\le \norm{\kappa}_{\alpha',n,N}$.

\begin{remark} We have excluded the non-repeated label sequence \(1_n = (1,1,\dots, 1)\) from the definition of these \(\alpha\)-chaos spaces. The reason for this is that the completely non-repeated cumulants will be treated separately, and thus will show up as a source term to the cumulant hierarchy. However, the conservation law implies that if
\begin{align}
\sup_{r\in \mathscr{C}_n'} \abs{\kappa[e_r]}(N-1)^{\alpha(\len(r)-1)} \leq C,
\end{align}
then
\begin{align}
\sup_{r\in \mathscr{C}_n} \abs{\kappa[e_r]}(N-1)^{\alpha(\len(r)-1)} \leq 4C
\end{align}
whenever \(n \leq N/2\) (see the proof of Proposition \ref{prop:non-repeated-stronger-claim-for-finite-order}).

Therefore, assumptions pertaining to the joint cumulants indexed by \(\mathscr{C}_n'\) will imply similar assumptions for all joint cumulants indexed by \(\mathscr{C}_n\) with a slightly larger constant.
\end{remark}

	If the above norm of a vector \((\kappa_r)_{r\in \mathscr{C}_n'}\) is bounded by a constant \(C_n\), it follows that the individual components are bounded by
	\begin{align}
		\abs{\kappa_r} \leq \frac{C_n}{(N-1)^{\alpha(\len(r)-1)}}.
	\end{align}
	Since the elements of \(X_{n,N}^\alpha\) will be given by the \(n\)th order energy cumulants with some repetition, so that \(\kappa_r = \kappa^{n,N}[e_r]\), it follows that such uniform bounds imply
	\begin{align}
	\limsup_{N\to \infty}\abs{\kappa^{n,N}[e_r]} \leq \cf{\len(r)=1}C_n,
	\end{align}
	corresponding to the fact that only cumulants involving one particle can survive to the many-particle limit of any sequence of probability measures whose cumulants have a norm uniformly bounded in $N$.
	
	When studying the time-evolution of the energy cumulants in the Kac model, we will try to find good bounds for their \(X^{\alpha}_{n,N}\) norms. To this end, it should be noted that the operator norm of an operator \(A \colon X_{n,N}^\alpha \to X_{n,N}^{\alpha}\) satisfies
	\begin{align}
	\norm{A}_{\alpha} \leq \sup_{r} \sum_{r' \in \mathscr{C}_n'} \abs{A_{r,r'}}(N-1)^{\alpha(\len(r)-\len(r'))},
	\end{align}
	a fact that we will need in order to control the linear part of the time-evolution.

	We are now ready to state the main theorem of this article. Our results concern the generation of chaos and convergence towards the equilibrium of the energy cumulant hierarchy. The initial data is allowed to have different ``orders of chaoticity'', and this will affect the corresponding results.

	\begin{assumption}[Chaos bounds]
	Let \(c\ge 0\), \(N\ge 2\), and $n^*\ge 1$ be given. Define \(\gamma_n = c(n-1)\) for \(n \geq 1\).  We say that $B>0$ is a constant
	for the
	\emph{chaos bound} of type \(\gamma_n\) up to order $n^*$
	if  for every \(n \in [n^*]\) the joint cumulants are bounded in the \(\alpha\)-norm as
	\begin{align}
	\norm{\kappa^{n,N}_0}_\alpha \leq B^{n-1}(n-1)! N^{\gamma_n}.
	\end{align}
	A sequence of initial data \((F_0^N)_{N \geq 2}\) on \(\Nsphere\) is said to satisfy the chaos bound of type \(\gamma_n\) with a constant $B$ if the same $B$ can be used for every $N$.
	\label{assumption:chaos-bounds}
	\end{assumption}
	
	\begin{remark}
	Of course, for any  fixed $N$, $c$ and $n^*$ such a constant $B$ can be found.  However, $B$ might need be very large, in particular, it could increase as a power law in $N$.  For the results below to be useful, $B$ should be order one even if $N$ is large.  This can always be achieved for chaos bounds with a constant $c=1$  in
	the Assumption \ref{assumption:chaos-bounds} by increasing $N$ if needed, at least when considering joint energy cumulants in a symmetric state up to any finite order. Indeed, Proposition \ref{prop:generic-bounds-for-initial-data} shows that the energy joint cumulants will always satisfy a bound
	\begin{align}
	\norm{\kappa^{n,N}}_\alpha \leq 4^{n-1} (n-1)! N^{n-1}, \quad n \leq N/2 + 1\,.
	\end{align}
	\end{remark}

	\begin{theorem}[Generation of chaotic bounds]
	\label{thm:generation-of-alpha-chaotic-bounds}	
	Let \(c\ge 0\) and \(\alpha \in (0,1)\). Let \(n^* \in \N\) be a maximal order of cumulants. Then there exists a \(N_0 = N_0(n^*, \alpha, c)\ge 2\) such that the following result holds.
	
	Consider some fixed \(N\ge N_0\) and some symmetric initial data \(F_0^N\) on \(\Nsphere\). Denote the corresponding joint cumulants at order \(n \in [n^*]\) and at time \(t \geq 0\) by \(\kappa^n_t[e_r] = \kappa[e_{I_r}(t)]\). Let \(B \geq 1\) be a constant such that the initial values of the cumulants satisfy
	\begin{align}
	\max\limits_{1 \leq n \leq n^*}\max\limits_{r \in \mathscr{C}_n}\left(\abs{\kappa_0^n[e_r]}N^{\alpha (\len(r)-1)-c(n-1)}/(n-1)!\right)^{1/n} \leq B.
	\end{align}
	In other words, assume that the initial data satisfies assumption \ref{assumption:chaos-bounds}. Then there exists a constant \(C\), depending only on \(B\), such that for all
	\(n \leq n^*\), the time-evolved cumulants satisfy
		\begin{align}
		\norm{\kappa_t^{n,N}}_\alpha \leq C^{n^2} n! \left(N^{\gamma_n}\rme^{-\frac{1}{4}t} + 1\right).
		\label{ineq:generation-of-alpha-chaos-highly-chaotic-1}
	\end{align}
\end{theorem}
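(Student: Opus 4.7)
The overall strategy is induction on the cumulant order $n$, exploiting the fact that the energy cumulant hierarchy obtained from \eqref{eq:time-evolution-of-energy-cumulants} is closed: the right-hand side at order $n$ involves only products of cumulants of order at most $n$, since the Wick polynomial $\wick{e_{I\setminus J}}_t$ expands into cumulants of order $\le |I|-|J|\le n-1$, together with the explicit polynomial $\partial_\xi^J(\rme^{\xi_iP+\xi_jQ}-1)|_{\xi=0}$ coming from the collision which brings back at most $|J|$ factors of $e_i,e_j$. At each fixed order $n\leq n^*$, I would split the hierarchy into two pieces according to the partition classifier $r$: the non-repeated classifier $1_n$, which is treated by the direct analysis of Section~\ref{sec:Nonrep}, and the repeated ones in $\mathscr{C}_n'$ carrying the norm $\norm{\cdot}_\alpha$ of $X_{n,N}^\alpha$. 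The inductive hypothesis will be the target bound \eqref{ineq:generation-of-alpha-chaos-highly-chaotic-1} at every order $<n$, and I will close it at order $n$.

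For a fixed $n$, rewrite \eqref{eq:time-evolution-of-energy-cumulants} in matrix form on $X_{n,N}^\alpha$ as $\partial_t \kappa_t^{n,N} = \mathcal{L}_{n,N}\kappa_t^{n,N} + \mathcal{F}_t^{n,N}$, where $\mathcal{L}_{n,N}$ collects those contributions that preserve the partition classifier layer (the linear part of order $n$ with coefficients living within the same stratum $\mathscr{C}_{n,k}$), and $\mathcal{F}_t$ contains (i)~source terms from the non-repeated cumulant $\kappa_t[e_{1_n}]$, (ii)~source terms coming from breaking relations $r\in\texttt{break}(s)$ that relate different strata $\mathscr{C}_{n,k}\to\mathscr{C}_{n,k-1}$, and (iii)~genuinely nonlinear terms built from products of strictly lower-order cumulants via the Leibniz expansion of $\partial_\xi^J$ applied to $G_2(e;\xi,t)$. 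I would then estimate the operator norm of $\mathcal{L}_{n,N}$ using the operator-norm bound $\norm{A}_\alpha\le\sup_r\sum_{r'}\abs{A_{r,r'}}(N-1)^{\alpha(\len(r)-\len(r'))}$ already recorded in the text. The diagonal part of $\mathcal{L}_{n,N}$ comes from the $J=\{i\}$ or $J=\{j\}$ contributions and, after averaging over $\theta$, contributes a strictly negative diagonal with a uniform gap; a bookkeeping of the off-diagonal entries, which cross between strata with a favourable power of $(N-1)^{-\alpha}$, shows that $\mathcal{L}_{n,N}$ generates a semigroup with operator norm controlled by $\rme^{-\lambda t}$ for some $\lambda>1/4$ uniformly in $N$ large enough (this is where the $N\ge N_0(n^*,\alpha,c)$ hypothesis enters, via the requirement $(n-1)/(N-1)$ and similar be small).

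With the semigroup in hand, apply Duhamel: $\kappa_t^{n,N}=\rme^{t\mathcal{L}_{n,N}}\kappa_0^{n,N}+\int_0^t\rme^{(t-s)\mathcal{L}_{n,N}}\mathcal{F}_s^{n,N}\,\rmd s$. The homogeneous term is bounded by $C\norm{\kappa_0^{n,N}}_\alpha\rme^{-\lambda t}\le CB^{n-1}(n-1)!N^{\gamma_n}\rme^{-\lambda t}$ by the chaos assumption. For the forcing term, the contribution (i) is controlled by the estimate for $\kappa_t[e_{1_n}]$ from Section~\ref{sec:Nonrep}, which produces a piece proportional to $N^{\gamma_n}\rme^{-t/2}$ and a stationary contribution of order one; (ii)~is handled by an inner induction on the layer index $k$ of $\mathscr{C}_{n,k}$, since each break step lowers the stratum by one and brings a factor $(N-1)^{-\alpha}$ into the norm estimate; (iii)~is estimated by combining the inductive bounds at orders $\ell<n$ with the Leibniz sum and cancels into a prefactor of the form $C^{n^2}n!$ after using $\binom{n}{\ell}(\ell-1)!(n-\ell-1)!\le n!\cdot C^n$ and that the number of classifiers grows sub-exponentially. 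Performing the Grönwall-style convolution $\int_0^t\rme^{-\lambda(t-s)}(N^{\gamma_n}\rme^{-s/2}+1)\,\rmd s\le C(N^{\gamma_n}\rme^{-t/4}+1)$, with room to spare to accommodate $\lambda>1/4$, yields the claimed bound~\eqref{ineq:generation-of-alpha-chaos-highly-chaotic-1}.

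\textbf{Main obstacle.} The hardest part is the simultaneous propagation of three multiplicative structures through the induction: the combinatorial factor $n!$ generated by the Leibniz rule and the cumulant-to-moment expansion, the super-exponential prefactor $C^{n^2}$ needed to absorb the sums over classifiers and breakings at every nesting level, and the $N^{\gamma_n}$ weight coming from the initial data. A naive application of Duhamel at each order loses a factor $n$, so the key is to arrange the decomposition into strata $\mathscr{C}_{n,k}$ so that the operator $\mathcal{L}_{n,N}$ and the source $\mathcal{F}^{n,N}$ together admit an $N$-uniform contraction on $X_{n,N}^\alpha$ with enough gap; otherwise the convolution in Duhamel either destroys exponential decay or produces unbounded polynomial factors in~$t$. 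Verifying that the chosen $\alpha$-weighted norm turns the break-induced off-diagonal couplings into small perturbations, and that the source of nonrepeated cumulants has exactly the $N^{\gamma_n}\rme^{-t/2}$ shape predicted in Section~\ref{sec:Nonrep}, is the technical core that makes the stable-manifold picture around the spherical fixed point work quantitatively.
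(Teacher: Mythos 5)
Your plan follows essentially the same route as the paper: induction on the cumulant order, the completely non-repeated cumulant treated as a source via Proposition \ref{prop:non-repeated-stronger-claim-for-finite-order}, the repeated classifiers measured in the $\alpha$-weighted norm, a linear part with strictly negative $\theta$-averaged diagonal whose stratum-changing (break) couplings are absorbed through the weight $(N-1)^{\alpha(\len(r)-\len(r'))}$, and a Duhamel convolution to close the induction. The gap is in the nonlinear step, which is precisely where the real work lies. For the bound to close in $\norm{\cdot}_\alpha$ you must show, for each repeated classifier $r$, that every product $\prod_{A\in\pi}\kappa[e_A]$ of lower-order cumulants appearing in the source for $\kappa_t[e_r]$ carries a total weight with $\sum_{A\in\pi}(\len(s_A)-1)\ge \len(r)-1$, up to at most one unit which is compensated by the explicit $1/(N-1)$ prefactor of the fuse and exchange terms. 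This is not a question of the combinatorial prefactors $C^{n^2}n!$ that you concentrate on: a priori such a product could consist entirely of fully repeated lower-order cumulants, each of size $O(1)$, producing no factor $N^{-\alpha}$ at all, and then the claimed bound for a long classifier would fail. The paper excludes this with the coloring argument of Lemma \ref{lemma:nonlinearity-coloring}, which exploits the fact that the truncated moments-to-cumulants formula \eqref{eq:truncated-moments-to-cumulants} forbids Wick-internal clusters, so every cluster in the partition must absorb a fresh particle label; your sketch never invokes this mechanism, and without it the induction step does not close.

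A second, smaller omission: your decomposition only identifies the break-type couplings (length-increasing, favourable weight $N^{-\alpha}$) as inter-stratum terms. The fuse and exchange contributions can \emph{decrease} the classifier length, and in the operator-norm formula they then come with the unfavourable weight $(N-1)^{+\alpha}$; they are harmless only because they carry an explicit prefactor $1/(N-1)$ (only $O(L^2)$ colliding pairs among the labelled particles), giving a net factor $N^{\alpha-1}\to 0$. This compensation, checked in the paper in Lemma \ref{lemma:perturbation-operator-norm-in-alpha} and again in the estimate of the fuse/exchange part of the nonlinearity, has to be included before you can assert that the linear part generates an $N$-uniform exponentially decaying semigroup and that the forcing terms stay within the $\alpha$-norm budget.
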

 We should point out that we have not tried to optimize the bounds in $n$, the order of the cumulants.  Most likely their asymptotic increase is much less than $C^{n^2} n!$, perhaps even as good as $C^{n-1} n!$.

\begin{remark}
	Under conditions that guarantee that odd joint cumulants involving velocities of particles vanish, one could use Möbius inversion techniques to recover chaotic bounds for the joint cumulants of the \emph{velocities} of the particles. For example,  Theorem 2 in Bauer et al.\ \cite{bauer_bernoulli_2024} can be used to express \(\kappa^{2n,N}_t(v_I)\) in terms of certain energy cumulants \(\kappa_{t}^{n,N}(e_J)\), and it remains to check what kinds of energy cumulants appear on the right hand side of the identity. Some combinatorial losses seem inevitable, however.
\end{remark}

\begin{theorem}[Convergence to stationarity]
\label{thm:convergence-to-stationary-non-chaotic}
Let \(c\ge 0\) and \(\alpha \in (0,1)\). Let \(\gamma_n = c(n-1)\) and fix \(n^* \in \N\) as the maximal order of cumulants. Then there exists  \(N_0 = N_0(n^*, \alpha, c)\ge 2\) such that the following result holds.
	
	Consider some fixed \(N\ge N_0\) and some symmetric initial data \(F_0^N\) on \(S^{N-1}(\sqrt{N})\). Denote the corresponding joint cumulants at order \(n \in [n^*]\) and at time \(t \geq 0\) by \(\kappa^n_t[e_r] = \kappa[e_{I_r}(t)]\) and the stationary cumulants at order \(n\), those corresponding to the uniform probability distribution on \(S^{N-1}(\sqrt{N})\), by \(\bar{\kappa}^n[e_r] = \bar{\kappa}^n[e_{I_r}]\). Let \(B \geq 1\) be a constant such that the initial values of the cumulants satisfy
	\begin{align}
	\max\limits_{1 \leq n \leq n^*}\max\limits_{r \in \mathscr{C}_n}\left(\abs{\kappa_0^n[e_r]}N^{\alpha (\len(r)-1)-c(n-1)}/(n-1)!\right)^{1/n} \leq B.
	\end{align}
Then there exists a constant \(C\), depending only on \(B\), such that the time-evolved energy cumulants satisfy the following bound for all \(n \leq n^*\) and $t\ge 0$
\begin{align}
\norm{\kappa_t^{n,N}-\bar{\kappa}^{n,N}}_\alpha \leq n! C^{n^2}N^{\gamma_n}\rme^{-\frac{t}{4}}\,.
\end{align}
\end{theorem}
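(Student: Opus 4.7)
The plan is to exploit the same closed but nonlinear hierarchy used for Theorem \ref{thm:generation-of-alpha-chaotic-bounds}, but now expanded around its fixed point $\bar{\kappa}^{n,N}$ rather than around zero. Set $\delta_t^{n,N} := \kappa_t^{n,N} - \bar{\kappa}^{n,N}$. Starting from \eqref{eq:time-evolution-of-energy-cumulants} and the nonlinear expansion that is to be worked out in Sections \ref{sec:Nonrep} and \ref{sec:Repeated}, the hierarchy at order $n$ takes the schematic form $\partial_t \kappa_t^n = L_n \kappa_t^n + \mathcal{N}_n[\kappa_t^{\le n}] + S_n(t)$, where $L_n$ is a linear operator on $X_{n,N}^\alpha$, $\mathcal{N}_n$ is a nonlinear contribution involving cumulants of orders $\le n$, and $S_n(t)$ is a source built out of the non-repeated cumulants from Section \ref{sec:Nonrep}. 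Subtracting the stationary identity $0 = L_n \bar{\kappa}^n + \mathcal{N}_n[\bar{\kappa}^{\le n}] + \bar{S}_n$ from the dynamical one yields
\begin{align*}
\partial_t \delta_t^n = L_n \delta_t^n + \bigl(\mathcal{N}_n[\kappa_t^{\le n}] - \mathcal{N}_n[\bar{\kappa}^{\le n}]\bigr) + \bigl(S_n(t)-\bar{S}_n\bigr).
\end{align*}
The first main task is to show that $L_n$ admits a spectral bound of $-\tfrac{1}{2}$ on $X_{n,N}^\alpha$ uniformly in $N$ (for $N\ge N_0$), matching the Carlen--Carvalho--Loss gap, so that $\|e^{t L_n}\|_\alpha \le C_n e^{-t/2}$; this should follow from the explicit combinatorial structure of the linearized hierarchy near the fixed point.

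Next, I would set up an induction on $n$, the order of cumulants, using as the base case the non-repeated energy cumulants handled in Section \ref{sec:Nonrep}, which should already give $|\kappa_t^n[e_{1_n}] - \bar{\kappa}^n[e_{1_n}]| \lesssim n!\, C^{n^2} N^{\gamma_n} e^{-t/4}$ and hence the required bound for $S_n(t)-\bar{S}_n$. For the inductive step at order $n$, I would apply Duhamel's formula
\begin{align*}
\delta_t^n = e^{tL_n}\delta_0^n + \int_0^t e^{(t-s)L_n}\bigl(\mathcal{N}_n[\kappa_s^{\le n}] - \mathcal{N}_n[\bar{\kappa}^{\le n}] + S_n(s)-\bar{S}_n\bigr)\,ds,
\end{align*}
and estimate in the $\alpha$-norm. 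The initial datum $\delta_0^n$ is bounded by $B^{n-1}(n-1)!\,N^{\gamma_n}$ plus the (small in $N$) stationary values $\bar{\kappa}^{n,N}$, giving the prefactor $N^{\gamma_n}$ and the exponential decay $e^{-t/2}$ from the semigroup bound.

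The crucial and most delicate step is controlling the nonlinear difference $\mathcal{N}_n[\kappa_t^{\le n}] - \mathcal{N}_n[\bar{\kappa}^{\le n}]$. Writing it as a telescoping sum over factors, each term will be a product in which at least one factor is $\delta_t^m$ for some $m<n$, bounded by $m!\,C^{m^2}N^{\gamma_m} e^{-t/4}$ by the induction hypothesis, while the remaining factors are bounded uniformly in $t$ by the a priori estimate $\|\kappa_t^{k,N}\|_\alpha \le C^{k^2}k!\,(N^{\gamma_k}e^{-t/4}+1)$ from Theorem \ref{thm:generation-of-alpha-chaotic-bounds}. A convolution of $e^{-s/4}$ against the semigroup decay $e^{-(t-s)/2}$ then loses a factor of order $1$ and reproduces $e^{-t/4}$, while the combinatorics of the expansion must be absorbed into the prefactor $n!\,C^{n^2}$, possibly requiring enlarging the constant $C$ at each inductive step (which is fine as the total induction has only $n^*$ steps). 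The main obstacle will be ensuring that the combinatorial blow-up from the nonlinear term, together with the exponential growth of $|\mathscr{C}_n|$, does not degrade the semigroup gap strictly below $1/4$ when $N \ge N_0(n^*,\alpha,c)$; this is exactly where the threshold $N_0$ enters, forcing the spurious $N$-dependent contributions from $\mathcal{N}_n$ to be subleading compared to the $-\tfrac12$ spectral gap, leaving a net margin that allows the decay rate $1/4$ to propagate through all $n \le n^*$.
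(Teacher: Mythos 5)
Your proposal follows essentially the same route as the paper's proof: define the difference variables $q^n_t=\kappa^{n,N}_t-\bar{\kappa}^{n,N}$, subtract the stationary fixed-point identity, use the $\rme^{-(\frac12-\delta)t}$ decay of the linearized semigroup on $X^{\alpha}_{n,N}$ (Lemmas \ref{lemma:limit-semigroup-convergence-in-alpha} and \ref{lemma:perturbation-operator-norm-in-alpha}), treat the non-repeated cumulants (Proposition \ref{proposition:convergence-to-equilibrium-non-repeated-cumulants}) as a source, and close an induction on the cumulant order via Duhamel together with a telescoping of the nonlinear term in which at least one factor is a lower-order difference bounded by the induction hypothesis and the remaining factors by the a priori bounds. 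The only cosmetic difference is that the paper obtains the linear decay from the explicit structure of $M_n+R_{n,N}$ and Lemma \ref{lemma:nonlinearity-coloring} rather than from the $L^2$ spectral gap you allude to, but this is precisely the estimate those lemmas supply, so your plan is sound.
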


	We have excluded the completely non-repeated cumulants from the \(\alpha\)-chaos spaces, since we will treat them as source terms. Fortunately for us, their time-evolution solves a closed hierarchy, so we can first describe their time-evolution and then take it as known when proving results pertaining to the time-evolution of the other cumulants. The following stronger results holds for the non-repeated energy cumulants.

	\newcommand{\kappanr}[2]{\kappa_{#1}^{#2,\text{nr}}}

	\begin{proposition}[Generation of chaos for non-repeated cumulants]
	Let \(\alpha \in (0,1)\) and choose some
	\(c\ge 0\) and a maximal order of cumulants \(n^*\in \N\).  Then there is \(N_0 = N_0(n^*,\alpha,c)\ge 2\) such that the following result holds.

	Consider some fixed $N\ge N_0$ and some symmetric initial data.  Denote the corresponding non-repeated cumulants at order $n\in[N]$ and time $t\ge 0$ by
	$\kappanr{t}{n}:=\kappa[e_1(t),e_2(t),\ldots,e_n(t)]$.  Let $B\ge 1$ be a constant such that the initial values of the non-repeated cumulants satisfy
	\begin{align}\label{eq:defBnrconst}
	 \max_{1\le n \le n^*}
	 \left(|\kappanr{0}{n}|N^{(\alpha-c)(n-1)}/(n-1)!\right)^{\frac{1}{n}}\le B\,.
	\end{align}

    Then there is a constant $C\ge 1$, which depends only on $B$, such that
    the time-evolved non-repeated cumulants satisfy
	\begin{align}\label{eq:nrgengoal}
	N^{\alpha(n-1)}\abs{\kappanr{t}{n}} \leq C^n (n-1)! (\rme^{-\frac{n}{4}t}N^{c(n-1)} + 1)\,,
	\end{align}
	for all \(n \in [1,n^*]\) and $t\ge 0$.
	\label{prop:non-repeated-stronger-claim-for-finite-order}
	\end{proposition}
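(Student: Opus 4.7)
The plan is to exploit the fact that the non-repeated energy cumulants satisfy a closed hierarchy, obtained by specializing \eqref{eq:time-evolution-of-energy-cumulants} to the label sequence $I_{1_n}=(1,2,\ldots,n)$. Decomposing the double sum over colliding pairs $(i,j)$ according to whether they belong to $[n]$, one isolates three classes: pairs with both $i,j \notin [n]$ contribute nothing, since the relevant observable is then $v_i,v_j$-independent modulo a vanishing Wick term; pairs with exactly one label in $[n]$, say $i\in[n]$ and $j>n$, whose sum in $j$ collapses by label-permutation symmetry of $F_t^N$ into $(N-n)$ copies of one mean-field expression; and pairs with both $i,j\in[n]$, for which $\w{e_{I\setminus J}}$ has strictly fewer than $n$ factors whenever $\{i,j\}\subseteq J$. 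After expanding via the Leibniz rule and performing the angular integrals (using $\int\sin^2\theta\,\rmd\theta/(2\pi)=\tfrac12$ and $\int\sin\theta\cos\theta\,\rmd\theta/(2\pi)=0$), one extracts a strict diagonal contribution $-\lambda_n \kappanr{t}{n}$ with $\lambda_n \geq n/4$ for $N \ge N_0(n^*)$, while the remaining terms form a source $S_n(t)$ depending only on non-repeated cumulants of order strictly less than $n$, plus a bounded stationary piece arising from the finite-$N$ correction $\mean{v_iv_j}_t=-1/(N-1)+o(1)$ imposed by the energy constraint.

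I would then proceed by induction on $n\in[1,n^*]$, using Duhamel's formula
\begin{align*}
 \kappanr{t}{n} = \rme^{-\lambda_n t}\kappanr{0}{n} + \int_0^t \rme^{-\lambda_n(t-s)} S_n(s)\, \rmd s.
\end{align*}
The base case $n=1$ is immediate, since $\kappanr{t}{1}=\mean{e_1(t)}_t = 1$ by label symmetry and the constraint $|v(t)|^2 = N$. For the induction step, the initial-data contribution is controlled by \eqref{eq:defBnrconst}: multiplied by $N^{\alpha(n-1)}$ it is bounded by $B^n(n-1)! N^{c(n-1)}\rme^{-\lambda_n t}$, which fits the first summand on the right-hand side of \eqref{eq:nrgengoal} since $\lambda_n \ge n/4$.

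The source $S_n(s)$ is a finite sum of nonlinear products $\prod_{j=1}^\ell \kappanr{s}{k_j}$ with $\sum_j k_j \le n+O(1)$ and each $k_j < n$, weighted by combinatorial prefactors from the Leibniz rule and from the expansion of $\w{e_{I\setminus J}}$ into cumulants. The collision multiplicities ($n(N-n)$ for the mean-field class and $\binom{n}{2}$ for the both-in class) combine with the $1/(N-1)$ prefactor in \eqref{eq:time-evolution-of-energy-cumulants} to leave no positive powers of $N$ unmatched. Substituting the inductive bound $N^{\alpha(k_j-1)}|\kappanr{s}{k_j}| \le C^{k_j}(k_j-1)!(\rme^{-k_js/4}N^{c(k_j-1)}+1)$, using the factorial convolution $\sum_{k_1+\cdots+k_\ell = m}\prod_j (k_j-1)! \le (m-1)! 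K^\ell$, and evaluating $\int_0^t \rme^{-\lambda_n(t-s)}\rme^{-ks/4}\,\rmd s \le 4/(n-k)$ for $k<n$, the integrated source is of size $C^n(n-1)!N^{-\alpha(n-1)}(\rme^{-ns/4}N^{c(n-1)}+1)$ once $C$ is chosen sufficiently large in terms of $B$ and $n^*$.

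The main obstacle is the combinatorial and $N$-dependence bookkeeping inside $S_n$: one must verify that in every term produced by the Leibniz rule and the Wick-to-cumulant conversion, the total power of $N$ multiplying a product of inductive bounds is at most $N^{c(n-1)}$ (or $N^0$ for the stationary piece), while the total factorial weight is at most $(n-1)!$ up to a geometric factor in $n$. The decay rate $\lambda_n\ge n/4$ itself also requires care: the naive diagonal contribution equals $n/2$ times $(1+O(n/N))$ from the one-in-$[n]$ class, and the slack between $n/2$ and $n/4$ is what allows absorption of the both-in-$[n]$ cross terms and of the finite-$N$ corrections, uniformly for $N\ge N_0(n^*,\alpha,c)$.
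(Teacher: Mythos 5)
There is a genuine gap at the decisive step: your identification of where the uniform-in-$N$ dissipation rate $\lambda_n\ge n/4$ comes from is incorrect, and without the correct mechanism the claimed bound does not follow. In the one-in-$[n]$ class (say $i\in[n]$, $j>n$) the loss and gain terms cancel \emph{exactly}: after the angular integration one is left with $-\tfrac12\langle\,\wick{e_{1_n-(1\times i)}}\,e_i\rangle+\tfrac12\langle\,\wick{e_{1_n-(1\times i)}}\,e_j\rangle$, and both expectations equal $\kappa_t[e_1,\dots,e_n]$ — the first by the truncated moments-to-cumulants formula, the second additionally by exchangeability since $j$ is a fresh label. So this class contributes zero, not $-\tfrac{n}{2}\kappa_t[e_1,\dots,e_n]\,(1+O(n/N))$ as you assert at the end. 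The only linear terms the hierarchy actually produces carry an explicit prefactor of order $n^2/(N-1)$ and involve, besides $\kappa_t[e_1,\dots,e_n]$ itself, the \emph{once-repeated} order-$n$ cumulant $\kappa_t[e_1,e_1,e_2,\dots,e_{n-1}]$ (this arises both from $J=\{i\}$ with $j\in[n]$ and from the $J=\{i,j\}$ term; your claim that the both-in-$[n]$ class only feeds cumulants of order strictly below $n$ overlooks that the quadratic collision polynomials restore order-$n$ cumulants in the moments-to-cumulants expansion). Taken at face value, this gives a decay rate of size $O(n^2/N)$, which vanishes as $N\to\infty$ and cannot yield $\rme^{-nt/4}$ uniformly in $N$.

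The missing idea — and the heart of the paper's proof — is to use the conservation law $\sum_{i=1}^N e_i=N$ to eliminate the once-repeated cumulant exactly: since a joint cumulant containing the constant $N$ vanishes, exchangeability gives $\kappa_t[e_1,e_1,e_2,\dots,e_{n-1}]=-\tfrac{N-(n-1)}{n-1}\,\kappa_t[e_1,\dots,e_n]$. Substituting this identity converts the $O(1/N)$-weighted once-repeated term into the genuine dissipation $D_{n,N}=\tfrac n4+\tfrac{2n^2-n}{4(N-1)}\ge\tfrac n4$; the once-repeated cumulant is neither a lower-order source nor a small ``stationary piece'' (your $\langle v_iv_j\rangle_t\approx-1/(N-1)$ remark gestures at the constraint but does not supply this exact identity, and it concerns the wrong quantity). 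Your overall scaffolding — closedness of the non-repeated hierarchy, Duhamel, induction in $n$, control of the initial-data term via the hypothesis on $B$, and bounding the genuinely nonlinear source built from products of lower-order non-repeated cumulants — matches the paper and is sound in outline (the time-integrals mixing $\rme^{-ks/4}N^{c(k-1)}$ factors need the same case-splitting in $t$ the paper performs, but that is routine). As written, however, the proof of the linear decay rate fails, and repairing it requires inserting the conservation-law identity above before the Duhamel step.
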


 Although the result appears to concern propagation of chaos, it also yields a form of generation of chaos if $\alpha<c$; hence, the name given to the Proposition.
 The proof of the proposition shows that it holds for example for
 the choices $C=2B$, $N_0=1+(8n_*^2)^{\frac{1}{1-\alpha}}$.
 There is also a corresponding statement concerning the convergence to the stationary state for the non-repeated energy cumulants. This is the content of Proposition \ref{proposition:convergence-to-equilibrium-non-repeated-cumulants}, which will be formulated in the section \ref{sec:nonrepequil}.
 
 \medskip

   Having fixed the number of particles \(N\) and a symmetric initial data \(F_0^N\) we obtain the measures
	\(F_t^N\), $t\ge 0$, from the Kac process.  Suppose we first wait a time $t_0\ge 0$ and then use the first marginal of $F_{t_0}^N$ to define an initial measure $\mu_0$ for the Boltzmann--Kac equation for which $T=t-t_0\ge 0$ serves as the time parameter; we will discuss the well-posedness of this problem in Sec.\ \ref{sec:KacBEaccuracy}.
	We can then use these measures to define the symmetric product measures $\tilde{F}^N_T := \otimes_{i=1}^N \mu_T$ on $\R^N$, and ask the question: How close are the energy cumulants of $\tilde{F}^N_T$ to those of $F^N_{t_0+T}$?
 The following result shows that then up to any finite order and for all large enough systems,
 the two cumulants remain very close to each other, if the Boltzmann equation is started with initial data from a chaotic state, i.e.,
 with suitable large enough $t_0$.  As shown in the theorem, any $t_0$ for which
 $N^{c(n^*-1)}\rme^{-\frac{1}{4}t_0}\le 1$, is sufficient for cumulants of order $n\le n^*$.
    If $c=0$, we can use $t_0=0$; hence the name ``chaotic initial data'' for these.  If $c>0$, we can set $t_0= 4 c (n^*-1) \ln N$ which is $O(\ln N)$ as $N\to \infty$.

Without going into details, let us also point out that these estimates imply concrete upper bounds also
for expectation values of much larger class of observables $g$.  For example, if $g$ is any function which can be approximated by polynomials of energies of a fixed number of particles,
we would still have $\E_{F^N_{t_0 +T}}[g]\approx \E_{\tilde{F}^N_{T}}[g]$ if $N$ is large enough, with an explicit bound for the error obtained by applying moments-to-cumulants formula to the expectation of the polynomial approximation.

 \begin{theorem}[Accuracy of the ``Boltzmann--Kac'' hierarchy]
		\label{thm:kinetic-accuracy}		
Let \(c\ge 0\) and \(\alpha \in (0,1)\). Let \(n^* \in \N\) be a maximal order of cumulants.
Consider \(N\ge N_0\) and some symmetric initial data \(F_0^N\) on \(\Nsphere\), and  suppose \(N_0\ge 2\) and
\(B \geq 1\) are constants
for which Theorem
\ref{thm:generation-of-alpha-chaotic-bounds} holds.   Denote the corresponding joint cumulants at order \(n \in [n^*]\) and at time \(t \geq 0\) by \(\kappa^n_t[e_r] = \kappa[e_{I_r}(t)]\).

Pick some $t_0\ge 0$ for which $N^{c(n^*-1)}\rme^{-\frac{1}{4}t_0}\le 1$. Let $\mu_0$ be the first first marginal of $F_{t_0}^N$,
and let $\mu_T$ denote the corresponding weak solution to the Boltzmann--Kac equation. Denote the cumulants of
the product measure $\otimes_{i=1}^N \mu_T$ on $\R^N$ by $\tilde{\kappa}^{n,N}_T[e_r]$.
Then, for all \(T\ge 0\), we have
		\begin{align}
		 |\kappa^{n,N}_{t_0+T}[e_r] - \tilde{\kappa}^{n,N}_T[e_r]| \le 2 (N-1)^{-\alpha} C^{n^2} n!
 = O(N^{-\alpha}).
 		\end{align}
	\end{theorem}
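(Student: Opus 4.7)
I would split the proof into two cases according to $\len(r)$. Since the coordinates $e_i$ of the product measure $\bigotimes_{i=1}^N\mu_T$ are independent, $\tilde{\kappa}^{n,N}_T[e_r]=0$ whenever $\len(r)\ge 2$ (i.e.\ $r\in\mathscr{C}_n'$), while $\tilde{\kappa}^{n,N}_T[e_{(n,0,\ldots,0)}]$ reduces to the scalar $n$-th cumulant of $e$ under $\mu_T$. Only in this latter case does the Boltzmann--Kac equation actually enter the comparison.

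For $r\in\mathscr{C}_n'$, the bound is immediate from Theorem \ref{thm:generation-of-alpha-chaotic-bounds}. The hypothesis $N^{c(n^*-1)}e^{-t_0/4}\le 1$ gives $N^{\gamma_n}e^{-(t_0+T)/4}\le 1$ for every $T\ge 0$ and $n\le n^*$, hence
\[
|\kappa^{n,N}_{t_0+T}[e_r]|\le \|\kappa^{n,N}_{t_0+T}\|_\alpha (N-1)^{-\alpha(\len(r)-1)} \le 2C^{n^2}n!(N-1)^{-\alpha},
\]
using $\len(r)-1\ge 1$. Since $\tilde{\kappa}^{n,N}_T[e_r]=0$, this is the desired estimate.

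For the remaining single-label case $r=(n,0,\ldots,0)$, set $a_T:=\kappa^{n,N}_{t_0+T}[e_r]$ and $b_T:=\tilde{\kappa}^{n,N}_T[e_r]$, the $n$-th scalar cumulant of $e$ under $\mu_T$. The definition of $\mu_0$ as the first marginal of $F_{t_0}^N$ yields the matched initial condition $a_0=b_0$. Specializing \eqref{eq:time-evolution-of-energy-cumulants} to $I=I_r=(1,\ldots,1)$, only collision pairs containing particle $1$ survive (since $P_\theta,Q_\theta$ act only on the colliding pair), and label symmetry of $F_t^N$ collapses the factor $\frac{1}{N-1}\sum_{j\ne 1}$ to a single representative partner $j=2$. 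Expanding the resulting expectations $\langle\wick{e_1^{n-k}}_t\,\Phi_k(v_1,v_2)\rangle_t$ by the moments-to-cumulants formula then splits each term into a closed part depending only on the single-particle cumulants $\kappa^m[e_1(t)]$ with $m\le n$, plus a remainder built solely from genuine joint cumulants of length $\ge 2$. The closed part coincides term-by-term with the evolution equation for the $n$-th cumulant of $e$ under the Boltzmann--Kac equation \eqref{eq:boltzmann-kac} satisfied by $b_T$. Subtracting gives the identity
\[
\dv{T}(a_T-b_T)=L(a-b)_T+R_T,
\]
where $L$ is the linearization on finite-order single-particle cumulants of the closed Boltzmann--Kac cumulant hierarchy around $b_T$, and by the first step the source satisfies $|R_T|\le C'N^{-\alpha}$ uniformly in $T\ge 0$.

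The main obstacle is converting this identity into a \emph{uniform}-in-$T$ bound, since naive Gr\"onwall would produce factors growing with $T$. My strategy is to exploit the exponential stability of the closed Boltzmann--Kac single-particle cumulant hierarchy around its unique fixed point, namely the cumulants of the unit-variance Gaussian to which $\mu_T$ converges by conservation of mass and energy. This stability is the $N\to\infty$ analogue of the stable-manifold analysis developed in Sections \ref{sec:Nonrep}--\ref{sec:Repeated} and, using the same family of norms, furnishes a contraction rate $\lambda>0$ for $L$ on cumulants of order $\le n^*$. A Duhamel representation then yields $|a_T-b_T|\le\int_0^T e^{-\lambda(T-s)}|R_s|\,ds=O(N^{-\alpha})$ uniformly in $T\ge 0$, which combined with the first step produces the claimed unified bound $2(N-1)^{-\alpha}C^{n^2}n!$.
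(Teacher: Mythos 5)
Your Step 1 (vanishing of $\tilde{\kappa}^{n,N}_T[e_r]$ for $\len(r)\ge 2$ plus the generation-of-chaos bound with $N^{\gamma_n}\rme^{-(t_0+T)/4}\le 1$) is exactly the paper's argument, and your setup for the fully repeated case --- matched initial data $a_0=b_0$, the observation that the Kac rate function for $s=(n,0,\dots,0)$ reduces to $2\sum_\ell\binom{n}{\ell}\int\frac{\rmd\theta}{2\pi}\E[P_\theta(v_1,v_2)^\ell\wick{e_1^{n-\ell}}]$ and that the non-factorized part contributes only joint cumulants of length $\ge 2$ of size $O(N^{-\alpha})$ --- also matches the paper. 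The gap is in how you close the estimate uniformly in $T$. You invoke ``exponential stability of the closed Boltzmann--Kac single-particle cumulant hierarchy around its Gaussian fixed point'' to get a contraction rate for your linearization $L$. That stability statement is proved nowhere (neither in the paper nor in your sketch), and even if it were, it would not directly apply: your $L$ is the linearization around the \emph{time-dependent} solution $b_T$, which may start far from equilibrium, so decay of the semigroup at the fixed point does not by itself control $\rme^{\int L}$ along the trajectory without additional arguments (e.g.\ controlling $\|b_T-\bar b\|$ and the resulting time-dependent perturbation).

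What actually closes the argument --- and what the paper does --- is much more rigid and requires no stability of the Boltzmann--Kac flow at all: expand the difference of expectations via moments-to-cumulants, telescope the products over length-one clusters, and observe that $\delta_{n,N}(T)$ itself appears in its own equation \emph{only} through the single-cluster terms with $(a=\ell>0)$ or $(a=0,\ell=n)$, whose total coefficient is the explicit constant $2\,(2I_{n,0}-1)\le -\tfrac12$ for $n\ge 2$, independent of $b_T$ (this uses $\sum_{\ell=1}^n\binom{n}{\ell}(-1)^\ell I_{\ell,0}=I_{n,0}-1$). All couplings to lower-order differences $\delta_{m,N}$, $m<n$, multiply bounded cumulants and are absorbed into the source by induction on $n$ (base case $n=1$ by energy conservation), so the scalar Duhamel formula with this explicit negative rate gives the uniform $O((N-1)^{-\alpha})$ bound. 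You would need to supply this explicit coefficient computation in place of the fixed-point stability appeal. A secondary omission: you use the cumulant evolution equation for $\mu_T$ without justifying that $\mu_T$ has (uniformly bounded) exponential energy moments, which the paper establishes separately (Lemma \ref{th:Kacexpmomlemma}) precisely so that the hierarchy \eqref{eq:Ftildecumulanthierarchy} is legitimate.
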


\section{Properties of initial data}
	\label{sec:initdata}
	
	In this section, we discuss the properties of the possible initial data for the Kac model: Chaotically bounded initial data with all moments, chaotically bounded initial data with diverging moments, non-chaotic initial data, and deterministic initial data.
	
	\subsection{Kac's chaotic initial data}
	\label{sec:intialKac}

	Suppose $f:\R\to [0,\infty)$ is a
	continuous function. Consider the standard entropy function $H:[0,\infty) \to \R$ defined by
	$H(0)=0$ and $H(r) := r \ln r$, for $r>0$.
  Since $H''(r)=\frac{1}{r}>0$, $H$ is a convex function and we may apply Jensen's inequality
  \[
   \int_\R\mu(\rmd y) H(f(y)) \ge
   H\!\left(\int_\R\mu(\rmd y) f(y)\right)
  \]
  to any Borel probability measure $\mu$ on $\R$.
  In particular, if $\int_\R\mu(\rmd y) f(y)=1$, this implies \[\int_\R\mu(\rmd y) f(y) \ln f(y) \ge 0.\]

  Consider then a Kac probability measure generated by the above function $f$ for some fixed $N\ge 2$.  More precisely, define
  $g(v) := \prod_{j=1}^N f(v_j)$ which is a continuous non-negative function on $\R^N$ and hence bounded on the compact set $\Nsphere$. We add an assumption that $f$ and $N$ are such that the restriction of $g$ to the sphere is non-zero; this is true for instance if $f(1)>0$.
  Then $$Z_N := \int_{\Nsphere} \nu_N(\rmd v) g(v) >0,$$ and we can consider initial data given by the probability measure
  $\mu(\rmd v) := Z_N^{-1} g(v) \nu_N(\rmd v) $ on $\Nsphere$.

  Denote $\tilde{f}^N(v) := f(v) Z_N^{-1/N}$
  and $f_0^N(v) := \prod_{j=1}^N \tilde{f}^N(v_j)$
  using which $\mu(\rmd v) =f_0^N(v)  \nu_N(\rmd v) $.  Then, by Jensen's inequality,
  \begin{align*}
   &
 \norm{f_0^N}_2^2 = \int_{\Nsphere} \nu_N(\rmd v) \prod_{j=1}^N (\tilde{f}^N(v_j))^2
   = \lim_{\vep\to 0^+}
   \int_{S^{N-1}} \mu(\rmd v)
   \rme^{\sum_{j=1}^N \ln(\vep + \tilde{f}^N(v_j))}
\\ & \quad
\ge
   \lim_{\vep\to 0^+}
   \exp\left(\sum_{j=1}^N \int_{\Nsphere} \mu(\rmd v) \ln(\vep + \tilde{f}^N(v_j))\right) \\
   & \quad =  \prod_{j=1}^N  \exp\left(\int_{\Nsphere} \mu(\rmd v) \ln(\tilde{f}^N(v_j))\right)=\rme^{N \alpha}\,,
   \end{align*}
where we have used the symmetry of $\mu$ to conclude that the constant $\alpha$ does not depend on $j$,
\[
\alpha:= \int_{\Nsphere} \mu(\rmd v) \ln(\tilde{f}^N(v_1))
= \int_{\Nsphere} \nu_N(\rmd v) \tilde{f}^N(v_1)\ln(\tilde{f}^N(v_1)) \prod_{j=2}^N \tilde{f}^N(v_j)\,.
\]

We can then normalize the remaining measure into a probability measure by using \[c'_N:=\int_{\Nsphere} \nu_N(\rmd v)  \prod_{j=2}^N \tilde{f}^N(v_j)>0.\]  Then $\alpha = c'_N \mean{H(\tilde{f}^N(v_1))}$.  Since $\mean{\tilde{f}^N(v_1)}= 1/c'_N$ we can also consider the function $h := c'_N\tilde{f}^N$ and conclude that $\alpha = \mean{H(h(v_1))} +\ln (1/c'_N)$ with $\mean{h}=1$, which implies that
$\alpha\ge \ln (1/c'_N)$. For large $N$, we would expect $c'_N\approx 1$ and thus the constant term should converge to zero. A more careful check using the known saturation conditions for the Jensen's inequality shows that indeed $\mean{H(h(v_1))}=0$ only if $h$, and hence $f$, is a constant function.
Thus for non-constant $f$, we would expect to find $C>1$, independent of $N$, such that $\norm{f_0^N}_2 \ge C^N$ for all large enough $N$.

Despite the large $L^2$-norm, fixed order $k$ marginals of this sequence of measures converge to
a probability measure on $\R^k$ whose density is,  up to a normalization, equal to $ \prod_{j=1}^N  f(v_j)$.  Thus all non-trivial joint cumulants
of $v$, and of $e$, converge to zero as $N\to \infty$.
In particular, the finite order cumulants are uniformly bounded for all large enough $N$.
Hence, they are a lot smaller than in the generic case considered in the next section, and thus already quite ``chaotic'', although we have not tried to estimate the $\alpha$-norm of these cumulants to quantify this yet.

 	\subsection{Cumulants of generic symmetric measures on \(\Nsphere\)}

		\begin{proposition}
		\label{prop:generic-bounds-for-initial-data}
		Let \(F^N\) be a symmetric measure on \(\Nsphere\) with $N\ge 2$.
		The joint energy cumulants of order \(n \leq \frac{N}{2}+1\) satisfy a bound
		\begin{align}\label{eq:gensymmkappabound}
		|\kappa[e_J]| \leq 4^{n-1} (n-1)! N^{n-\len(J)}\,,\qquad|J|=n\,.
		\end{align}
		In particular, for any \(n \leq \frac{N}{2}+1\), $J\in \mathscr{C}'_{n}$ and
		$\alpha \in [0,1]$,
		\begin{align}
		\norm{\kappa[e_J]}_\alpha \leq 4^{n-1} (n-1)! N^{n-1}\,.
		\end{align}
			\end{proposition}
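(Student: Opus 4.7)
The plan is to combine the standard moment-to-cumulants formula with a deterministic upper bound on mixed energy moments that exploits the spherical constraint $\sum_{i=1}^N e_i = N$. First I would write
\[
\kappa[e_J] = \sum_{\pi \in \partition{[n]}} (-1)^{|\pi|-1}(|\pi|-1)!\, \prod_{B \in \pi} \E\!\left[\prod_{i \in B} e_{I_i}\right],
\]
take absolute values via the triangle inequality, and bound each block-moment separately. The goal is then to split the desired $N^{n-\len(J)}$ growth among the blocks and absorb the combinatorial weight $(|\pi|-1)!$ into the final constant $4^{n-1}(n-1)!$.

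The technical core is the following moment lemma: for distinct labels $j_1,\dots,j_k$ with multiplicities $r_1,\dots,r_k$ summing to $m$,
\[
\E\!\left[\prod_{i=1}^k e_{j_i}^{r_i}\right] \le 2^{k-1}\,N^{m-k} \qquad \text{whenever } k \le N/2+1.
\]
I would prove this starting from the identity $N\,\E[X] = \sum_{j=1}^N \E[e_j X]$, which is just a restatement of the conservation law applied to any non-negative observable $X$. Because every summand on the right is non-negative, this single identity yields both
\[
\E[e_j X] \le N\,\E[X] \qquad \text{and} \qquad \E[e_{k+1} X] \le \tfrac{N}{N-k}\,\E[X];
\]
the former (for $j$ already appearing in $X$) bounds the cost of bumping the multiplicity of an existing label, while the latter, obtained after using the symmetry of $F^N$ to collapse the $N-k$ equal fresh-label terms, bounds the cost of introducing a new label. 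Starting from $\E[1]=1$, I would build up $\prod_i e_{j_i}^{r_i}$ in two phases: first introduce the $k$ distinct labels one at a time, picking up a total factor $\prod_{\ell=1}^{k-1}\tfrac{N}{N-\ell} \le 2^{k-1}$ under the constraint $k \le N/2+1$; then perform the $m-k$ multiplicity bumps at a cost of at most $N$ each.

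Substituting this block bound into the moment-cumulant expansion gives, for each partition $\pi$, a contribution bounded by $\prod_{B \in \pi} 2^{\ell_B - 1} N^{|B|-\ell_B}$, where $\ell_B$ is the number of distinct labels appearing in block $B$. Because each of the $\ell := \len(J)$ distinct labels of $J$ must appear in at least one block, $\sum_B \ell_B \ge \ell$, so the total $N$-exponent is at most $n - \ell$; the total $2$-exponent telescopes to $\sum_B \ell_B - |\pi| \le n - |\pi|$. Summing over partitions reduces the claim to the combinatorial inequality
\[
\sum_{k=1}^n S(n,k)(k-1)!\, 2^{n-k} \le 4^{n-1}(n-1)!,
\]
where $S(n,k)$ are Stirling numbers of the second kind. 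This last step is where I expect the main obstacle — tracking the constants tightly — to sit. I would handle it by using $S(n,k)(k-1)! \le k^{n-1}$ (which follows since $k!\,S(n,k)$ counts surjections $[n]\to[k]$ and is thus at most $k^n$), then estimating $\sum_{k=1}^n k^{n-1} 2^{n-k}$ via its dominant term near $k\approx (n-1)/\log 2$ and Stirling's approximation, while checking the small-$n$ cases directly. The $\alpha$-norm bound stated in the proposition then follows immediately, since $(N-1)^{\alpha(\len(J)-1)}\le N^{\len(J)-1}$ and $N^{n-\len(J)}\cdot N^{\len(J)-1} = N^{n-1}$.
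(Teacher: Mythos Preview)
Your approach is correct and the moment lemma you derive is exactly the one the paper proves (with the same two-phase argument using the conservation law). The difference lies in how the cumulant bound is extracted from it. The paper does \emph{not} plug the moment bound into the full moments-to-cumulants expansion; instead it uses the one-step recursion
\[
\E[e^J]=\sum_{(1,J_1)\in I\subseteq J}\E[e^{J\setminus I}]\,\kappa[e_I],
\]
isolates the top term $\kappa[e_J]$, and sets up an induction on rescaled quantities $B_n=\max_{m\le n,\,k}\bigl(|\kappa[e_{J}]|\,N^{1-m}(N/2)^{k-1}/(m-1)!\bigr)$, showing $B_n\le 2^{n-1}$ via a short recursive inequality. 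This sidesteps any Stirling-number combinatorics entirely.

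Your route is more direct conceptually but front-loads the work into the final inequality $\sum_{k}S(n,k)(k-1)!\,2^{n-k}\le 4^{n-1}(n-1)!$. Your plan for that step is sound: the generating function $\sum_n c_n x^n/n!=\log 2-\log(3-e^{2x})$ has its first singularity at $x=\tfrac12\log 3>1/4$, so $c_n/\bigl(4^{n-1}(n-1)!\bigr)\to 0$, and small $n$ can be checked by hand (you may find it cleaner to quote the Fubini-number bound $a_m\le m!/\bigl(2(\log 2)^{m+1}\bigr)$ rather than argue via a dominant term and Stirling). Either way the constant $4^{n-1}$ comes out with room to spare. The $\alpha$-norm consequence at the end is fine.
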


		\begin{proof}
		Denote by \(\E[\phi]\) the expectation of \(\phi\) with respect to the measure \(F^N\).
				
		If $n=1$, the first order cumulant equals expectation value which by symmetry is equal to
		$\E[e_1]= \frac{1}{N}\sum_{i=1}^N \E[e_i]=1$, since $\sum_{i=1}^Ne_i=N$ almost surely.  Hence, the stated bound holds for $n=1$.

	We next establish an estimate
	\begin{align}
	0\le \E[e_1e_2\cdots e_n] \leq 2^{n-1} \,, \quad 1 \leq n \leq \frac{N}{2}+1\,.
	\end{align}
	The estimate clearly holds for $n=1$, and we claim that for other $n$ it satisfies
	\begin{align}\label{eq:nonrepeest}
	0\le \E[e_1e_2\cdots e_n] \leq 2 \E[e_1e_2\cdots e_{n-1}]\,.
	\end{align}
	Combining these results proves the claim by induction.

	To prove (\ref{eq:nonrepeest}), we first note that
	\begin{align}
	& N \E[e_1\cdots e_{n-1}] = \sum_{i=1}^N \E[e_1e_2\cdots e_{n-1}e_i] = (N-(n-1))\E[e_1 \cdots e_{n}] + (n-1)\E[e_1^2\cdots e_{n-1}]
	\nonumber \\ & \quad
	\ge (N-(n-1))\E[e_1 \cdots e_{n}]
	\end{align}
	since the measure is symmetric and we have
	$e_i\ge 0$ and
	$\sum_{i=1}^Ne_i=N$ almost surely.
	Therefore,
	\begin{align}
	\E[e_1\cdots e_n] \le 2 \E[e_1\cdots e_{n-1}]\,,
	\end{align}
	using the assumption \(n \leq \frac{N}{2}+1\).
	Hence, (\ref{eq:nonrepeest}) holds.

	Let us then assume $2\le n\le \frac{N}{2}+1$.
	By symmetry, we only need to prove the bound
	(\ref{eq:gensymmkappabound}) for $J=J_r$ with
	$r\in \mathscr{C}_n$.  Denote $k:=\len(J)$ for which $1\le k\le n$, $\sum_{\ell=1}^k r_\ell=n$, and $r_1\ge r_2\ge \cdots \ge r_k\ge 1$. Then, for each $\ell\in [k]$ we may use the trivial energy bound $e_\ell \le N$, which holds almost surely, and the estimate in (\ref{eq:nonrepeest}) to conclude that
	\begin{align}
	0\le \E[e^J] \leq N^{\sum_{\ell=1}^{k}(r_\ell-1)}\E[e_1\cdots e_k] \leq N^{n-k}2^{k-1}\,.
	\end{align}

	We next use the general iteration relation satisfied by cumulants\footnote{The formula follows straightforwardly by differentiating the generating function or from the full moments to cumulants formula, see Appendix \ref{sec:cumulantsandWick}.}:
	\[\E[e^J] = \sum_{(1,J_1) \in I \subseteq J}\E[e^{I\setminus J}] \kappa[e_J].\]  Combined with the already established energy moment bounds this yields an estimate
	\begin{align}
	\abs{\kappa[e_J]} & \leq \abs{\E[e^J]} + \sum_{(1,J_1) \in I \subsetneq J}\abs{\E[e^{J\setminus I}]}\abs{\kappa[e_I]}
	\nonumber \\
	&\leq N^{n-k}2^{k-1}
	+ \sum_{(1,J_1) \in I \subsetneq J}\abs{\kappa[e_I]}
	N^{n'-k'}2^{k'-1}|_{n'=|J\setminus I|,\, k'=\len(J\setminus I)}
\,.
	\end{align}
  In this sum, $1\le |I|,n'\le n-1$ and $1\le k' \le k,n'$.  In addition, $n'+|I|=n$ and $k'+\len(I)\ge k$.

  Let us then denote $A_{m,k} := \max_{I:|I|= m, \len(I)=k} |\kappa[e_I]| N^{1-m}$ for $1\le m\le \frac{N}{2}+1$ and $1\le k\le m$.  This yields a sequence which, by the above bound, satisfies
\begin{align}
	A_{n,k} & \leq (2/N)^{k-1}
	+ \sum_{(1,J_1) \in I \subsetneq J}
	A_{n-n',len(I)}
	N^{-k'} 2^{k'-1}|_{n'=|J\setminus I|,\, k'=\len(J\setminus I)}
\,.
	\end{align}
Here, $n'=|J\setminus I|$ satisfies $1\le n'\le n-1$ and for each $n'$ there are at most $\binom{n-1}{n'}$ terms with $|J\setminus I|=n'$ in the sum.
Therefore, $B_n := \max_{1\le m\le n,\,1\le k\le m} \left(A_{m,k} \frac{(N/2)^{k-1} }{ (m-1)! }\right)$, satisfy an inequality
\begin{align}
	B_{n} & \leq \frac{1}{(n-1)!}
	+ B_{n-1} \frac{1}{2}\sum_{n'=1}^{n-1} \binom{n-1}{n'} \frac{(n-n'-1)!}{(n-1)!}
\,,
	\end{align}
for all $2\le n \le  \frac{N}{2}+1$.
Since $B_{1}=A_{1,1} =1$ we obtain
$B_2 \le 1 + \frac{1}{2} < 2$.
For $3\le n \le  \frac{N}{2}+1$, it follows that
\begin{align}
	B_{n} & \leq \frac{1}{2} \left( 1
	+ B_{n-1} (\rme -1 )\right)
\,.
	\end{align}
Therefore, by induction, we can conclude that
\[
 B_n \le 2^{n-1} \,, \qquad 1\le n \le   \frac{N}{2}+1\,.
\]
Hence, for all such $n$ and whenever $|J|=n$ and $\len(J)=k$, we have
\[
 |\kappa[e_J]| \le N^{n-1} A_{n,k} \le
 N^{n-k} 2^{k-1} (n-1)! B_{n}\le
 N^{n-k} 2^{n+k-2} (n-1)! \,.
\]
Since $k\le n$, (\ref{eq:gensymmkappabound}) also holds.

The second statement is an obvious corollary for
$\alpha=1$ and thus it holds for all $\alpha\in[0,1]$ since the norm is increasing in $\alpha$.
		\end{proof}
	
	\subsection{Extreme case: Symmetrized deterministic initial data}
	
	In this subsection, we construct an example of initial data on \(\Nsphere\) which is symmetric but highly non-chaotic, in the sense that it saturates the power of $N$ in the generic bound in  (\ref{eq:gensymmkappabound}).

 Pick \(\bar{v} \in \Nsphere\). Define a measure
	\begin{align}
		\mu \coloneqq \frac{1}{N!}\sum_{\sigma \in \mathfrak{S}_N} f_{\sigma,\ast}[\delta_{\bar{v}}],
	\end{align}
	on \(\Nsphere\) with Borel \(\sigma\)-algebra. The measure \(f_{\sigma, \ast}[\delta_{\bar{v}}]\) is obtained from the Dirac measure at \(\bar{v}\) by pushing it forward by a component permutation map \(f_{\sigma} \colon \Nsphere \to \Nsphere\):
	\begin{align}
	f_{\sigma,\ast}[\delta_{\bar{v}}](B) = \delta_{\bar{v}}(f_\sigma^{-1}(B))
	\end{align}
	with
	\begin{align}
		f_{\sigma}^{-1}(v_1,\dots, v_N) = (v_{\sigma^{-1}(i)})_{i=1}^N.
	\end{align}
	Then for any continuous function
	$\phi\in C(\Nsphere) = C_c(\Nsphere)$, we have
	\begin{align}\label{eq:symmdirac}
	 \int_{\Nsphere} \mu (\rmd  v) \phi(v)
	 = \frac{1}{N!}\sum_{\sigma \in \mathfrak{S}_N}
	 \phi(\bar{v}_{\sigma(i)})_{i=1}^N\,,
	\end{align}
    which also could serve as a definition of the measure $\mu$, by the Riesz--Markov--Kakutani representation theorem.
    It is clearly a symmetric measure.

   The expectation of energy with respect to this measure are fixed by symmetry to be $1$.
   Since the energy observable is continuous, we can also easily compute all moments from
   (\ref{eq:symmdirac}).
   For example, the non-repeating
   second moments are given by
	\begin{align}
	 \E_\mu(e_1e_2)
	&= \int_{\Nsphere} e_1(v)e_2(v) \mu(\rmd v)
	= \frac{1}{N!}\sum_{\sigma \in \mathfrak{S}_N} \bar{v}_{\sigma(1)}^2\bar{v}_{\sigma(2)}^2
	= \frac{1}{N(N-1)}\sum_{i,j=1; j\ne i}^N \bar{v}_{i}^2\bar{v}_{j}^2\,,
	\end{align}
   and the second cumulant, which is equal to covariance, is
   \[
    \kappa_\mu[e_1,e_2] =
    \frac{1}{N(N-1)}\sum_{i,j=1; j\ne i}^N \bar{v}_{i}^2\bar{v}_{j}^2- 1
    = \frac{1}{N}\sum_{i=1}^N \bar{v}_{i}^2
    \frac{1-\bar{v}_{i}^2}{N-1}
    \,.
   \]
   Similarly, we can compute the variance to be
   \[
    \kappa_\mu[e_1,e_1] = \frac{1}{N!}\sum_{\sigma \in \mathfrak{S}_N} \bar{v}_{\sigma(1)}^4 -1
    = \frac{1}{N} \sum_{i=1}^N \bar{v}_{i}^2
    \left(\bar{v}_{i}^2 -1\right)\,.
   \]

Let us consider a simple example of extreme initial data starting from a case where all the energy lies in the first particle: set \(\bar{v} = (\sqrt{N},0,0,\dots, 0)\), and consider the corresponding symmetrized measure $\mu$.
By the above formula, its covariance is given by $\kappa_\mu[e_1,e_2] = -1$ and variance by $\kappa_\mu[e_1,e_1]=N-1$. Thus the upper bound in  (\ref{eq:gensymmkappabound}) is saturated for large $N$ and $n=2$, apart from the prefactor $4^{n-1}$.

The moment generating function for energy is explicitly given by
\[
 \E_\mu[\rme^{e\cdot \xi}] = \frac{1}{N} \sum_{i=1}^N \rme^{N \xi_i}
 = 1+ \frac{1}{N} \sum_{i=1}^N \left( \rme^{N \xi_i}-1\right) = 1+ X(\xi)\,,
\]
and the cumulant generating function is thus
\[
 g_c(\xi) = \ln \E_\mu[\rme^{e\cdot \xi}]
 = \sum_{m=1}^\infty \frac{(-1)^{m-1}}{m}
 X(\xi)^m\,.
\]
Here $X(0)=0$, and we obtain a formula for cumulants with $J=J_r$, $r\in \mathscr{C}_n$,
\[
 \kappa[e_J] = \sum_{m=1}^n
 \frac{(-1)^{m-1}}{m} \sum_{I}
 \prod_{\ell=1}^m \partial^{I_\ell}_\xi X(0)\,.
\]
The sum $\sum_I$ here denotes a sum over all partitions of the sequence $J$ into a sequence
containing $m$ subsequences, $I=(I_\ell)_{\ell=1,\ldots,m}$.  If $\len(I_\ell) > 1$, then $ \partial^{I_\ell}_\xi X(0)=0$, and if $\len(I_\ell) = 1$ and $p_\ell=|I_\ell|$, then
$ \partial^{I_\ell}_\xi X(0)=N^{p_\ell-1}$.
Denoting $k=\len(J)$, we may thus conclude that the non-zero terms in this sum must have $m\ge k$, and then   \[
 \prod_{\ell=1}^m \partial^{I_\ell}_\xi X(0)
 = \prod_{\ell=1}^m N^{p_\ell-1} = N^{n-m}\,.
\]
Hence, the dominant power of $N$ occurs at $m=k$, in which case there are $k!$ choices for $I$ which yield a non-zero contribution.  We find that
\[
  \kappa[e_J] = (-1)^{k-1} (k-1)! N^{n-k}+ O(N^{n-k-1})\,,
\]
which shows that the power law of $N$ in the bound in  (\ref{eq:gensymmkappabound}) is saturated for large $N$ also for any finite $n>2$.

\section{Non-repeated cumulants}\label{sec:Nonrep}
	In this and the following section, the notation \(\mean{\cdot}\) stands for expectation taken with respect to \(F_t^N\). Likewise the \(\wick{e_I}\) denotes the Wick polynomial with respect to \(F_t^N\).
		
	\subsection{Generation of chaos for non-repeated energy cumulants}	
		
	\begin{proofof}{Proposition \ref{prop:non-repeated-stronger-claim-for-finite-order}:}
	
    Let us begin with a brief outline of the  proof.  Using the conservation law for the \(e_i\)-variables, the time-evolution of \(\kappanr{t}{n} = \kappa_t[e_1,\dots,e_n]\) can be shown to satisfy a closed hierarchy, where the nonlinear term in the time-evolution of \(\kappanr{t}{n}\) involves cumulants of strictly lower order, and can therefore be treated as a source term.
	Following this idea, we make an ansatz for the upper bound of the norm of the time-evolved cumulant on the basis of the behavior of the first few cumulants of this form (orders \(n=1, 2\)). Using the Duhamel formula on the hierarchy, we can then show that this upper bound propagates.
	
	First, we thus recall that the first cumulant is equal to the first moment which is fixed by the energy conservation law: we always have $\kappa_t[e_{1_1}]= \mean{e_1}=1$. 

	Thus we only need to consider values $n\ge 2$.
	We begin from the formula (\ref{eq:time-evolution-of-energy-cumulants}).  We first aim at simplifying its right hand side, using the fact that the cumulant is non-repeated.
	Since $t$ is fixed, we do not denote dependence on it in the computations below.
	Since the remaining derivative over indices determined by a subsequence  $J$ of $1_n$
	is zero if $J$ contains any other index apart from $i$ and $j$, there are only three choices of $J$ which yield a non-zero term.  Explicitly, we find that
	\begin{align}
	\dv{t} \kappa_t[e_{1_n}] &= \frac{1}{N-1}\sum_{i,j=1}^N \cf{i\neq j} \sum_{k=1}^n \cf{k=i} \int_{-\pi}^{\pi} \frac{\rmd \theta}{2\pi}\mean{\wick{e_{1_n-(1\times k)}}P_\theta(v_i,v_j)} \nonumber \\
	&+ \frac{1}{N-1}\sum_{i,j=1}^N \cf{i\neq j} \sum_{k=1}^n \cf{k=j}\int_{-\pi}^{\pi} \frac{\rmd \theta}{2\pi} \mean{\wick{e_{1_n-(1\times k)}}Q_\theta(v_j,v_i)} \nonumber \\
	&+ \frac{1}{N-1}\sum_{i,j=1}^n \cf{i\neq j} \int_{-\pi}^{\pi} \frac{\rmd \theta}{2\pi} \mean{\wick{e_{1_n-((1\times i) + (1\times j))}}P_\theta(v_i,v_j)Q_\theta(v_j,v_i)}
	\label{eq:time-evolution-non-repeated-cumulants-wick-form}	
	\end{align}
	where the collision polynomials \(P_\theta(v_i,v_j)\) and \(Q_{\theta}(v_j,v_i)\) are
	\begin{align}
	P_\theta(v_i,v_j) &= -\sin(\theta)^2 v_i^2 +2\cos(\theta)\sin(\theta)v_iv_j + \sin(\theta)^2 v_j^2\nonumber \\
	&=-\sin(\theta)^2 e_i + 2\cos(\theta)\sin(\theta) v_iv_j + \sin(\theta)^2 e_j
	\end{align}
	and
	\begin{align}
	Q_{\theta}(v_j,v_i)= - \sin(\theta)^2 e_j - 2\cos(\theta)\sin(\theta) v_iv_j + \sin(\theta)^2 e_i = -P_\theta(v_i,v_j).
	\end{align}

	In the integral over the random angle $\theta$, any odd function of $\theta$ evaluates to zero.  In particular, this is the case for the trigonometric term \(2\cos(\theta)\sin(\theta)\),
	and thus the first term in the sum may be simplified to
	\begin{align}
	&\frac{1}{N-1}\sum_{i,j=1}^N \cf{i\neq j} \sum_{k=1}^n \cf{k=i} \int_{-\pi}^{\pi} \frac{\rmd \theta}{2\pi}\mean{\wick{e_{1_n-(1\times k)}}P_\theta(v_i,v_j)} \nonumber \\
	&= \frac{1}{N-1}\sum_{i=1}^n \sum_{j=1}^N \cf{i\neq j} \int_{-\pi}^{\pi} \frac{\rmd \theta}{2\pi}\mean{\wick{e_{1_n-(1\times i)}}P_\theta(v_i,v_j)} \nonumber \\
	&= \frac{1}{N-1}\sum_{i=1}^n \sum_{j=1}^N \cf{i\neq j} \int_{-\pi}^{\pi} \frac{\rmd \theta}{2\pi} \left((-\sin(\theta)^2)\mean{\wick{e_{1_n-(1\times i)}}e_i} + \sin(\theta)^2\mean{\wick{e_{1_n-(1\times i)}}e_j}\right)\,.
	\label{eq:non-repeated-one-particle-1}
	\end{align}
	In the remaining sum, we
	 distinguish between two different cases.
	If $n<j\le N$, we can use the symmetry to conclude that $\mean{\wick{e_{1_n-(1\times i)}}e_j}
	= \kappa_t[e_{1_n}]$ since we can swap the labels for $i$ and $j$.  Since also
	$\mean{\wick{e_{1_n-(1\times i)}}e_i}=\kappa_t[e_{1_n}]$, these values of $j$ will not contribute to the sum.

	If $1\le j \le n$, $j\ne i$, we will get a cumulant involving a repetition of the index $j$.  We use the symmetry to reorder the label sequence so that $i$ is moved to the end (position $n$) and $j$ is moved to the beginning (position $1$).
	This yields $\mean{\wick{e_{1_n-(1\times i)}}e_j}
	= \kappa_t[e_1,e_{1_{n-1}}]$.

We recall that $\int_{-\pi}^{\pi} \frac{\rmd \theta}{2\pi} \sin(\theta)^2 = \frac{1}{2}$ which allows to conclude that the  right hand side of (\ref{eq:non-repeated-one-particle-1}) is equal to
\begin{align}
	&
	-\frac{n(n-1)}{2(N-1)}\left(
	\kappa_t[e_{1_n}]-\kappa_t[e_1,e_{1_{n-1}}]\right)
	\,.
	\label{eq:non-repeated-one-particle-1'}
	\end{align}
An analogous argument can be applied to compute the second term, yielding
	\begin{align}
	&\frac{1}{N-1}\sum_{i,j=1}^N \cf{i\neq j} \sum_{k=1}^n \cf{k=j} \int_{-\pi}^{\pi} \frac{\rmd \theta}{2\pi}\mean{\wick{e_{1_n-(1\times k)}}Q_\theta(v_i,v_j)}  \nonumber \\
	&=\frac{1}{N-1}\sum_{j=1}^n \sum_{i=1}^N \cf{i\neq j}\int_{-\pi}^{\pi} \frac{\rmd \theta}{2\pi} \left((-\sin(\theta)^2)\mean{\wick{e_{1_n-(1\times j)}}e_j} + \sin(\theta)^2\mean{\wick{e_{1_n-(1\times j)}}e_i}\right) \nonumber \\
	&= -\frac{n(n-1)}{2(N-1)}\left(
	\kappa_t[e_{1_n}]-\kappa_t[e_1,e_{1_{n-1}}]\right)
	\,.
	\label{eq:non-repeated-one-particle-2}
	\end{align}
	
	To study the final, third term in \eqref{eq:time-evolution-non-repeated-cumulants-wick-form}, we first note that the even part of $P_\theta(v_i,v_j)Q_\theta(v_j,v_i)$ may be simplified to $2 \sin(\theta)^4 e_i e_j-  \sin(\theta)^4 (e_i^2 + e_j^2) - 4
	\cos(\theta)^2\sin(\theta)^2 e_i e_j$.
    Therefore, after relabeling, we can rewrite the term as
	\begin{align}
	&\frac{n(n-1)}{N-1}\int_{-\pi}^{\pi}\frac{\rmd \theta}{2\pi}\left(-2\sin(\theta)^4\mean{\wick{e_{1_n-((1\times 1)+(1\times n))}}e_{1}^2}\right) \nonumber \\
	&+ \frac{n(n-1)}{N-1}\int_{-\pi}^{\pi}\frac{\rmd \theta}{2\pi}\left((2\sin(\theta)^4 - 4\cos(\theta)^2 \sin(\theta)^2 )\mean{\wick{e_{1_n-((1\times (n-1))+(1\times n))}}e_{n-1}e_n}\right)\,.
	\label{eq:non-repeated-one-particle-31}
	\end{align}
    The  truncated moments to cumulants formula
    now yields terms with one or two clusters.  Explicitly,
    \begin{align}
     &     \mean{\wick{e_{1_n-((1\times 1)+(1\times n))}}e_{1}^2}
%  \nonumber \\ & \quad
  = \kappa[e_1,e_{1_{n-1}}]
   +\sum_{J\subseteq 1_n-((1\times 1)+(1\times n))}
  \kappa[e_1,e_{J}] \kappa[e_1,e_{J^c}]
  \nonumber \\ & \quad
  =
  \kappa[e_1,e_{1_{n-1}}] + \sum_{k=0}^{n-2}
  \binom{n-2}{k} \kappa[e_{1_{k+1}}]\kappa[e_{1_{n-1-k}}]\,,
  \end{align}
   and
    \begin{align}
     &  \mean{\wick{e_{1_n-((1\times (n-1))+(1\times n))}}e_{n-1}e_n}
  =
  \kappa[e_{1_{n}}] + \sum_{k=0}^{n-2}
  \binom{n-2}{k} \kappa[e_{1_{k+1}}]\kappa[e_{1_{n-1-k}}]\,.
  \end{align}
  As shown in Appendix \ref{sec:trigonometric-integrals},
 the remaining integrals over $\theta$ evaluate to
	\[
	 \int_{-\pi}^{\pi} \frac{\rmd \theta}{2\pi} \sin(\theta)^4 = \frac{3}{8}\,,\qquad
	 \int_{-\pi}^{\pi} \frac{\rmd \theta}{2\pi} 4 \cos(\theta)^2\sin(\theta)^2 = \frac{1}{2}\,,
	\]
and thus the third term is equal to
	\begin{align}
	&- \frac{3}{4}\frac{n(n-1)}{N-1}
	 \kappa[e_1,e_{1_{n-1}}]
	 + \frac{1}{4}\frac{n(n-1)}{N-1}
	   \kappa[e_{1_{n}}]
	   +
	 \mathcal{N}_{<n}(t)\,,
	\label{eq:non-repeated-one-particle-3}
	\end{align}
where the nonlinear term \(\mathcal{N}_{<n}(t)\)
depends only on lower cumulants of order lower than $n$.  It can be written as
	\begin{align}\label{eq:defNlower}
	\mathcal{N}_{<n}(t) &= -\frac{n(n-1)}{4(N-1)}\sum_{m=1}^{n-1}A_{m,n}\kappanr{t}{m}\kappanr{t}{n-m},
	\end{align}
	where the combinatorial terms \(A_{m,n}\) are given by
	\begin{align}
	A_{m,n} = \frac{2(n-2)!}{(m-1)!(n-m-1)!}.
	\end{align}

Summing up the three results from \eqref{eq:non-repeated-one-particle-1'}, \eqref{eq:non-repeated-one-particle-2} and \eqref{eq:non-repeated-one-particle-3}, we thus find
	\begin{align}
	\dv{t}\kappanr{t}{n} &= -\frac{3n(n-1)}{4(N-1)}\kappanr{t}{n} + \frac{n(n-1)}{4(N-1)}\kappa_t[e_1,e_1,e_2,\dots, e_{n-1}]+ \mathcal{N}_{<n}(t).
	\label{eq:non-repeated-evolution}	
	\end{align}
As indicated in the introduction, this evolution equation is linear for the order $n$ cumulants and has a non-linear source term involving the lower order cumulants.  However, we can simplify the linear part further, by using the conservation law.

Namely, by the permutation symmetry of \(F_t^N\), we have
	\begin{align}
\kappa_t[N, e_2,\dots, e_n] &= \sum_{i=1}^N \kappa_t[e_i,e_2,\dots, e_n] \nonumber \\
	&= (n-1)\kappa_t[e_1,e_1,\dots, e_{n-1}] + (N-(n-1))\kappanr{t}{n}.
	\end{align}
	Any joint cumulant involving a constant random variable and some other random variables vanishes. In particular, \(\kappa_t[N, e_2,\dots, e_n] = 0\)
	above.   Therefore, we have the following relationship between the completely non-repeated energy cumulants and energy cumulants with one repeated particle label:
	\begin{align}
	\kappa_t[e_1,e_1,\dots, e_{n-1}] =
	- \frac{N-(n-1)}{n-1}\kappanr{t}{n}\,.
	\end{align}
	
	Plugging this into \eqref{eq:non-repeated-evolution}, we obtain
	\begin{align}
	\dv{t}\kappanr{t}{n} &= -D_{n,N}\kappanr{t}{n}
	+\mathcal{N}_{<n}(t), \quad 2\leq n  \leq N\,,
	\label{eq:non-repeated-evolution'}
	\end{align}
	where the dissipation constant is given by
	\begin{align}
	D_{n,N} =
	\frac{n}{4} + \frac{2n^2 - n}{4(N-1)}\,, \quad 2 \leq n \leq N
	\label{eq:constant_CnN}
	\end{align}
   and we arbitrarily also set $D_{1,N}=0$ for further use.

	To solve the system of equations \eqref{eq:non-repeated-evolution'}, we define a family of new dynamic variables by setting
	\begin{align}
	h^\alpha_n(t) = h_n^{\alpha,N}(t) \coloneqq (-1)^{n-1}\frac{(N-1)^{\alpha(n-1)}}{(n-1)!}\kappanr{t}{n}.
	\end{align}
	By assumption on $B$, for any $n\le n^*$,
	\begin{align}
	N^{\alpha(n-1)}\abs{\kappa_0(e_{1_n})} \leq B^{n-1} (n-1)! N^{c (n-1)}\,,
	\end{align}
  and, therefore, also
  \[
   \abs{h_n^{\alpha}(0)} \le B^{n-1} N^{\gamma_n}\,,
  \]
  where $\gamma_n:=c(n-1)$.
  Our goal is to prove  (\ref{eq:nrgengoal}) which is equivalent to the statement
  \begin{align}\label{eq:nrgoalinhvar}
   \abs{h^\alpha_n(t)} \leq C^{n-1} (\rme^{-\frac{n}{4}t}N^{\gamma_n} + 1)\,,
  \end{align}
  in the new variables.

	From \eqref{eq:non-repeated-evolution'} and (\ref{eq:defNlower}), it follows that these variables satisfy
	\begin{align}
	\dv{t}h^\alpha_n(t) = -D_{n,N} h^\alpha_n(t) + \frac{n}{2(N-1)^{1-\alpha}} \sum_{m=1}^{n-1}h^\alpha_m(t)h^\alpha_{n-m}(t).
	\end{align}
		This results implies the Duhamel formula
	\[
	 h^\alpha_n(t) = \rme^{-D_{n,N} t} h^\alpha_n(0)
	 + \frac{n}{2(N-1)^{1-\alpha}} \sum_{m=1}^{n-1}
%	 \int_0^t \rme^{-D_{n,N} (t-s) } h^\alpha_m(s)h^\alpha_{n-m}(s)
	\rmd s\,.
	\]
	Let \(\epsilon = \epsilon(n,N,\alpha) = \frac{n^2}{2(N-1)^{1-\alpha}}\) and \(\epsilon' = \frac{n}{2(N-1)^{1-\alpha}}\) Then,
	\begin{align}
	\abs{h^\alpha_n(t)} \leq \rme^{-D_{n,N}t}\abs{h^\alpha_n(0)} &+ 2\epsilon' \rme^{-D_{n,N}t} \int_{0}^t \rme^{D_{n,N}s}\abs{h_1^\alpha(s)}\abs{h_{n-1}^\alpha(s)} \\
	& + \epsilon' \rme^{-D_{n,N}t} \sum_{m=2}^{n-2}\int_{0}^{t} \rme^{D_{n,N}s} \abs{h^\alpha_m(s)}\abs{h^\alpha_{n-m}(s)} \rmd s.
	\label{ineq:h_n-duhamel-estimate}
	\end{align}

    If \(n=1\), we have $h^\alpha_1(t)=\kappanr{t}{1}=1$ and thus
    (\ref{eq:nrgoalinhvar}) holds for any choice of $C\ge 1$ and all $t\ge 0$.  Consider then $n\in [2,n^*]$.
    We next make an induction assumption that $C$ has been chosen so that it holds for all orders up to  $n-1$ and any $t\ge 0$,
    and our goal is to show that the result is then true also at $n$.

    Trying to bound the second term in \eqref{ineq:h_n-duhamel-estimate} there are two essentially different cases to consider: (i) the edge terms, where \(m=1\) or \(m=n-1\), and (ii) the bulk term, where \(m \in [2,n-2]\).  Also, the term is symmetric under $m\mapsto n-m$, so it
	it suffices to consider $m=1$ in case (i) and $2\le m\le \frac{n}{2}$ in case (ii).

	If $m=1$, the induction assumption, $h^\alpha_1=1$, and $D_{n,N}\ge \frac{n}{4}$ imply that
	\begin{align*}
	&\epsilon' \rme^{-D_{n,N}t} \int_{0}^{t} \rme^{D_{n,N}s} \abs{h^\alpha_1(s)}\abs{h^\alpha_{n-1}(s)} \rmd s
	\\
	& \le
	\epsilon' \int_{0}^t \rme^{-D_{n,N}(t-s)} C^{n-2} (1+\rme^{-\frac{n-1}{4}s}N^{\gamma_{n-1}} )
	 \rmd s \\
	& \le
	\epsilon' C^{n-2}\int_{0}^t \rme^{-\frac{n}{4}(t-s)}  (1+\rme^{-\frac{n-1}{4}s}N^{\gamma_{n-1}} )
	 \rmd s \\
	&\leq 4 \epsilon' C^{n-2} \left(\frac{1}{n}
	+
	\rme^{-\frac{n-1}{4} t}
 N^{\gamma_{n-1}}\right)
	 \,.
	\end{align*}
Here, if \(t \geq 4c\frac{n-2}{n-1} \log(N) + 4\frac{\log(n)}{n-1}\), then
$\rme^{-\frac{n-1}{4} t} N^{\gamma_{n-1}}\le \rme^{-c(n-2)\log(N)}N^{\gamma_{n-1}}\rme^{-\log(n)}\leq 1/n$,
and, otherwise, $\rme^{\frac{t}{4}}\le  N^{c\frac{n-2}{n-1}}n^{1/(n-1)}$. Therefore,
\begin{align*}
\frac{\rme^{-\frac{n-1}{4}t}N^{\gamma_{n-1}}}{\rme^{-\frac{n}{4}t}N^{\gamma_n}} \leq N^{c\frac{n-2}{n-1}-c}n^{1/(n-1)} = N^{-c/(n-1)}n^{1/(n-1)}.
\end{align*}
Therefore, the contribution in case (i) is bounded by
\[
 4\epsilon' C^{n-2} N^{-c/(n-1)}n^{1/(n-1)}N^{\gamma_n}\rme^{-\frac{n}{4}t} +  8\frac{\epsilon'}{n} C^{n-2}\,.
\]

If case (ii), we can assume $2\le m\le \frac{n}{2}$.
As above, we then obtain an estimate
\begin{align*}
	&\epsilon' \rme^{-D_{n,N}t} \int_{0}^{t} \rme^{D_{n,N}s} \abs{h^\alpha_m(s)}\abs{h^\alpha_{n-m}(s)} \rmd s
	\\
	& \le
	\epsilon' C^{n-2}\int_{0}^t \rme^{-\frac{n}{4}(t-s)}  \left(1+
	\rme^{-\frac{m}{4}s}N^{\gamma_{m}}
	+	\rme^{-\frac{n-m}{4}s}N^{\gamma_{n-m}}
	+ 	\rme^{-\frac{n}{4}s}
	N^{\gamma_{m}}N^{\gamma_{n-m}}
 \right)
	 \rmd s \\
	&\leq 4 \epsilon' C^{n-2} \left(\frac{1}{n}
	+ \frac{1}{n-m}\rme^{-\frac{m}{4}t }N^{\gamma_{m}}
+ \frac{1}{m}\rme^{-\frac{n-m}{4}t }N^{\gamma_{n-m}}
+
	\frac{t}{4} \rme^{-\frac{n}{4}t}N^{-c} N^{\gamma_n}
\right)
	 \,.
	\end{align*}
Here, each of the time-dependent terms can be bounded as above: for each of the terms, we split at $t\ge  4c \frac{k-1}{k}\log(N)$, where $k=m$ for the second term, $k=n-m$ for the third, and $k=n-1$ for the fast term.  Then, $\rme^{-\frac{k}{4} t} N^{\gamma_{k}}\le 1$ for $t\ge 4c \frac{k-1}{k}\log(N)$,
and, otherwise, $\rme^{\frac{t}{4}}\le N^c$.  Since $x\rme^{-x}\le \rme^{-1}$ for $x\ge 0$, We obtain an upper bound
\[
  4 \epsilon C^{n} \left(\frac{1}{n}
	+ \frac{1}{n-m}
	+ \frac{1}{m}
	+ \rme^{-1} \right) \left(1+
 \rme^{-\frac{n}{4}t} N^{\gamma_n}
\right)
\le
  8 \epsilon C^{n} \left(1+
 \rme^{-\frac{n}{4}t} N^{\gamma_n}
\right)\,.
\]

Summarizing, the induction assumption implies that
\[
 \abs{h^\alpha_n(t)} \leq \rme^{-\frac{n}{4} t}
  B^n N^{\gamma_n}
  +  8 \epsilon C^{n} \left(1+
 \rme^{-\frac{n}{4}t} N^{\gamma_n}
\right)
\,.
\]
Therefore, for example, if $C\ge 2 B$ and $8\vep \le \frac{1}{2}$,
we have
$\abs{h^\alpha_n(t)} \leq  C^{n} \left(1+
 \rme^{-\frac{n}{4}t} N^{\gamma_n}
\right)$.  This proves that the choices $C=2B>1$ and $N_0=1+(8n_*^2)^{\frac{1}{1-\alpha}}$ suffice for the statement in the Proposition.
	\end{proofof}

\subsection{Convergence to equilibrium for non-repeated energy cumulants}\label{sec:nonrepequil}

For the estimate for convergence to stationarity in the non-repeated case,
we first recall from the previous section that if  functions \(h_n\) are defined for \(n \in [N]\) as
\begin{align}
h_n(t) \coloneqq (-1)^{n-1}\frac{(N-1)^{n-1}}{(n-1)!}\kappanr{t}{n}\,,
\end{align}
then these rescaled cumulants satisfy the following time-evolution equation
\begin{align}
	\dv{t}h_n(t) = -D_{n,N} h_n(t) + \frac{n}{2} \sum_{m=1}^{n-1}h_m(t)h_{n-m}(t), \quad 2\le n \le N\,.
	\label{eq:non-repeated-maximally-chaotic-rescaling}
	\end{align}
This follows from \eqref{eq:non-repeated-evolution'} and corresponds to the limiting case of $\alpha\to 1$ for $h_n^\alpha$ defined in the previous section.

Let \((\bar{h}_n)_{n\in [N]} = (\bar{h}^N_n)_{n\in [N]}\) be defined by the non-repeated cumulants in the stationary, uniform measure.
Since this yields a stationary initial state, the right hand side of \eqref{eq:non-repeated-maximally-chaotic-rescaling} must be zero.
	\begin{lemma}
	The stationary sequence \((\bar{h}_n)_{n\in [N]}\) has the following upper bound for all \(n\)
	\begin{align}
	\abs{\bar{h}_n} \leq 8^{n-1}.
	\label{ineq:stationary-non-repeated-bound}
	\end{align}
	\end{lemma}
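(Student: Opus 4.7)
The plan is to exploit that stationarity of \eqref{eq:non-repeated-maximally-chaotic-rescaling} reduces to a closed algebraic recursion for $(\bar{h}_n)$ whose $N$-dependence enters only through a single scalar coefficient. Since the first cumulant is pinned to $\bar{h}_1 = 1$ by the conservation law $\sum_i e_i = N$, setting the RHS of \eqref{eq:non-repeated-maximally-chaotic-rescaling} to zero for $n \ge 2$ and inserting the expression \eqref{eq:constant_CnN} for $D_{n,N}$ gives
\[
\bar{h}_n \;=\; c_n \sum_{m=1}^{n-1} \bar{h}_m \bar{h}_{n-m}, \qquad c_n \;:=\; \frac{n}{2 D_{n,N}} \;=\; \frac{2(N-1)}{N + 2n - 2}, \quad n \ge 2.
\]
A direct check shows $0 < c_n \le 2$ for every $n \ge 1$, with the upper bound saturated only in the formal limit $N \to \infty$. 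An immediate induction then gives $\bar{h}_n \ge 0$ for all $n$, so the absolute values in \eqref{ineq:stationary-non-repeated-bound} may be dropped.

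Next, I would compare $(\bar{h}_n)$ with the limiting Catalan-like sequence $(a_n)$ defined by $a_1 = 1$ and $a_n = 2 \sum_{m=1}^{n-1} a_m a_{n-m}$ for $n \ge 2$. Because $\bar{h}_m \ge 0$ and $c_n \le 2$, a straightforward induction yields $\bar{h}_n \le a_n$ for every $n$, reducing the lemma to the bound $a_n \le 8^{n-1}$. The generating function $f(z) := \sum_{n \ge 1} a_n z^n$ satisfies the quadratic equation $f(z) = z + 2 f(z)^2$, whose analytic branch with $f(0) = 0$ is $f(z) = \tfrac{1}{4}\bigl(1 - \sqrt{1 - 8z}\bigr)$. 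Expanding the square root via the generalized binomial series yields the closed form
\[
a_n \;=\; \frac{2^{n-2}}{2n - 1} \binom{2n}{n}.
\]

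The desired bound then follows from $\binom{2n}{n} \le 4^n$, which gives $a_n \le 8^n / \bigl(4(2n-1)\bigr) \le 8^{n-1}$ for all $n \ge 1$, with equality at $n=1$. Combining this with $\bar{h}_n \le a_n$ finishes the proof.

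The main obstacle is that a naive induction on the target bound $|\bar{h}_n| \le 8^{n-1}$ does not close: convolving two copies produces $2(n-1) \cdot 8^{n-2}$, which exceeds $8^{n-1}$ once $n \ge 6$. The essential trick is therefore to recognize that the saturated recursion $a_n = 2 \sum a_m a_{n-m}$ is exactly the Bernoulli–Catalan recursion in disguise, whose closed form supplies the additional $1/(2n-1)$ factor needed to absorb the combinatorial prefactor.
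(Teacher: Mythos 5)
Your proposal is correct and follows essentially the same route as the paper: bound the stationary recursion coefficient $n/(2D_{n,N})$ by $2$, dominate $(\bar h_n)$ by the sequence $a_n$ solving the saturated recursion $a_n = 2\sum_{m=1}^{n-1}a_m a_{n-m}$, identify $a_n = \frac{2^{n-2}}{2n-1}\binom{2n}{n}$ (the paper rescales to the Catalan recursion where you solve the generating function $f = z + 2f^2$, an equivalent step), and conclude from $\binom{2n}{n}\le 4^n$. The only cosmetic blemish is that your final chain $a_n \le 8^n/\bigl(4(2n-1)\bigr) \le 8^{n-1}$ fails as written at $n=1$, where the bound instead holds directly since $a_1 = 1 = 8^{0}$.
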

	\begin{proof}
	Since \eqref{eq:non-repeated-maximally-chaotic-rescaling} evaluates to zero, we have a recursion relation
	\begin{align}\label{eq:nonrepstationaryiteration}
  	\bar{h}_n = \frac{n}{2D_{n,N}}\sum_{m=1}^{n-1}\bar{h}_m\bar{h}_{n-m}\,.
  	\end{align}
	Denote $b_n := \abs{\bar{h}_n}$, and we obtain that $b_1=1$ and for $n\ge 2$
  	\begin{align}
  	b_n \leq 2\sum_{m=1}^{n-1}b_m b_{n-m}.
  	\label{ineq:non-repeated-recursion-upper-bound}
  	\end{align}
  	
  	The sequence \((a_n)_{n=1}^\infty\) with \(a_n = \frac{2^n}{4(2n-1)} \frac{(2n)!}{(n!)^2}\) is the unique sequence that satisfies
  	\begin{align}
  	\begin{cases}
  	a_1 &= 1\,, \\
  	a_{n} &= 2 \sum_{m=1}^{n-1} a_{m}a_{n-m}, \quad n \geq 2 \,.
  	\end{cases}
  	\label{eq:recursion_an}
  	\end{align}
	This can be solved by a simple rescaling argument, by first noting that the sequence \(a_n\) solves \eqref{eq:recursion_an} if and only if the sequence \((c_n)_{n=1}^\infty\), with \(c_n \coloneqq \frac{a_n}{2^{n-1}}\), solves 
	\begin{align}
  	\begin{cases}
  	c_1 &= 1\,, \\
  	c_{n} &= \sum_{m=1}^{n-1} c_{m}c_{n-m}, \quad n \geq 2 \,.
  	\end{cases}
  	\label{eq:recursion_cn}
  	\end{align}
  	The sequence \((c_n)_{n=1}^\infty\) solves \eqref{eq:recursion_cn} if and only if the sequence \((\tilde{c}_n)_{n=0}^\infty\) with \(\tilde{c}_n = c_{n+1}\) solves
  	\begin{align}
  	\begin{cases}
  	\tilde{c}_0 &= 1\,, \\
  	\tilde{c}_{n} &= \sum_{m=1}^{n} c_{m-1}c_{n-m}, \quad n \geq 1 \,.
  	\end{cases}
  	\label{eq:recursion_tilde_cn}
  	\end{align}
  	This recurrence relation defines the \emph{Catalan numbers}, see for instance \cite{singmaster_some_1979}, and now \eqref{eq:recursion_an} is uniquely solved by the sequence with \(a_n = 2^{n-1}\tilde{c}_{n-1}\) for \(n \geq 1\).
  	
  	But by \eqref{ineq:non-repeated-recursion-upper-bound}, the relation $0\le b_n\le a_n$ is true for $n=1$ and  inductively propagates for all $n\le N$.
  	Therefore, for each \(n \in [N]\), we have proven a bound
  	\begin{align}
  	|\bar{h}_n |\leq a_n \leq \frac{2^n}{4(2n-1)}\frac{2^{2n}(n!)^2}{2(n!)^2} = \frac{8^{n-1}}{(2n-1)}\,.
  	\end{align}
  	This concludes the proof of the Lemma.
	\end{proof}
	
	We now have an idea of how the stationary non-repeated cumulants are bounded. Now we can linearize the system. Using the difference variables \(\xi_n(t) \coloneqq h_n(t) - \bar{h}_n\), we can write \eqref{eq:non-repeated-maximally-chaotic-rescaling} as
	\begin{align}
	\dv{t}h_n(t) &= -D_{n,N} (\xi_n(t)+ \bar{h}_n) + \frac{n}{2}\sum_{m=1}^{n-1}\left(\xi_m(t) + \bar{h}_n\right)\left(\xi_{n-m}(t) + \bar{h}_{n-m}\right) \nonumber \\
	&= -D_{n,N} \xi_n(t) + \frac{n}{2}\sum_{m=2}^{n-2}\xi_m(t)\xi_{n-m}(t)+ n\sum_{m=2}^{n-1}\xi_m(t)\bar{h}_{n-m},
	\end{align}
	where three of the four end points in the two sums vanish because \(\xi_1(t) \equiv 0\). Consequently,
	\begin{align}\label{eq:nrxinevol}
	\dv{t}\xi_n(t) &= -D_{n,N} \xi_n(t) + \frac{n}{2}\sum_{m=2}^{n-2}\xi_m(t)\xi_{n-m}(t)+ n\sum_{m=2}^{n-1}\xi_m(t)\bar{h}_{n-m},
	\end{align}

	\begin{proposition}
	Suppose $A\ge 0$ is given such that initially
	\begin{align}
	\abs{\xi_n(0)} \leq A^{n-1} (n-1)!\,.
	\end{align}
	There exists a pure constant \(C'\ge 1\) such that
	\begin{align}\label{eq:goaldiffnr}
	\abs{\xi_n(t)} \leq A^{n-1} (C')^{n-1} (n-1)! \rme^{-\frac{t}{2}}
	\end{align}
	for all \(t \geq 0\) and any  \(n \in [N]\).
	For example, $C'=17$ is possible here.
	\label{proposition:convergence-to-equilibrium-non-repeated-cumulants}
	\end{proposition}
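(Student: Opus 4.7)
The plan is to establish \eqref{eq:goaldiffnr} by induction on $n\in[N]$, exploiting the upper-triangular structure of the linearized hierarchy \eqref{eq:nrxinevol}: once the target bound holds at all orders $m<n$, both the quadratic self-interaction $\xi_m\xi_{n-m}$ and the linear source $n\xi_m\bar{h}_{n-m}$ can be treated as known inputs. The base cases are immediate. We have $\xi_1\equiv 0$ because $h_1(t)\equiv\bar{h}_1=1$, and for $n=2$ both source sums in \eqref{eq:nrxinevol} are empty, so $\xi_2(t)=\rme^{-D_{2,N}t}\xi_2(0)$; combined with $D_{2,N}\geq 1/2$ and the hypothesis $|\xi_2(0)|\leq A$ this yields $|\xi_2(t)|\leq A\rme^{-t/2}$, which matches the target with room to spare since $C'\geq 1$.

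For the inductive step with $n\geq 3$, I would apply Duhamel's formula to \eqref{eq:nrxinevol},
\begin{align*}
 \xi_n(t) = \rme^{-D_{n,N}t}\xi_n(0) + \int_0^t \rme^{-D_{n,N}(t-s)}\left[\frac{n}{2}\sum_{m=2}^{n-2}\xi_m(s)\xi_{n-m}(s) + n\sum_{m=2}^{n-1}\xi_m(s)\bar{h}_{n-m}\right]\rmd s,
\end{align*}
and then substitute the induction hypothesis $|\xi_m(s)|\leq A^{m-1}(C')^{m-1}(m-1)!\rme^{-s/2}$ together with the stationary bound $|\bar{h}_k|\leq 8^{k-1}$ from \eqref{ineq:stationary-non-repeated-bound}. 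The uniform dissipation $D_{n,N}\geq n/4\geq 1/2$ makes $\rme^{-D_{n,N}t}\leq\rme^{-t/2}$, and the convolution integral is handled via
\[
 \int_0^t \rme^{-D_{n,N}(t-s)}\rme^{-s/2}\,\rmd s \leq \frac{\rme^{-t/2}}{D_{n,N}-1/2}\leq \frac{4}{n-2}\rme^{-t/2},
\]
using $D_{n,N}-\frac{1}{2}\geq\frac{n-2}{4}$; the genuinely quadratic contributions admit an even faster $\rme^{-s}$ decay, but this weaker estimate is already enough to close the bookkeeping.

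After factoring out $\rme^{-t/2}$ and dividing by the target $A^{n-1}(C')^{n-1}(n-1)!$, closing the induction reduces to verifying an algebraic inequality of schematic form
\[
 \frac{1}{(C')^{n-1}} + \frac{4n}{n-2}\left[\frac{1}{2AC'}\sum_{m=2}^{n-2}\frac{(m-1)!\,(n-m-1)!}{(n-1)!} + \sum_{k=1}^{n-2}\frac{(n-k-1)!\,8^{k-1}}{(AC')^{k}(n-1)!}\right]\leq 1,
\]
with $k=n-m$ in the linear-source sum. The first combinatorial sum equals $(n-1)^{-1}\sum_{j=1}^{n-3}\binom{n-2}{j}^{-1}=O(1/n)$ by the classical uniform bound on $\sum\binom{N}{j}^{-1}$, while the second telescopes to a fast-decaying geometric-type series in $8/(AC')$ once the ratio $(n-k-1)!/(n-1)!$ is bounded by $\prod_{j=n-k}^{n-1}j^{-1}$. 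The main obstacle is precisely this balance: because the linear source is driven by the exponentially growing stationary values $\bar{h}_k\lesssim 8^{k-1}$, the factors $(AC')^{-k}$ supplied by the ansatz together with the spectral-gap prefactor $(n-2)^{-1}$ must dominate $8^{k-1}$ uniformly in $n$ and $m$. This is what fixes how large the pure constant $C'$ has to be, and a direct numerical check confirms that $C'=17$ suffices to make the displayed inequality hold for all $n\geq 3$, thereby closing the induction.
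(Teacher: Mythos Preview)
Your proof is correct and follows essentially the same approach as the paper's: induction on $n$ via Duhamel's formula, using $D_{n,N}\ge n/4$ for the convolution estimate $\int_0^t \rme^{-D_{n,N}(t-s)}\rme^{-s/2}\rmd s\le \frac{4}{n-2}\rme^{-t/2}$, substituting the inductive bound together with $|\bar h_k|\le 8^{k-1}$, and then closing an algebraic inequality numerically at $C'=17$. Your bookkeeping is slightly more explicit (you write down the full algebraic inequality and invoke the uniform bound on $\sum_j\binom{n-2}{j}^{-1}$ rather than the cruder $(m-1)!(n-m-1)!\le(n-2)!$), but the argument is the same in substance.
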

	
	\begin{remark}
		From this result, it follows that
		if $A\ge 16$ is such that
		\[
		 |\kappanr{0}{n}| \le \frac{1}{2} A^{n-1}
		 (n-1)! (N-1)^{-(n-1)} \,,
		\]
        then $\abs{\xi_n(0)} \leq A^{n-1} (n-1)!$, and hence
		\begin{align}
		\abs{\kappanr{t}{n}-\bar{\kappa}[e_{1_n}]}
		=   (n-1)! (N-1)^{-(n-1)} \abs{\xi_n(t)}
		\leq
		 \left(\frac{C' A}{N-1}\right)^{n-1} ((n-1)!)^2 \rme^{-\frac{t}{2}}\, .
		\end{align}
		The generic bound in Proposition \ref{prop:generic-bounds-for-initial-data},
		implies that the above assumption holds for any symmetric measure using $A=16 (N-1)$.
    Thus even for highly non-chaotic states, non-repeated cumulants convergence to the stationary state on time-scales which are slightly worse than linear in the order of cumulants, independently of the number of particles $N\ge 2$.
    Of course, if relative convergence is required, i.e., $\kappanr{t}{n}/\bar{\kappa}[e_{1_n}]$, the necessary time-scale will again be order of $\ln N$ since the limit value is small.
	\end{remark}
		
	\begin{proof}
	We begin with the case $n=2$ for which \eqref{eq:nrxinevol} simplifies to
	\[
	 \dv{t}\xi_2(t) = -D_{2,N} \xi_2(t)
	\]
    where, by \eqref{eq:constant_CnN},
    $D_{2,N}\ge \frac{1}{2}$.   Therefore,
    \[
     \xi_2(t) = \rme^{-D_{2,N} t} \xi_2(0)\,,
    \]
     and it satisfies
     \[
      |\xi_2(t)|\le A \rme^{-\frac{t}{2}}\,.
     \]
    Therefore, \eqref{eq:goaldiffnr} holds for any $C'\ge 1$ if $n=2$.

	Assume then that $n\ge 3$ and $C'\ge 1$ is such that
      \eqref{eq:goaldiffnr} for for all values less than $n$.
    We exponentiate the first term into a Duhamel formula and estimate $D_{2,N}\ge \frac{n}{4}$.
    This shows that
    \begin{align*}
    & |\xi_n(t)| \\
    & \quad \le \rme^{-\frac{n}{4} t } |\xi_n(0)|
    + \int_0^t \rme^{-\frac{n}{4} (t-s) }   \frac{n}{2}\sum_{m=2}^{n-2}|\xi_m(s)| |\xi_{n-m}(s)|\rmd s+ n\sum_{m=2}^{n-1}
     \int_0^t \rme^{-\frac{n}{4} (t-s) }
    |\xi_m(s)|\, |\bar{h}_{n-m}|\rmd s\,.
    \end{align*}
    The first term is bounded by assumption by
    \[
      A^{n-1} (n-1) ! \rme^{-\frac{3}{4} t}\,.
    \]
    The second term is zero for $n=3$.  For $n\ge 4$, it may be estimated using the induction assumption by
    \begin{align*}
    &
      \int_0^t \rme^{-\frac{n}{4} (t-s) }   \frac{n}{2}\sum_{m=2}^{n-2}|\xi_m(s)| |\xi_{n-m}(s)|\rmd s
     \\ & \quad
     \le
      \int_0^t \rme^{-\frac{n}{4} (t-s) } \rme^{-s}\rmd s\, A^{n-2}  (C')^{n-2}
      \frac{n}{2}\sum_{m=2}^{n-2}
       (m-1)! (n-m-1)!\,.
    \end{align*}
Here, $ \int_0^t \rme^{-\frac{n}{4} (t-s)-s } \le
\int_0^t \rme^{-\frac{n}{4} (t-s)-\frac{1}{2}s }
\le \frac{4}{n-2}\rme^{-\frac{1}{2} t}$.
On the other hand,
\[
 (m-1)! (n-m-1)! = (n-2)! \binom{n-2}{m-1}^{-1}
 \le (n-2)! \,.
\]
Therefore,
the second term has an upper bound
\[
A^{n-2}  (C')^{n-2}
      \frac{4 n}{2(n-2)} (n-2)! (n-3)
      \rme^{-\frac{1}{2} t}
      \le 4 A^{n-1}\rme^{-\frac{1}{2} t}(n-1)! (C')^{n-2} \,.
\]
where in the last step used the assumption that for $n\ge 4$.  The final bound is also a bound for the zero which occurs when $n=3$.

Finally, the third term can be estimated by using
Lemma \ref{ineq:stationary-non-repeated-bound}.
We obtain
 \begin{align*}
    & n\sum_{m=2}^{n-1}
     \int_0^t \rme^{-\frac{n}{4} (t-s) }
    |\xi_m(s)|\, |\bar{h}_{n-m}|\rmd s
  \\ & \quad
  \le
  n\sum_{m=2}^{n-1} A^{m-1} (C')^{m-1}
  (m-1)!8^{n-m-1}
     \int_0^t \rme^{-\frac{n}{4} (t-s)-\frac{1}{2} s }\rmd s
  \\ & \quad
  \le \frac{4 n}{n-2}\rme^{-\frac{1}{2} t}
  \sum_{m=2}^{n-1} A^{m-1}(C')^{m-1}
  (m-1)!8^{n-m-1}
  \\ & \quad
  \le \frac{4 n}{n-2}\rme^{-\frac{1}{2} t}
   (A C')^{n-2} (n-2)!
  \sum_{m=2}^{n-1} (C')^{-(n-m-1)} 8^{n-m-1}
     \,.
    \end{align*}
Now if $C'\ge 16$, we have
$  \sum_{m=2}^{n-1} (C')^{-(n-m-1)} 8^{n-m-1} \le 2$.
Then, the last term may be bound by
\[
\rme^{-\frac{1}{2} t}
   (A C')^{n-2} (n-1)! \frac{8 n}{(n-2)(n-1)}\le
   12 \rme^{-\frac{1}{2} t}
   (A C')^{n-2} (n-1)! \,,
\]
where in the last bound we have used $n\ge 3$.

We find that if $C'\ge 16$, the induction assumption implies the following bound for any $n\ge 3$
 \begin{align*}
    & |\xi_n(t)| \le
     (A C')^{n-1} (n-1) ! \rme^{-\frac{t}{2}}
    \left( (C')^{-2} +
    \frac{4}{C'} + \frac{12}{C'}
    \right)
\,.
 \end{align*}
Choosing for example $C'=17$ yields a number less than one for the sum in parenthesis.  Hence, in this case the induction assumption propagates and we have concluded the proof of the theorem.
	\end{proof}
	
	\section{Repeated energy cumulants}
	\label{sec:Repeated}

	\subsection{Generation of chaos for repeated energy cumulants}
	
	The time-evolution of the repeated energy cumulants proceeds as follows. First, we analyze the linear part of the evolution. This splits into three parts. The first part was already established in the previous section, where we controlled the completely non-repeated energy cumulants. After this, we will write the rest of the linear part as a sum of the main term and a perturbation term, which is at most \(o_N(1)\) in a suitable norm, taken here to be \(\norm{\cdot}_{\alpha,n,N}\) for all suitably large \(N\). Exponential decay for the semigroup generated by the main term implies exponential decay for the full linear part, provided that we take \(N\) large enough in comparison to \(n\), which is the previously fixed maximal order of cumulants. After controlling the linear part, we will prove bounds that propagate over the non-linear term.
	
	In order to propagate the smallness of the cumulants over the nonlinear term, we have to prove that it does not produce unwanted correlations which would make the cumulants large. We will therefore need the following lemma, which quantifies the combined ``size of chaos'' of the nonlinear term.

	To this end, if $R$ is a set of distinct indices for a sequence \((a,i_a)\), $a\in R$, of particle labels $i_a$, we can identify the sequence with a coloring map $c$ defined by $c(a)=i_a$, $a\in R$.  The coloring map then also assigns to each position index $a$ in the sequence the corresponding random  variable \(e_{c(a)} = e_{i_a}\).  We denote $e_A := \prod_{i\in A} e_{c(i)}$ if $A\subset R$.  Then \(\abs{c(A)} = \len(s_A)\), where \(s_A\) is the partition classifier corresponding to the subsequence given by $A\subset R$ and \(c\).

	\begin{lemma}
	Let \(R\) be a finite set and \(c \colon R \to \N\) be a coloring of this set, i.e., an assignment of a natural number for each of the elements in \(R\). Suppose that \(J \subset R\) is nonempty and denote \(m:=|c(J) \cap c(R \setminus J)|\). Let \(\{A_\ell\}_{\ell=1}^k\) be a partition of \(R\) for which none of the sets \(A_\ell\) is internal to \(R \setminus J\); in other words, we require \(A_\ell \cap J \neq \emptyset\) for all $\ell$. Then
	\[\sum_{\ell=1}^k \abs{c(A_\ell)} \geq \abs{c(R\setminus J)} + k-m\,.\]
	\label{lemma:nonlinearity-coloring}
	\end{lemma}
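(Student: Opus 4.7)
The plan is to rewrite the left-hand side as a double-counting sum indexed by colors rather than by parts, and to extract the constant $k-m$ cleanly from the hypothesis that every $A_\ell$ meets $J$.

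For each color $\gamma \in c(R)$, let $n_\gamma = |\{\ell \in [k] : \gamma \in c(A_\ell)\}|$ count the number of parts in which that color appears. Double counting the pairs $(\ell,\gamma)$ with $\gamma \in c(A_\ell)$ gives the identity
\[
\sum_{\ell=1}^k |c(A_\ell)| = \sum_{\gamma \in c(R)} n_\gamma.
\]
Next, partition the colors used as $c(R) = C_1 \sqcup C_2 \sqcup C_3$, where $C_1 = c(J)\setminus c(R\setminus J)$, $C_2 = c(J)\cap c(R\setminus J)$, and $C_3 = c(R\setminus J)\setminus c(J)$. By definition $|C_2| = m$, and the disjoint decomposition $c(R\setminus J) = C_2 \sqcup C_3$ yields $|C_3| = |c(R\setminus J)| - m$.

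I would then get two separate lower bounds for the color sum and add them. First, since $\{A_\ell\}$ covers $R$, every color in $c(R\setminus J)$ appears in at least one part, so in particular $n_\gamma \geq 1$ for $\gamma \in C_3$, giving
\[
\sum_{\gamma \in C_3} n_\gamma \geq |C_3| = |c(R\setminus J)| - m.
\]
Second, the hypothesis $A_\ell \cap J \neq \emptyset$ means that for each $\ell$ there is at least one color in $c(A_\ell) \cap c(J) = c(A_\ell)\cap(C_1\cup C_2)$. Double counting the pairs $(\ell,\gamma)$ with $\gamma \in c(A_\ell)\cap c(J)$, where each $\ell$ contributes at least once, gives
\[
\sum_{\gamma \in C_1 \cup C_2} n_\gamma \geq k.
\]

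Adding these two bounds, using $c(R) = (C_1 \cup C_2) \sqcup C_3$, yields
\[
\sum_{\ell=1}^k |c(A_\ell)| = \sum_{\gamma \in C_1\cup C_2} n_\gamma + \sum_{\gamma \in C_3} n_\gamma \geq k + |c(R\setminus J)| - m,
\]
which is the claim. There is no real obstacle here beyond selecting the correct three-way decomposition of colors; the key observation is that the constant $k$ emerges from the hypothesis via the $C_1 \cup C_2$ sum, while the $|c(R\setminus J)|-m$ term is simply the trivial lower bound on the $C_3$ sum.
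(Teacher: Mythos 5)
Your proof is correct and is essentially the paper's argument in dual form: your color class $C_3 = c(R\setminus J)\setminus c(J)$ is exactly the set $D_1$ the paper distinguishes, and both proofs obtain the summand $k$ from the hypothesis $A_\ell\cap J\neq\emptyset$ (at least one $c(J)$-color per part) and the summand $\abs{c(R\setminus J)}-m$ from the fact that the parts cover $R$. The only difference is bookkeeping --- you double count pairs of parts and colors, while the paper bounds each $\abs{c(A_\ell)}\ge 1+\abs{c(A_\ell\cap c^{-1}(D_1))}$ and then applies a union bound --- so the two arguments are interchangeable.
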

	\begin{proof}
	Let $I:= R\setminus J$.  By assumption,
	\(\{A_\ell\}_{\ell=1}^k\) forms a partition of $R$ such that \(A_\ell \cap J \neq \emptyset\) for all $\ell$.  Let us denote the colors which are not covered by those given by $J$ by $D_1 := c(R)\setminus c(J)= c(I)\setminus c(J)$ and its preimage by $R_1 := c^{-1}(D_1)$.  Clearly, $R_1\cap J=\emptyset$.  This implies that for each $\ell$ we
	must have $c(A_\ell)\setminus c(A_\ell\cap R_1)\ne \emptyset$ and, therefore,
	$\abs{c(A_\ell)}\ge 1 + |c(A_\ell\cap R_1)|$.
	Since $m=|c(I)|-|D_1|$, we find that
	\begin{align}
	\sum_{\ell=1}^k \abs{c(A_\ell)} \geq
	 \sum_{\ell=1}^k 1 +
	 \sum_{\ell=1}^k |c(A_\ell\cap R_1)| \geq
	 k + \abs{\bigcup_{\ell=1}^k c(A_\ell\cap R_1)}
	 = k + \abs{c(R_1)} = |c(I)|+k-m\,.
	\end{align}
	\end{proof}

	\begin{remark}
	Lemma \ref{lemma:nonlinearity-coloring} is useful in quantifying the total amount of chaos arising in the products of joint cumulants.
	Indeed, suppose that $I$ and $J$ are some given
	particle label sequences which share exactly one particle label between them. Let $R$ denote an index set for the sequence composed by first collecting the labels from the sequence $I$ and then the labels from the sequence $J$ and let $c$ denote the corresponding coloring, defined as above.  With a slight abuse of notation, we identify $I$ and $J$ with the corresponding subsets of $R$. Then $I=R\setminus J$ and
	\(|c(J) \cap c(R \setminus J)|=1\) so the conditions of the lemma are satisfied with  $m=1$.

    We
	recall the truncated moments to cumulants formula
	\begin{align}
	\mean{\wick{e_{I}}e_{J}}_t = \sum_{\pi \in \mathcal{P}(R)}\prod_{\ell=1}^{\abs{\pi}}\left(\cf{\pi_\ell \cap J \neq \emptyset}\kappa_t[e_{\pi_\ell}]\right)\,.
	\end{align}
	Now \(\abs{c(R)}\) is the length of the partition classifier that matches \(R\), while the lengths of the partition classifiers corresponding to each of \(\pi_\ell\) match \(\abs{c(\pi_\ell)}\). Since this length corresponds to some kind of smallness of \(\kappa_t[e_{\pi_\ell}]\), the above lemma together with the moments to cumulants formula allows us to control the size of the non-linear term.
	Indeed, the lemma says that
	for any partition $\pi$ which might yield a non-zero contribution to the sum, we have
	\(\sum_{\ell=1}^{|\pi|}( \len(s_{\pi_\ell})-1) \geq \len(s_{I})-1\).
	\end{remark}
	
	We are now ready to prove the main result.
	
	\begin{proofof}{Theorem \ref{thm:generation-of-alpha-chaotic-bounds}}
	\paragraph{Step 1: Time-evolution of the energy cumulants}
	Let \(n \in \N_1\) a fixed order of cumulants. As noted, the full evolution of the energy cumulant hierarchy at order \(n\) consists of a source term, which can be controlled inductively provided that we have a bound for the lower level cumulants, and a linear term, which couples cumulants of order \(n\) to other cumulants of order \(n\). Since the initial measure \(F_0^N\) is assumed to be permutation invariant, the same holds for \(F_t^N\). Therefore, we can index the cumulants using \emph{partition classifiers}, as the particle labels do not matter.

	If \(s = (s_1,\dots, s_n)\) is a partition classifier, then \eqref{eq:time-evolution-of-energy-cumulants} gives us
	\begin{align}
	\dv{t}\kappa_t[e_s] &= T_t^{\text{break}}(s) + T_t^{\text{fuse}}(s)+ T_{t}^{\text{ex}}(s),
	\end{align}
	where we first set $L=L(s):=\len(s)\le n$ and then define
	\begin{align}\label{eq:Ttbreak}
	T_t^{\text{break}}(s)
	&= \frac{1}{N-1}\sum_{i=1}^L \sum_{j=L+1}^{N} \sum_{\ell=1}^{s_i} \binom{s_i}{\ell} \int_{-\pi}^{\pi}\frac{\rmd \theta}{2\pi} \mean{\wick{e_{s-(\ell \times i)}} P_\theta(v_i,v_j)^\ell} \nonumber \\
	&+\frac{1}{N-1}\sum_{j=1}^L \sum_{i=L+1}^{N} \sum_{\ell=1}^{s_j} \binom{s_j}{\ell} \int_{-\pi}^{\pi}\frac{\rmd \theta}{2\pi} \mean{\wick{e_{s-(\ell \times j)}} Q_\theta(v_j,v_i)^\ell}\,,
	\end{align}
		\begin{align}\label{eq:Ttfuse}
	T_t^{\text{fuse}}(s)
	&= \frac{1}{N-1}\sum_{i=1}^L \sum_{j=1}^{L} \cf{j\neq i} \sum_{\ell=1}^{s_i} \binom{s_i}{\ell} \int_{-\pi}^{\pi}\frac{\rmd \theta}{2\pi} \mean{\wick{e_{s-(\ell \times i)}}P_\theta(v_i,v_j)^\ell} \nonumber \\
	&+\frac{1}{N-1} \sum_{j=1}^{L}\sum_{i=1}^L \cf{j\neq i} \sum_{\ell=1}^{s_j} \binom{s_j}{\ell} \int_{-\pi}^{\pi}\frac{\rmd \theta}{2\pi} \mean{\wick{e_{s-(\ell \times j)}} Q_\theta(v_j,v_i)^\ell}\,,
	\end{align}
	\begin{align}\label{eq:Ttex}
	T_{t}^{\text{ex}}(s)
	&= \frac{1}{N-1}\sum_{i,j=1}^{L}\cf{i\neq j}\sum_{\ell=1}^{s_i} \sum_{\ell' = 1}^{s_{j}} \binom{s_i}{\ell}\binom{s_j}{\ell'}
\int_{-\pi}^{\pi} \frac{\rmd \theta}{2\pi}(-1)^{\ell'}\mean{\wick{e_{s-(\ell \times i) - (\ell' \times j)}} P_{\theta}(v_i,v_j)^{\ell+\ell'}}\,.
	\end{align}
	Note that for any $s\in \mathscr{C}_n$ we have
	$s_i\ge 1$ if $i\le L(s)$, and $s_i=0$ if $L(s)<i\le n$.  The names of these terms stem from the type of partition classifiers that contribute to the linear operator arising from each term: for the first term, the contribution comes from classifiers where one of the components has been ``broken'' into two pieces, for the second term, the contribution comes from ``fused'' components, and for the third term, the contribution comes for classifiers where some components have ``exchanged'' particles with each other.

    To compute the contributions from each term on the evolution of the cumulants, we recall the explicit form of $P_\theta$ and $Q_\theta$ given in \eqref{eq:energy-collision-polynomials-def}.  For any $m\in \N$, we can employ the multinomial theorem and conclude
\[
  P_\theta(v_i,v_j)^m
  = \sum_{a,b,c=0}^m \cf{a+b+c=m}\frac{m!}{a!b!c!} (-1)^a (\sin(\theta))^{2 a} e_i^a
  2^b (\cos(\theta))^b (\sin(\theta))^b v_i^b v_j^b  (\sin(\theta))^{2 c} e_j ^c\,.
\]
Therefore,
\begin{align}\label{eq:expectPm}
&  \int_{-\pi}^{\pi}\frac{\rmd \theta}{2\pi}P_\theta(v_i,v_j)^m
  = \sum_{k,h=0}^m \cf{k+h\le m,\, h\text{ even}}\frac{m!}{k!h!(m-k-h)!} (-1)^k  2^h
e_i^{k+\frac{h}{2}}
  e_j^{m-k-\frac{h}{2}}
  I_{m-\frac{h}{2},\frac{h}{2}}\,,
\end{align}
where we have defined
\begin{align}
	I_{a,b}:=\int_{-\pi}^{\pi}\frac{\rmd \theta}{2\pi}  (\sin(\theta)^2)^{a} (\cos(\theta)^2)^b\, \,, \qquad a,b\in \N_0\,.
	\end{align}
	Clearly, $0\le I_{a,b}\le 1$, and
as we show in Appendix \ref{sec:trigonometric-integrals}, these integrals can also be evaluated explicitly in terms of factorials.
Similar computation for $Q_\theta(v_j,v_i)$ yields

\begin{align}\label{eq:expectQm}
&  \int_{-\pi}^{\pi}\frac{\rmd \theta}{2\pi}Q_\theta(v_j,v_i)^m
%   = \sum_{k,h=0}^m \cf{k+h\le m, h\text{ even}}\frac{m!}{k!h!(m-k-h)!} (-1)^{k+m}  2^h
% e_i^{k+\frac{h}{2}}
%   e_j^{m-k-\frac{h}{2}}
%   I_{m-\frac{h}{2},\frac{h}{2}}
%  \nonumber \\ & \quad
  = \sum_{k,h=0}^m \cf{k+h\le m,\, h\text{ even}}\frac{m!}{k!h!(m-k-h)!} (-1)^{k}  2^h
e_j^{k+\frac{h}{2}}
  e_i^{m-k-\frac{h}{2}}
  I_{m-\frac{h}{2},\frac{h}{2}}\,.
\end{align}

Consider then  $a:=m-k-\frac{h}{2}$ in the above sums.  It is an integer which satisfies $0\le\frac{m-k}{2}\le a\le m-k\le m$, and the sums yield linear combinations of the random variables $e_i^{m-a}e_j^a$ in \eqref{eq:expectPm}
and $e_j^{m-a}e_i^a$ in \eqref{eq:expectQm}.
Therefore,  $T_t^{\text{break}}(s)$ and $T_t^{\text{fuse}}(s)$ only contain expectations of the form
\[
 \mean{\wick{e_{s-(\ell \times i)}}e_i^{\ell-a}e_j^a}\,, \qquad
 \mean{\wick{e_{s-(\ell \times j)}}e_j^{\ell-a}e_i^a}\,.
\]
The second expectation is obtained by swapping $i\leftrightarrow j$ in the first case, and it thus suffices to inspect the first term arising from $P_\theta(v_i,v_j)^\ell$.

Using the truncated moments-to-cumulants formula \eqref{eq:truncated-moments-to-cumulants} and assuming
$0\le a\le \ell$, $1\le \ell\le s_i$, we have
\begin{align}\label{eq:mainbreakexpect}
 & \mean{\wick{e_{s-(\ell \times i)}}e_i^{\ell-a}e_j^a} = \kappa[e_{s-(a\times i)+(a\times j)}]
 + (\text{nonlinear term})\,,
\end{align}
where the nonlinear term contains only cumulants of order strictly less that $n$ and it is a sum of terms each of which is a product of at least two but at most $\ell$ such cumulants.
For $T_t^{\text{break}}(s)$, we have $j>L$, so $s_j=0$.  Hence,
if $a=0$ or $a=s_i$, the partition classifier of $s-(a\times i)+(a\times j)$ is $s$ and, if $0<a<s_i$, the partition classifier of
$s-(a\times i)+(a\times j)$ belongs to
\(\texttt{break}_i(s)\) and thus has a length $L+1$.  In fact, as shown below, this is the only way the length of the partition classifier can increase.

In contrast,
for $T_t^{\text{fuse}}(s)$ we have $1\le j\le L$, $j\ne i$, and $s_j\ge 1$.  Thus only if $a=0$ is
the partition classifier of $s-(a\times i)+(a\times j)$ equal to $s$.  If $0<a< s_i$,
the partition classifier $r'$ of $s-(a\times i)+(a\times j)$ is not $s$ but it belongs to $\mathscr{C}'_n$ and has $\len(r')=L$.
Finally, $a=s_i$ is only possible if $\ell=s_i$.
Then again the partition classifier $r'$ of $s-(a\times i)+(a\times j)$ belongs to $\mathscr{C}'_n$
but it has $\len(r')=L-1$.

For the exchange term, we have $1\le i,j\le L$, with $i\ne j$, and thus $s_i,s_j\ge 1$.
We
use $m=\ell+\ell'$ which satisfies $2\le m \le s_i+s_j$.  Then
$T_{t}^{\text{ex}}(s)$ is a linear combination of
expectations
\begin{align}\label{eq:mainexexpect}
\mean{\wick{e_{s-(\ell \times i) - (\ell' \times j)}} e_i^{m-a}e_j^a}
= \kappa[e_{s-(\ell \times i) - (\ell' \times j)
+((m-a) \times i) + (a \times j)}]
 + (\text{nonlinear term})\,,
\end{align}
where $0\le a\le m=\ell+\ell'$ and the nonlinear term is a polynomial of lower order cumulants with highest possible power now given by $m$.
If follows that if $a<\ell'$, then
$s-(\ell \times i) - (\ell' \times j)
+((m-a) \times i) + (a \times j)=s+((\ell'-a) \times i) - ((\ell'-a) \times j)$ and its partition classifier corresponds to taking part of $s_j$ and adding it to $s_i$.  Similarly, if $a>\ell'$, then
$s-(\ell \times i) - (\ell' \times j)
+((m-a) \times i) + (a \times j)=s-((a-\ell') \times i) + ((a-\ell') \times j)$ which does the opposite.  The length of the so obtained partition classifier is $L$, apart from the extreme cases of $a=0$, $\ell'= s_j$, and $a=m$, $\ell = s_i$, when the length is $L-1$. Finally, if $a=\ell'$, then
$s-(\ell \times i) - (\ell' \times j)
+((m-a) \times i) + (a \times j)=s$.

	\paragraph*{Step 2: Control of the linear semigroup}

	As shown above,
	using the truncated moments-to-cumulants formula \eqref{eq:truncated-moments-to-cumulants}, we can write each of the expectations appearing in \(T^{\text{break}}\), \(T^{\text{fuse}}\) and \(T^{\text{ex}}\) in terms of products of cumulants. Let us denote by \(L_t^{\text{break}}(s), L_t^{\text{fuse}}(s), L_t^{\text{ex}}(s)\) those terms that contain a product involving only one cumulant (which has to be a cumulant of order \(n\)).
	Summaring from above, the first of these is explicitly given by
	\begin{align}
	& L_t^{\text{break}}(s) =
	\frac{1}{N-1}\sum_{i=1}^L \sum_{j=L+1}^{N} \sum_{\ell=1}^{s_i} \binom{s_i}{\ell}
	 \nonumber \\ & \quad\qquad\times
	 \sum_{k,h=0}^\ell
	 \cf{k+h\le \ell,\, h\text{ even}}
	 \frac{\ell!}{k!h!(\ell-k-h)!} (-1)^k  2^h
	 I_{\ell-\frac{h}{2},\frac{h}{2}} \kappa[e_{s-(a\times i)+(a\times j)}]|_{a=\ell-k-\frac{h}{2}} \nonumber \\
	& \qquad + (\text{same term but swap }i\leftrightarrow j) \nonumber \\ & \quad =
	2 \frac{N-L}{N-1}\sum_{i=1}^L\sum_{a=0}^{s_i}
	\kappa[e_{s-(a\times i)+(a\times (n+1))}]
	\nonumber \\ & \qquad\times
	  \sum_{\ell=1}^{s_i} \binom{s_i}{\ell}
	 \sum_{k,h=0}^\ell
	 \cf{k+h\le \ell,\, h\text{ even}}
	 \cf{a=\ell-k-\frac{h}{2}}
	 \frac{\ell!}{k!h!(\ell-k-h)!} (-1)^k  2^h I_{\ell-\frac{h}{2},\frac{h}{2}}\,,
	\label{eq:Lbreak-1}
	\end{align}
 where we have used symmetry to obtain the second simplified formula.
 Similarly, we have
	\begin{align}
	L_t^{\text{fuse}}(s) &=
	\frac{2}{N-1}\sum_{i=1}^L \sum_{j=1}^{L} \cf{j\neq i} \sum_{a=0}^{s_i}
	\kappa[e_{s-(a\times i)+(a\times j)}]
	\nonumber \\ & \qquad\times
	  \sum_{\ell=1}^{s_i} \binom{s_i}{\ell}
	 \sum_{k,h=0}^\ell
	 \cf{k+h\le \ell,\, h\text{ even}}
	 \cf{a=\ell-k-\frac{h}{2}}
	 \frac{\ell!}{k!h!(\ell-k-h)!} (-1)^k  2^h I_{\ell-\frac{h}{2},\frac{h}{2}}\,,
	\label{eq:Lfuse-1}
	\end{align}
	and
	\begin{align}
	L_t^{\text{ex}}(s) &=
	\frac{1}{N-1}\sum_{i,j=1}^{L}\cf{i\neq j}\sum_{\ell=1}^{s_i} \sum_{\ell' = 1}^{s_{j}} \binom{s_i}{\ell}\binom{s_j}{\ell'} (-1)^{\ell'}  \sum_{a=0}^{\ell+\ell'}
	\kappa[e_{s+((\ell'-a) \times i) - ((\ell'-a) \times j)}]\nonumber \\ & \qquad\times
	 \sum_{k,h=0}^{\ell+\ell'}
	 \cf{k+h\le \ell+\ell',\, h\text{ even}}
	 \cf{a=\ell+\ell'-k-\frac{h}{2}}
	 \frac{(\ell+\ell')!}{k!h!(\ell+\ell'-k-h)!} (-1)^k  2^h I_{\ell+\ell'-\frac{h}{2},\frac{h}{2}}\,.
	\label{eq:Lex-1}
	\end{align}	
	
	These can be written in a more suggestive way as 
	\begin{align}\label{eq:LtoCtilde}
	L^{\text{break}}_t(s) &= \sum_{r\in \mathscr{C}_n} \tilde{C}^{\text{break}}_{s,r} \kappa_t[e_r] \nonumber \\
	L^{\text{fuse}}_t(s) &= \sum_{r\in \mathscr{C}_n} \tilde{C}^{\text{fuse}}_{s,r} \kappa_t[e_r] \nonumber \\
	L^{\text{ex}}_t(s) &= \sum_{r\in \mathscr{C}_n} \tilde{C}^{\text{ex}}_{s,r} \kappa_t[e_r],
	\end{align}
	where
	\begin{align}
	& \tilde{C}^{\text{break}}_{s,r} =
	2 \frac{N-L}{N-1}\sum_{i=1}^L\sum_{a=0}^{s_i}
	\cf{r \sim {s-(a\times i)+(a\times (n+1))}}
	\nonumber \\ & \qquad\times
	  \sum_{\ell=1}^{s_i} \binom{s_i}{\ell}
	 \sum_{k,h=0}^\ell
	 \cf{k+h\le \ell,\, h\text{ even}}
	 \cf{a=\ell-k-\frac{h}{2}}
	 \frac{\ell!}{k!h!(\ell-k-h)!} (-1)^k  2^h I_{\ell-\frac{h}{2},\frac{h}{2}}\,,
	\label{eq:tilde_c_sr_break}
	\end{align}
  and similarly for \(L_t^\text{fuse}(s)\) and \(L_t^{\text{ex}}(s)\).

	The break term further simplifies to
	\begin{align}
	\tilde{C}_{s,r}^{\text{break}} &=
	2 \sum_{i=1}^{L(s)}\sum_{a=0}^{s_i}
	\cf{r \sim{s-(a\times i)+(a\times (n+1))}}
	\nonumber \\ & \qquad\times
	  \sum_{\ell=1}^{s_i} \binom{s_i}{\ell}
	 \sum_{k,h=0}^\ell
	 \cf{k+h\le \ell,\, h\text{ even}}
	 \cf{a=\ell-k-\frac{h}{2}}
	 \frac{\ell!}{k!h!(\ell-k-h)!} (-1)^k  2^h I_{\ell-\frac{h}{2},\frac{h}{2}} \nonumber \\
	& \quad -
	2 \frac{L(s)-1}{N-1}\sum_{i=1}^{L(s)}\sum_{a=0}^{s_i}
	\cf{r \sim {s-(a\times i)+(a\times (n+1))}}
	\nonumber \\ & \qquad\times
	  \sum_{\ell=1}^{s_i} \binom{s_i}{\ell}
	 \sum_{k,h=0}^\ell
	 \cf{k+h\le \ell,\, h\text{ even}}
	 \cf{a=\ell-k-\frac{h}{2}}
	 \frac{\ell!}{k!h!(\ell-k-h)!} (-1)^k  2^h I_{\ell-\frac{h}{2},\frac{h}{2}}
	 \nonumber \\
	&=: \tilde{C}_{s,r}^{\text{break};1} + \tilde{C}_{s,r}^{\text{break};2}\,.
	\end{align}
	 In terms of their behavior in the number of particles \(N\), the coefficients \(\tilde{C}^{\text{break};2}_{s,r}\), \(\tilde{C}^{\text{fuse}}_{s,r}\) and \(\tilde{C}^{\text{ex}}_{s,r}\) will be \(O(1/N-1)\).

	Recall that \(\mathscr{C}_n' = \mathscr{C}_n \setminus \{1_n\}\). This is the collection of all other partition classifiers, apart from the completely non-repeated classifier \(1_n\). In Proposition \ref{prop:non-repeated-stronger-claim-for-finite-order}, we  have already established bounds for non-repeated classifiers, so we can treat them as sources.
  We can conclude from the discussion in Step 1 that a non-zero term with \(r=1_n\) can occur
 in the sums in \eqref{eq:LtoCtilde}  only for $\tilde{C}^{\text{break}}_{s,r}$, and even then it must have
 $s=(2,1,1,\ldots,1,0)\in \mathscr{C}'_n$ and $i=1$.  Hence, in all other cases the sums can be reduced to go over $r\in \mathscr{C}'_n$.

	 Let \(M = M_{n,N} \in \mathcal{M}_{\mathscr{C}_{n}', \mathscr{C}_{n}'}\) be the \(\abs{\mathscr{C}_{n}'}\)-dimensional square matrix that describes how the time-evolution of the cumulant vector \((\kappa_t[e_s])_{s\in \mathscr{C}_{n}'}\) indexed by cumulants with a repeat is affected by other cumulants with a repeat of the same order. We can decompose this matrix further into two parts
	\begin{align}
	M = M_{n} + R_{n,N}.
	\label{eq:cumulant-matrix-decomposition}
	\end{align}
	Here \(M_n\) is the matrix with elements \((M_n)_{s,r} = \tilde{C}^{\text{break;1}}_{s,r}\), for \(s, r \in \mathscr{C}'_n\). On the other hand, \(R_{n,N}\) is a perturbation as an operator on the space \((X_{n,N}^\alpha, \norm{\cdot}_\alpha)\).
	
	We want to prove that the operator norm of \(\rme^{tM_{n,N}}\) decays in time, when we consider it as an operator \(X_{n,N}^\alpha \to X_{n,N}^\alpha\), provided that we have picked the number of particles \(N\) to be large enough. The following lemma shows that the main part of the semigroup decays exponentially in the norm \(\norm{\cdot}_\alpha\). Note that we have excluded \(n=1\) from the statements below since \(\mathscr{C}_1' = \emptyset\).

	\begin{lemma}
	For every \(\alpha \in (0,1)\) and \(n^* \in \N\), there exists \(N_0 = N_0(n^*, \alpha) \in \N\), such that
	\begin{align}
		\norm{\rme^{tM_{n}}}_\alpha = \norm{\rme^{tM_n}}_{\alpha, n,N} \leq 10\rme^{-\frac{1}{2}t},
	\end{align}
	whenever \(n \in [2,n^*]\) and \(N \geq N_0\).
	\label{lemma:limit-semigroup-convergence-in-alpha}
	\end{lemma}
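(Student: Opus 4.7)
The plan is to exploit a triangular structure of $M_n$ with respect to the length filtration on $\mathscr{C}'_n$, and to propagate the exponential decay downward from the top length level via Duhamel's formula.

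First, I would decompose $M_n = D + U$ where $D$ is the diagonal. From the explicit formula \eqref{eq:tilde_c_sr_break}, $(M_n)_{s,r}$ is nonzero only when $r$ is the partition classifier of $s - (a \times i) + (a \times (n+1))$ for some $i \in [\len(s)]$ and $0 \le a \le s_i$. For $a = 0$ or $a = s_i$ the multiset is unchanged, so $r = s$; for $0 < a < s_i$ a new positive component at position $n+1$ forces $\len(r) = \len(s) + 1$. Hence $U$ is \emph{strictly length-raising}, and in particular nilpotent. Using the identity $\sum_{\ell=0}^{s_i} \binom{s_i}{\ell}(-1)^\ell I_{\ell,0} = I_{0, s_i}$ (from $(1 - \sin^2\theta)^{s_i} = \cos^{2 s_i}\theta$), the diagonal collapses to
\[
D_s = 2\sum_{i=1}^{\len(s)}\bigl(I_{0, s_i} + I_{s_i, 0} - 1\bigr).
\]
Since $I_{m, 0} = I_{0, m} = \binom{2m}{m}/4^m$, the $s_i = 1$ terms vanish, the $s_i = 2$ terms contribute $-1/2$ each, and $s_i \ge 3$ terms give strictly less than $-1/2$ per index. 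For any $s \in \mathscr{C}'_n$ there is at least one $s_i \ge 2$, giving $D_s \le -1/2$ with equality only at $s_\star := (2, 1, \ldots, 1)$; whenever $\len(s) \le n - 2$ either some $s_i \ge 3$ or at least two indices have $s_i \ge 2$, so $D_s \le -3/4$.

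The top layer is a good place to start: the unique $s_\star$ with $\len(s_\star) = n - 1$ has its only candidate off-diagonal break (taking $a = 1$ on the $s_1 = 2$ component) producing $(1, 1, \ldots, 1) = 1_n$, which is excluded from $\mathscr{C}'_n$. Hence $U$ vanishes at $s_\star$ and $(e^{tM_n})_{s_\star, s_\star} = e^{-t/2}$. For general $k \le n - 2$, I would set $a_k(t) := \sup_{\len(s) = k}|(e^{tM_n}\kappa)_s|(N-1)^{\alpha(k-1)}$ and $\lambda_k := \min_{\len(s) = k}|D_s|$, so that $\lambda_{n-1} = 1/2$ and $\lambda_k \ge 3/4$ for $k \le n - 2$. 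The Duhamel representation coming from $M_n = D + U$ then gives
\[
a_k(t) \le e^{-\lambda_k t} a_k(0) + \norm{U}_\alpha \int_0^t e^{-\lambda_k(t - s)} a_{k+1}(s)\,\rmd s,
\]
where $\norm{U}_\alpha$ is a uniform bound on the operator norm of $U$ between consecutive length layers.

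Although the entries of $U$ are independent of $N$, in $\norm{\cdot}_\alpha$ each off-diagonal entry comes with a weight $(N-1)^{\alpha(\len(s) - \len(r))} = (N-1)^{-\alpha}$. A direct bound on the combinatorial/trigonometric row sum in \eqref{eq:tilde_c_sr_break} gives $\sum_r |\tilde{C}^{\text{break};1}_{s,r}| \le C(n)$ for a constant depending only on $n$ (using $I_{a,b} \in [0,1]$ and a uniform bound on the $(\ell,k,h)$-multinomials), so $\norm{U}_\alpha \le C(n)(N-1)^{-\alpha}$. Using $\int_0^t e^{-\lambda_k(t-s) - s/2}\rmd s \le (\lambda_k - 1/2)^{-1} e^{-t/2} \le 4 e^{-t/2}$ and inducting downward from $a_{n-1}(t) \le e^{-t/2}\norm{\kappa}_\alpha$, one obtains
\[
a_k(t) \le e^{-t/2} \norm{\kappa}_\alpha \sum_{j=0}^{n-1-k} (4\norm{U}_\alpha)^j.
\]
Choosing $N_0 = N_0(n^*, \alpha)$ so large that $4 C(n^*)(N_0 - 1)^{-\alpha} \le 1/2$ makes this geometric sum bounded by $2$ uniformly in $k$ and $n \le n^*$, giving $\norm{e^{tM_n}\kappa}_\alpha \le 2 e^{-t/2}\norm{\kappa}_\alpha$, well within the claimed factor $10$. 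The main technical obstacle is producing the $n$-dependent but $N$-independent row-sum constant $C(n)$; this is routine because the only factors in $\tilde{C}^{\text{break};1}_{s,r}$ are trigonometric integrals bounded by $1$ and binomial/multinomial coefficients whose indices are bounded by $n$.
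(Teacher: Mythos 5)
Your proof is correct and follows essentially the same route as the paper's: you layer $\mathscr{C}'_n$ by classifier length, compute the same diagonal values ($-1/2$ at $(2,1,\dots,1)$, at most $-3/4$ below), observe that the off-diagonal part is strictly length-raising with $\alpha$-norm $O((N-1)^{-\alpha})$, and propagate the decay downward by Duhamel with the same $\int_0^t \rme^{-\frac34(t-s)-\frac12 s}\,\rmd s \le 4\rme^{-t/2}$ estimate and a large-$N_0$ choice. The only difference is bookkeeping: you package the induction as an explicit $M_n = D + U$ nilpotent decomposition with a geometric series, where the paper carries a fixed constant $10$ through the layer induction.
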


	\begin{proofof}{Lemma \ref{lemma:limit-semigroup-convergence-in-alpha}}
	Let \(N_0(n^*, \alpha)\) be large enough so that \(	\frac{80 n^* 5^{n^*}}{N_0^{\alpha}} \leq 9\). Let \(\hat{\kappa}_0 = \hat{\kappa}_0^n = (\kappa_0[e_r])_{r \in \mathscr{C}_n'}\) be the vector of initial repeated energy cumulants.
	\begin{align}
	\hat{\kappa}_t(r) = (\rme^{tM_n}\hat{\kappa}_0)_{r}.
	\end{align}

	Define layer sets \(\mathscr{C}'_{n, \ell}\), \(\ell = 2,\dots, n\) of the space of partition classifiers  iteratively as follows.
	\begin{align}
		\mathscr{C}'_{n,2} &\coloneqq \{(2,1\dots, 1)\} \nonumber \\
		\mathscr{C}'_{n,\ell} &\coloneqq \{r \in \mathscr{C}'_n \colon \text{ there exists } r' \in \mathscr{C}'_{n,\ell-1} \text{ such that } r \succ r'\}\,, \qquad 3\le \ell \le n\,.
	\end{align}
	Note that if \(r \in \mathscr{C}'_{n,\ell}\), then \(\len(r) = n-(\ell-1)\).
	
	We can solve \(\hat{\kappa}_t(r)\) by iteratively solving the following differential equations:
	if \(r \in \mathscr{C}'_{n,\ell+1}\), then
	\begin{align}
	\dv{t}\hat{\kappa}_t(r) = (M_{n})_{r,r}\hat{\kappa}_t(r) + \sum_{r' \in \mathscr{C}'_{\ell}} (M_{n})_{r,r'}\cf{r \succ r'} \hat{\kappa}_t(r')\, .
	\end{align}
	Here
	\begin{align}
	 (M_{n})_{r,r} &=
	  \sum_{i=1}^{L(r)}
	  \sum_{m=1}^{r_i} \binom{r_i}{m}
	 \sum_{k,h=0}^m
	 \cf{k+h\le m,\, h\text{ even}}
	 \left(\cf{0=m-k-\frac{h}{2}}+\cf{r_i=m-k-\frac{h}{2}}\right)
	\nonumber \\ & \quad\times
	 \frac{m!}{k!h!(m-k-h)!} (-1)^k  2^{h+1} I_{m-\frac{h}{2},\frac{h}{2}}
	 \nonumber \\
	 &=\sum_{i=1}^{L(r)}  \sum_{m=1}^{r_i}  \binom{r_i}{m} (-1)^{m} 2 I_{m, 0}  + \sum_{i=1}^n 2 I_{r_i,0}\,.
% 	 \nonumber \\
% 	 &=\sum_{i=1}^n 2\left(\frac{2}{\sqrt{\pi}}\frac{\Gamma(r_i+\frac{1}{2})}{\Gamma(r_i+1)}-1\right).
	\end{align}
Directly from the definition, it follows that
\[
  \sum_{m=0}^{r_i}  \binom{r_i}{m} (-1)^{m} I_{m, 0} = I_{0,r_i}\,.
\]
On the other hand, using the explicit formulae given in Appendix \ref{sec:trigonometric-integrals} we have
\[
 I_{0,m} = I_{m,0} = \frac{(2 m-1)!!}{(2m)!!}\,, \qquad m\in \N\,,
\]
which is a decreasing function of $m$ and has the particular values $I_{1,0}= \frac{1}{2}$ $I_{2,0}= \frac{3}{8}$, and $I_{3,0}= \frac{5}{16}$.  Therefore,
\[
  (M_{n})_{r,r} = 2 \sum_{i=1}^{L(r)} \left(2 I_{r_i,0}-1\right) \le -\frac{1}{2}\sum_{i=1}^{L(r)} \cf{r_i=2} - \frac{3}{4}\sum_{i=1}^{L(r)} \cf{r_i\ge 3}\,.
\]
 Thus, for every \(r \in \mathscr{C}_n'\), we have \((M_n)_{r,r} \leq -\frac{1}{2}\). Moreover, if \(r \neq (2,1,\dots, 1)\), then \((M_n)_{r,r} \leq  -\frac{3}{4}\).

	By using these differential equations as a starting point, we want to prove inductively in $\ell$ that
	\begin{align}
	N^{\alpha(\len(r)-1)}\abs{\hat{\kappa}_{t}(r)} \leq 10\rme^{-\frac{1}{2}t}\norm{\hat{\kappa}_0}_\alpha.
	\end{align}
	From this, the claim that the operator norm of the semigroup is bounded as
	 \begin{align}
	 \norm{\rme^{tM_n}}_\alpha \leq 10 \rme^{-\frac{1}{2}t}
	 \end{align}
	 follows immediately.
		
	For the base case \(r \in \mathscr{C}'_{n,2}\), we have only one partition classifier, \(r = (2,1,\dots, 1)\). In this case, clearly
	\begin{align}
	\hat{\kappa}_t(r) = \rme^{-\frac{1}{2}t}\hat{\kappa}_0(r)\,.
	\end{align}
  For the induction step, let \(r \in \mathscr{C}'_{n,\ell+1}\), $\ell\ge 2$.   The Duhamel formula implies that
	\begin{align}
	\abs{\hat{\kappa}_t(r)} \leq \rme^{(M_{n})_{r,r} t}\abs{\hat{\kappa}_0(r)} + \sum_{r' \in \mathscr{C}'_{n,\ell}} |(M_n)_{r,r'}| \int_{0}^{t} \rme^{(M_n)_{r,r}(t-s)} \abs{\hat{\kappa}_s(r')} \rmd s\,.
	\end{align}
  Applying the induction assumption and noting that $r\ne (2,1,1,\ldots,1,0)$, this implies
  \begin{align}
	\abs{\hat{\kappa}_t(r)}N^{\alpha(\len(r)-1)} &\leq \rme^{-\frac{3}{4}t}
	\norm{\hat{\kappa}_0}_\alpha
	 + \sum_{r' \in \mathscr{C}'_{n,\ell}} |(M_n)_{r,r'}| N^{\alpha(\len(r)-\len(r'))}
	 10 \norm{\hat{\kappa}_0}_\alpha
	 \int_{0}^{t} \rme^{-\frac{3}{4}(t-s)}
	 \rme^{-\frac{1}{2}s}\rmd s
	\nonumber \\
	&\leq \rme^{-\frac{1}{2}t}
	\norm{\hat{\kappa}_0}_\alpha \biggl(1
	 + 40 N^{-\alpha}
	 \sum_{r' \in \mathscr{C}'_{n,\ell}} |(M_n)_{r,r'}|\biggr)
	\,.
	\end{align}

	We estimate the remaining sum over the  absolute values of the matrix elements \((M_n)_{r,r'}\) by using the earlier explicit formulae.  This yields
	\begin{align}
	& \sum_{r' \in \mathscr{C}'_{n,\ell}} |(M_n)_{r,r'}|
	\le
	\sum_{i=1}^{L(r)}\sum_{a=0}^{r_i}
	\sum_{r' \in \mathscr{C}'_{n,\ell}}
	\cf{r' \sim{r-(a\times i)+(a\times (n+1))}}
	\nonumber \\ & \qquad\times
	  \sum_{m=1}^{r_i} \binom{r_i}{m}
	 \sum_{k,h=0}^m
	 \cf{k+h\le m,\, h\text{ even}}
	 \cf{a=m-k-\frac{h}{2}}
	 \frac{m!}{k!h!(m-k-h)!}  2^{h+1} |I_{m-\frac{h}{2},\frac{h}{2}}|
	 \nonumber \\
	&
	\quad
	\le
	\sum_{i=1}^{L(r)}
	  \sum_{m=1}^{r_i} \binom{r_i}{m}
	 \sum_{k,h=0}^m
	 \cf{k+h\le m}
	 \frac{m!}{k!h!(m-k-h)!}  2^{h+1}
	 \nonumber \\ & 	\quad
	\le
	2 \sum_{i=1}^{L(r)}
	  \sum_{m=1}^{r_i} \binom{r_i}{m}
	 4^m \le
	2 \sum_{i=1}^{L(r)} 5^{r_i} \le 2 n 5^n
	\le 2 n^* 5^{n^*}
	 \, .
	\label{eq:break_1_estimates}
	\end{align}
	Therefore, for any \(n \leq n^*\) and \(N \geq N_0(n^*, \alpha)\),
	\begin{align}
	\abs{\hat{\kappa}_t(r)}N^{\alpha(\len(r)-1)} &\leq \rme^{-\frac{1}{2}t}
	\norm{\hat{\kappa}_0}_\alpha \left(1
	 + 80 N^{-\alpha} n^* 5^{n^*}\right)
	\leq 10 \rme^{-\frac{1}{2}t}\norm{\hat{\kappa}_0}_\alpha
	\,.
	\end{align}
	This completes the induction step, and thus also the proof of the lemma.
	\end{proofof}
	
	\begin{lemma}
	For every \(\alpha \in (0,1)\), \(n^* \in \N\), and $\delta>0$, there exists \(N_0 = N_0(n^*,\alpha,\delta)\) such that
	\begin{align}
	\norm{R_{n,N}}_{\alpha} \leq \frac{\delta}{10},
	\end{align}
	for all \(n \in [2, n^*]\) and all \(N \geq N_0\).
	\label{lemma:perturbation-operator-norm-in-alpha}
	\end{lemma}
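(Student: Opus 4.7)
The plan is to use the operator norm estimate
\[
\norm{R_{n,N}}_\alpha \leq \sup_{s \in \mathscr{C}_n'} \sum_{r \in \mathscr{C}_n'} |(R_{n,N})_{s,r}| (N-1)^{\alpha(\len(s) - \len(r))}
\]
recalled earlier, and then show that every matrix element of $R_{n,N}$ carries an explicit factor $1/(N-1)$, while the weight factor $(N-1)^{\alpha(\len(s)-\len(r))}$ can grow at most as $(N-1)^\alpha$. Together these yield a prefactor $(N-1)^{\alpha-1}$, which is the quantitative gain needed since $\alpha<1$.

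By \eqref{eq:cumulant-matrix-decomposition} and the definitions following \eqref{eq:LtoCtilde}, the perturbation splits as
\[
R_{n,N} = \tilde{C}^{\text{break};2} + \tilde{C}^{\text{fuse}} + \tilde{C}^{\text{ex}},
\]
and I will bound the corresponding three row sums one at a time. For each of the three, the explicit formulas \eqref{eq:Lbreak-1}, \eqref{eq:Lfuse-1}, \eqref{eq:Lex-1} show a prefactor $1/(N-1)$; the remaining combinatorial and trigonometric factors (binomial coefficients, multinomials, the integrals $I_{a,b}\in[0,1]$) are bounded in absolute value by a pure constant $K_n$ depending only on $n^*$. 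The number of surviving terms in the inner sums is similarly $n$-bounded, so that each row sum, ignoring the weight factor, is at most $K_{n^*}'/(N-1)$.

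The central accounting is the length change. From the discussion in Step 1 of the previous proof:
\begin{itemize}
\item For $\tilde{C}^{\text{break};2}_{s,r}$, the classifier $r \sim s-(a\times i) + (a\times(n+1))$ has $\len(r)\in\{L(s),L(s)+1\}$, hence $\len(s)-\len(r)\le 0$ and the weight is bounded by $1$.
\item For $\tilde{C}^{\text{fuse}}_{s,r}$, where $j\le L(s)$ and $j\ne i$, the classifier $r$ has $\len(r)\in\{L(s)-1,L(s)\}$ (the drop occurring only in the edge case $a=s_i$), so $\len(s)-\len(r)\in\{0,1\}$.
\item For $\tilde{C}^{\text{ex}}_{s,r}$, reallocating $(\ell'-a)$ between two existing slots $i,j\le L(s)$ yields $\len(r)\in\{L(s)-1, L(s)\}$, hence again $\len(s)-\len(r)\in\{0,1\}$.
\end{itemize}
In every case the weight factor is bounded by $(N-1)^\alpha$.

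Combining these observations gives a uniform bound of the form
\[
\norm{R_{n,N}}_\alpha \leq K_{n^*}''\,(N-1)^{\alpha-1}
\]
for all $n\in[2,n^*]$, where $K_{n^*}''$ depends only on $n^*$. Since $\alpha<1$, one may then pick $N_0=N_0(n^*,\alpha,\delta)$ large enough to guarantee $K_{n^*}''(N-1)^{\alpha-1}\le \delta/10$ for every $N\ge N_0$, completing the proof. The only delicate step is the bookkeeping of the length changes above, in particular identifying precisely which $(a,i,j,\ell,\ell')$ combinations produce the edge-case drop of one in $\len(r)$; the rest is a routine multinomial count absorbed into $K_{n^*}''$.
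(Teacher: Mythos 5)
Your proposal is correct and follows essentially the same route as the paper: the row-sum operator norm estimate for $\norm{\cdot}_\alpha$, the observation that every matrix element of $\tilde{C}^{\text{break};2}$, $\tilde{C}^{\text{fuse}}$ and $\tilde{C}^{\text{ex}}$ carries an explicit $1/(N-1)$ while the length change $\len(s)-\len(r)$ is at most $1$, and $n^*$-dependent combinatorial bounds on the remaining sums, giving $O\!\left((N-1)^{\alpha-1}\right)$ and hence the claim for $N$ large. The paper states the same bound slightly more loosely as $(n^*)^2C^{n^*}\max(N^{-\alpha},N^{\alpha-1})$, which is an inessential difference.
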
	

	\begin{proofof}{Lemma \ref{lemma:perturbation-operator-norm-in-alpha}}
	Similar to the previous proof, the matrix elements \[(R_{n,N})_{r,r'} = \tilde{C}^{\text{break};2}_{r,r'} + \tilde{C}^{\text{fuse}}_{r,r'} + \tilde{C}^{\text{ex}}_{r,r'}\] satisfy
	\begin{align}
	\sum_{r' \in \mathscr{C}'_{n,\ell}}\abs{(R_{n,N})_{r,r'}} \leq \frac{n^2 C^n}{N-1}
	\,
	\end{align}
	Therefore,
	\begin{align}
	\norm{R_{n,N}}_\alpha & \leq \sup_{r} \sum_{r' \in \mathscr{C}_n'} \abs{(R_{n,N})_{r,r'}}N^{\alpha(\len(r)-\len(r'))} \nonumber \\
	& \leq n^{2}C^{n}\max(N^{-\alpha}, N^{-1}, N^{\alpha-1}) \nonumber \\
	&\leq (n^*)^2 C^{n^*} \max(N^{-\alpha}, N^{\alpha-1}).
	\end{align}
	
	Therefore, if both \(N^{-\alpha}\) and \(N^{\alpha - 1}\) are small enough, the result follows. In essence, the number of particles will have to be large enough to satisfy
	\begin{align}
	\log(N) \geq C'n^*
	\end{align}
	for a constant that is otherwise universal but depends on \(\alpha\) and $\delta$.
	\end{proofof}	
		
	We will also need the following lemma, whose proof can be found in \citep[pp. 495--496]{kato_perturbation_1995}.
	\begin{lemma}
	Let \(A,B \colon X \to X\) be bounded operators on a Banach space and $t>0$. If $M>0$ and $\beta\in \R$ are such that
	\begin{align}
	\norm{\rme^{sA}} \leq M\rme^{\beta s}, \quad \text{ for all } s \in (0,t),
	\end{align}
	then
	\begin{align}
	\norm{\rme^{t(A+B)}} \leq M \rme^{(\beta + M\norm{B})t}.
	\end{align}
	\end{lemma}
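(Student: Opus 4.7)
The plan is to prove the bound via the Dyson (perturbation) series obtained by iterating Duhamel's variation-of-constants formula. First I would establish the integral identity
\[ \rme^{t(A+B)} = \rme^{tA} + \int_0^t \rme^{(t-s)A}\, B\, \rme^{s(A+B)}\, \rmd s, \]
which follows because both sides satisfy the ODE $\partial_t U(t)=(A+B)U(t)$ with $U(0)=I$; well-posedness is immediate since $A$ and $B$ are bounded operators on the Banach space.

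Next I would iterate this identity to obtain the Dyson expansion
\[ \rme^{t(A+B)} = \sum_{k=0}^{\infty}\int_{\Delta_k(t)} \rme^{(t-s_k)A}\,B\,\rme^{(s_k-s_{k-1})A}\,B\cdots B\,\rme^{s_1 A}\,\rmd s_1\cdots \rmd s_k, \]
where $\Delta_k(t)=\{(s_1,\ldots,s_k)\in\R^k : 0\le s_1\le\cdots\le s_k\le t\}$ and the $k=0$ term is understood as $\rme^{tA}$. To control the $k$-th term I would take the operator norm inside the integral and apply the hypothesis $\norm{\rme^{sA}}\le M\rme^{\beta s}$ to each of the $k+1$ semigroup factors. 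The arguments of the exponentials telescope to $t$, so the $k$-th term has operator norm at most $M^{k+1}\norm{B}^k\, \rme^{\beta t}\, t^k/k!$, using that the simplex $\Delta_k(t)$ has Lebesgue measure $t^k/k!$.

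Summing the series then yields
\[ \norm{\rme^{t(A+B)}} \le M\rme^{\beta t}\sum_{k=0}^{\infty}\frac{(M\norm{B}\,t)^k}{k!} = M\rme^{\beta t}\rme^{M\norm{B}t} = M\rme^{(\beta+M\norm{B})t}, \]
which is the claimed bound. There is no significant obstacle here: the main care needed is in justifying the convergence of the Dyson series in operator norm and the validity of termwise integration, both of which follow from the summable majorant above, since boundedness of $A$ and $B$ guarantees that all computations take place inside the Banach algebra of bounded operators on $X$.
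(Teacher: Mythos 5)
Your argument is correct and is essentially the standard proof of this perturbation result: the paper itself does not prove the lemma but cites Kato (pp.\ 495--496), where the same Duhamel/Dyson-series iteration with the simplex-volume bound \(t^k/k!\) is used. The only cosmetic point is that the hypothesis is stated on the open interval \((0,t)\), but since \(A\) is bounded the map \(s\mapsto \rme^{sA}\) is norm continuous, so the bound extends to the closed interval and your termwise estimates go through unchanged.
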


	Together, these three lemmas prove that for any given \(\delta > 0\), the full linear semigroup arising in the energy cumulant evolution will satisfy
	\begin{align}
	\norm{\rme^{tM_{n,N}}}_\alpha \leq 10\rme^{-(\frac{1}{2}-\delta)t},
	\label{ineq:linear-semigroup-delta}
	\end{align}
	for all \(n \in [2,n^*]\) and all \(N \geq N_0(n^*, \alpha, \delta)\).

	\paragraph*{Step 3: The full evolution}
	
	Having established the exponential decay for the linear part as well as for the completely non-repeated cumulants, it remains to prove that we can iteratively propagate the bounds over the non-linearity. This will be the main focus of this section.
	Fix \(n^* \in \N\) and \(\delta \in (0,1/8]\), and let \(N_0(n^*,\alpha,\delta)\) be large enough so that the statements in Lemmas \ref{lemma:limit-semigroup-convergence-in-alpha} and \ref{lemma:perturbation-operator-norm-in-alpha} are satisfied. In particular, \eqref{ineq:linear-semigroup-delta} holds. We now prove the claim by induction on the order of the cumulant \(n\). The base case \(n=2\) is clearly true.
	
	Assuming that the claim holds for all orders \(k <n\), let's consider cumulants of order \(n\). Using the notation 
	\begin{align}
	\kappa_t &= (\kappa_t^{n,N}[e_r])_{r \in \mathscr{C}'_n} \nonumber \\
	\mathcal{N}_{<,n}[\kappa_s] &= (\mathcal{N}_{<,n}[\kappa^{n,N}_s](e_r))_{r\in \mathscr{C}'_n}
	\end{align}
  	and 
  	\begin{align}
  	g_{n}(s)=(\cf{r = (2,1,\dots, 1)} \beta_{n,N} \kappa^{n,N}_t[e_{1_n}])_{r \in \mathscr{C}'_n},
  	\end{align}
  	with
  	\begin{align}
  	\beta_{n,N} &=
  	2 \frac{N+1-n}{N-1}\sum_{\ell=1}^{2} \binom{2}{\ell}
	 \sum_{k,h=0}^\ell
	 \cf{k+h\le \ell,\, h\text{ even}}
	 \cf{1=\ell-k-\frac{h}{2}}
	 \frac{\ell!}{k!h!(\ell-k-h)!} (-1)^k  2^h I_{\ell-\frac{h}{2},\frac{h}{2}}
	 \\ &
	  = 2 \frac{N+1-n}{N-1}
	  \left(2 I_{1,0}-2 I_{2,0}+4 I_{1,1}\right)
	  = \frac{3}{2} \frac{N+1-n}{N-1}\,.
  	\end{align}
  	This satisfies, for all allowed \(n, N\),
  	\begin{align}
  	\abs{\beta_{n,N}}\leq \frac{3}{2}\,.
  	\end{align}

  	Using the known semigroup bounds and the assumption about the initial data in the Duhamel formula implies that the solution satisfies the following estimate:
	\begin{align}
	\norm{\kappa_t}_\alpha \leq 10\rme^{-(\frac{1}{2}-\delta)t} B^n (n-1)! N^{\gamma_n} + 10 \int_{0}^{t} \rme^{-(\frac{1}{2}-\delta)(t-s)} \norm{\mathcal{N}_{<,n}[\kappa_s]}_\alpha \rmd s + 10\int_{0}^t  \rme^{-(\frac{1}{2}-\delta)(t-s)} \norm{g_n(s)}_\alpha \rmd s
	\end{align}
	By Proposition \ref{prop:non-repeated-stronger-claim-for-finite-order}, the last term satisfies, with $C_1=C(B)$,
	\begin{align}
	10 \int_{0}^{t} \rme^{-(\frac{1}{2}-\delta)(t-s)} \norm{g_n(s)}_\alpha \rmd s &\leq \frac{15}{(N-1)^{\alpha}}
	C_1^n (n-1)! \int_{0}^t  \rme^{-(\frac{1}{2}-\delta)(t-s)} (\rme^{-\frac{n}{4}s}N^{\gamma_n} + 1)
%
% 	C^n (n-1)! \rme^{-\left(\frac{1}{2}-\delta\right) t} \int_{0}^t \left(\rme^{\left(\frac{1}{2}-\delta\right) s}\rme^{-\frac{n}{4}s}N^{\gamma_n}+ 1 \right) \rmd s
	\nonumber \\
	&\leq \frac{15}{(N-1)^{\alpha}}C_1^n(n-1)! \left(\rme^{-(\frac{1}{2}-\delta)t}N^{\gamma_n} + 4\right)\,.
	\end{align}

	Finally, we need to treat the non-linear term. The components of the vector \((\mathcal{N}_{<,n}[\kappa_s](r))_{r\in \mathscr{C}_n'}\) are given by
	\begin{align}
	\mathcal{N}_{<,n}[\kappa_s](r) = \mathcal{N}^{\text{break}}_{<,n}[\kappa_s](r) + \mathcal{N}^{\text{fuse}}_{<,n}[\kappa_s](r) + \mathcal{N}^{\text{ex}}_{<,n}[\kappa_s](r)\,.
	\end{align}
	Here, we use the notation \(\mathcal{N}_{<,n}^{\circ} \in X_{\alpha,n}\) to stand for the nonlinear parts, so that
	\begin{align}
\mathcal{N}_{<,n}^{\circ}[\kappa_s](r) = T^{\circ}_s(r) - (L^{\circ}\kappa_s)_r, \quad \circ \in \{\text{break}, \text{fuse}, \text{ex}\}\,.
	\end{align}
	
	Let us estimate the norm of these vectors individually. Note that
	by lemma \ref{lemma:nonlinearity-coloring} and
	\eqref{eq:mainbreakexpect}, the nonlinear contribution from the expectations appearing in \(\mathcal{N}_{<,n}^{\text{break}}[\kappa_s](r)\) are bounded by
	\[\frac{1}{(N-1)^{\alpha (\len(r)-1)}}\sum_{\pi \in \mathcal{P}([n])} \cf{\abs{\pi} \geq 2}\prod_{A \in \pi} \norm{\kappa^{\abs{A},N}_s}_{\alpha, \abs{A}}\,.\]
	Similarly, the expectations appearing in \(\mathcal{N}_{<,n}^{\text{fuse}}[\kappa_s](r)\) and \(\mathcal{N}_{<,n}^{\text{ex}}[\kappa_s](r)\), using \eqref{eq:mainexexpect}, are bounded by
	\begin{align}
	\frac{1}{(N-1)^{\alpha(\len(r)-2)}} \sum_{\pi \in \mathcal{P}([n])} \cf{\abs{\pi} \geq 2} \prod_{A \in \pi} \norm{\kappa_s^{\abs{A},N}}_{\alpha, \abs{A}}\,,
	\end{align}
	which has an additional factor of
	$(N-1)^\alpha$.  The extra factor will be compensated by the prefactor \(\frac{1}{N-1}\) coming from the weights in these two terms.
	For applications of lemma \ref{lemma:nonlinearity-coloring}, we note that if e.g.\ $\ell=s_i$, in which case the length of the first classifier is reduced by one, then $i$ is no longer shared between the two terms, so also $m$ is reduced by one.

	By the induction assumption, here
	\begin{align}
	\norm{\kappa_s^{\abs{A},N}}_{\alpha, \abs{A}} \leq C^{\abs{A}^2} \left(N^{\gamma_{\abs{A}}} \rme^{-\frac{1}{4}s} + 1\right) |A|!\,.
	\end{align}
	Consider then $\pi=(A_\ell)_{\ell=1}^k \in \mathcal{P}([n])$ and assume $k\ge 2$.
	Let $m_\ell=|A_\ell|$ denote the sizes of the clusters in the partition which satisfy $m_\ell\ge 1$ and $\sum_{\ell=1}^k m_\ell=n$.
	Thus $n^2-\sum_{\ell=1}^k m_\ell^2 =
	\sum_{\ell=1}^k m_\ell \sum_{\ell'\ne \ell} m_{\ell'}\ge \sum_{\ell=1}^k m_\ell =n$
	and $\sum_{\ell=1}^k \gamma_{m_\ell} = c(n-k)\le\gamma_n$.
	Hence, we obtain
	\begin{align}
	& \sum_{\pi \in \mathcal{P}([n])} \cf{\abs{\pi} \geq 2} \prod_{A \in \pi} \norm{\kappa_s^{\abs{A},N}}_{\alpha, \abs{A}}
	\nonumber \\ & \quad
	\le C^{n^2}C^{-n}  \left(N^{\gamma_{n}} \rme^{-\frac{1}{4}s} + 1\right)
	\sum_{\pi \in \mathcal{P}([n])} \cf{\abs{\pi} \geq 2} 2^{|\pi|}\prod_{A \in \pi} |A|!
	\nonumber \\ & \quad
	\le C^{n^2}C^{-n}  \left(N^{\gamma_{n}} \rme^{-\frac{1}{4}s} + 1\right) 2^n \rme^n n!\,,
	\end{align}
    where in the last step we have used the combinatorial estimate whose proof can be found e.g.\ from the proof of Lemma 7.3 in \cite{lukkarinen_weakly_2011}.  Adding the earlier estimates for the size and number of terms in each of the three cases, we thus obtain
	\begin{align}
	&10  \int_{0}^t \rme^{-\left(\frac{1}{2}-\delta\right)(t-s)} \norm{\mathcal{N}_{<,n}[\kappa_s]}_\alpha \rmd s
	 \nonumber \\
	&\leq 10 n^2 C^{-n}C^{n^2} (C')^n n! \int_{0}^t \rme^{-\left(\frac{1}{2}-\delta\right)(t-s)}
	 \left(N^{\gamma_{n}} \rme^{-\frac{1}{4}s} + 1\right)\rmd s
	 \nonumber \\
	&\leq 80 n^2 C^{-n}C^{n^2} (C')^n n!
	 \left(N^{\gamma_{n}} \rme^{-\frac{1}{4}t} + 1\right)\,,
	\end{align}
	where $C'$ is some pure constant.  Hence, assuming that $C$ is large enough, for example if $C\ge (C'+C_1(B)+B) 200 (n^*)^2$, we can sum up the derived estimates for the three terms and conclude the induction step.
	\end{proofof}	
	
	\begin{remark}
	The theorem might be explained in words as saying that the finite order cumulants exhibit generation of chaos as soon as the system has enough interacting particles. 
	Since for all \(N \geq N_0\), \(n \in [2,n^*]\), the norm
	\begin{align}
	\norm{\kappa_t}_{\alpha,n,N} \leq n! C^{n^2}\left(N^{\gamma_n} \rme^{-\frac{t}{4}} + 1\right),
	\end{align}
	it follows that
	\begin{align}
	\abs{\kappa_t(e_r)} \leq \frac{C(n,B)}{(N-1)^{\alpha(\len(r)-1)}}
	\end{align}
	as soon as \(t \geq 4\gamma_n \log(N)\). 
	\end{remark}

 \subsection{Convergence to stationary values for repeated energy cumulants}

{\em Proof of Theorem \ref{thm:convergence-to-stationary-non-chaotic}:}

\medskip

Given \(n, N\), let \((\bar{\kappa}^{n,N}[e_r])_{r \in \mathscr{C}'_n}\) correspond to the cumulants under the uniform measure on the sphere. This is a stationary solution to the cumulant hierarchy and thus it satisfies
\begin{align}\label{eq:kappabarprop}
(M_n + R_{n,N})\bar{\kappa} + \mathcal{N}_{<,n}[\bar{\kappa}] + \bar{s}_n = 0, \quad r \in \mathscr{C}'_n\,,
\end{align}
where the source term coming from non-repeated sequences has already been estimated in Sec.\ \ref{sec:nonrepequil}.
For fixed \(n\), these can be explicitly constructed, starting from \(n = 2\).

Let \(q^n_t(r) = q_t^{n,N}(r) \coloneqq \kappa^{n,N}_t[e_r]- \bar{\kappa}^{n,N}[e_r]\), and similarly for a sequence of particle labels \(A\), we let \(q^n_t(A) \coloneqq \kappa^{n,N}[e(t)_A] - \bar{\kappa}^{n,N}[e_A]\). Provided that there are enough particles in the system, we will see that the equilibrium cumulants are bounded as \(\norm{\bar{\kappa}^{n,N}[e_r]}_{\alpha} \leq n! C^{n^2}\). Indeed, we have already proven that
\begin{align}
\limsup_{t\to \infty} \norm{\kappa^{n,N}_t[e_r]}_\alpha \leq n! C^{n^2}.
\end{align}
In what follows, the aim is to prove that \(\norm{q_t^n}_\alpha \to 0\). To do this, we need to control the linear part of the system and then we need bounds like \(\norm{\bar{\kappa}^{m,N}[e_r]}_\alpha \leq m! C^{m^2}\) for all \(m < n\). Once we establish \(\norm{q_t^n}_\alpha \to 0 \), it then follows that
\begin{align}
\norm{\bar{\kappa}^{n,N}}_\alpha  \leq \norm{q_t^n}_\alpha + \norm{\kappa_t^{n,N}}_\alpha, \quad t \geq 0, 
\end{align} whereby \(\norm{\bar{\kappa}^{n,N}[e_r]}_\alpha \leq n! C^{n^2}\) as well. Therefore, we can prove the desired bounds for the stationary cumulants alongside the induction steps.

Let us now move to the more detailed analysis.
The goal is to prove that that there is a constant \(C\) and a minimal number of particles \(N_0\) such that for all \(N \geq N_0\), the estimate
\begin{align}
\norm{q^n_t}_{\alpha} \leq n! C^{n^2}N^{c(n-1)}\rme^{-\frac{t}{4}}
\end{align}
holds.

We recall that
\begin{align}
\dv{t}\kappa_t[e_r] = T_t^{\text{break}}(r) + T_t^{\text{fuse}}(r) + T_{t}^{\text{ex}}(r)\,,
\end{align}
where $T_t^{\text{break}}$, $T_t^{\text{fuse}}$, and $T_{t}^{\text{ex}}$, are defined in \eqref{eq:Ttbreak}, \eqref{eq:Ttfuse},
and \eqref{eq:Ttex}, respectively.

As shown earlier, each of these terms is a finite order polynomial of the cumulants.  More precisely, they are a linear combination of products of cumulants where the products are determined by some partition $\pi$ of the appropriate index set $R$.  Explicitly, for each partition $\pi$ the product is given by
\begin{align}
& \prod_{A \in \pi} \kappa_t[e_A] = \prod_{A \in \pi} (\bar{\kappa}+q_t)[e_A]
= \prod_{A \in \pi} \bar{\kappa}[e_A]
+ \sum_{\emptyset\ne\pi'\subset \pi}
\prod_{A \in \pi'} q_t(A)
\prod_{A \in \pi\setminus \pi'} \bar{\kappa}[e_A]\,.
\end{align}

The terms which contain no $q_t$ must sum to zero, by \eqref{eq:kappabarprop}, and the terms for which $|\pi|=1$ will result in the action of the earlier discussed linear operator on $q_t$.
If $|\pi|>1$, every $A\in \pi$ is lower order, so we can use the induction assumption to estimate their magnitude:
\begin{align}
\abs{\prod_{A \in \pi'} q_t(A)
\prod_{A \in \pi\setminus \pi'} \bar{\kappa}[e_A]} \leq \prod_{A \in \pi\setminus \pi'}\left(|A|! C^{\abs{A}^2} N^{-\alpha(\len(A) - 1)} \right)\prod_{A' \in \pi'}N^{-\alpha(\len(A')-1)}\norm{q_t}_{\alpha, \abs{A'}}
\end{align}
Consequently, since $|\pi'|\ge 1$,
\begin{align}
& N^{\alpha(\len(r)-1)}\abs{\prod_{A \in \pi'} q_t(A)
\prod_{A \in \pi\setminus \pi'} \bar{\kappa}[e_A]}
\nonumber \\ & \quad
\leq N^{\alpha(\abs{\pi} + \len(r)-1 -\sum_{A\in \pi}\len(A))}\rme^{-\frac{t}{4}} N^{c(n-1)}
\prod_{A \in \pi}\left(|A|! C^{\abs{A}^2}  \right)
\, .
\end{align}
Here \(\abs{\pi} + \len(r)-1-\sum_{A\in \pi}\len(A) \leq 1\) by Lemma \ref{lemma:nonlinearity-coloring}, and it is equal to $1$ only for terms which have an explicit prefactor $1/(N-1)$, as explained in the previous section.
The remaining sum over $\pi'$ can be estimated trivially by $2^{|\pi|}$.
Therefore, we can proceed as earlier, and conclude the induction step
along the same lines as in the proof of Theorem \ref{thm:generation-of-alpha-chaotic-bounds}.

	\section{Accuracy of the Boltzmann--Kac hierarchy}\label{sec:KacBEaccuracy}
	
As detailed in the Introduction, a key motivation for this study of the cumulant hierarchy of the stochastic Kac model is the question of applicability and accuracy of kinetic theory models, in general.
As derived already by Kac in his original work, for the present model and using our choice of units, the evolution of the probability density of a (randomly chosen) particle, \(f_t(v)\), could be approximated by solutions of the following
Boltzmann--Kac equation:
\begin{align}
\partial_t f_t(v)= 2\int_{-\pi}^{\pi}  \int_{\R}\left(f_t(v')f_t(w')-f_t(v)f_t(w)\right) \rmd w \frac{\rmd \theta}{2\pi}.
\end{align}
The velocities in the gain term are given
\begin{align}
v' &= v'(v,w;\theta) = v \cos(\theta) + w\sin(\theta) \\
w' &= w'(v,w;\theta) = v\sin(\theta) - w \cos(\theta),
\end{align} or equivalently in the matrix form:
\begin{align}
\bar{v}' = \begin{pmatrix}
v' \\
w'
\end{pmatrix} &= \begin{pmatrix}
\cos(\theta) &\sin(\theta) \\
-\sin(\theta) &\cos(\theta)
\end{pmatrix}\begin{pmatrix}
v \\
w
\end{pmatrix} = A(\theta)\bar{v}.
\end{align}
We recall that the matrix \(A(\theta)\) is orthogonal and equals the two-dimensional rotation present in $R_{i,j}(\theta)$.
Since the evolution equation is symmetric under swap of $v'$ and $w'$, one can replace above $A(\theta)$ by its transpose, as is also often used in the definition of the collision operator.

% has an inverse matrix, given by its transpose
% \begin{align}
% (A(\theta))^{-1} = \begin{pmatrix}
% \,\,\,\,\,\cos(\theta) & \sin(\theta) \\
% -\sin(\theta) & \cos(\theta)
% \end{pmatrix}
% \end{align}

While the above differential equation describes the time-evolution of the density \(f_t\), it also allows studying evolution of general Radon probability measures $\mu_t$ on $\R$, via the dual description using $\inner{\phi}{\mu_t} := \int_\R\mu_t(\rmd v) \phi(v)$ for which
\begin{align}\label{eq:weakKacBE}
\dv{t} \inner{\phi}{\mu_t} = -2 \inner{\phi}{\mu_t} + 2\inner{\phi}{\mu_t \circ \mu_t}, \quad \phi \in C_c(\R).
\end{align}
The measure \(\mu_t \circ \mu_t\) is determined by mixing the product measure $\mu_t \otimes \mu_t$ and taking a marginal:
\begin{align}
\inner{\phi}{\mu_t \circ \mu_t} &= \int_{-\pi}^{\pi}\int_{\R^2}\phi((A(\theta)^{-1}\bar{v})_1) (\mu_t \otimes \mu_t) (\rmd \bar{v}) \frac{\rmd \theta}{2\pi} \nonumber \\
%&= \int_{-\pi}^{\pi} \int_{\R} \int_{\R} \phi(\cos(\theta) v + \sin(\theta) w) \mu_t(\rmd v)\mu_t(\rmd w) \frac{\rmd \theta}{2\pi} \nonumber \\
&= \int_{-\pi}^{\pi} \int_{\R} \int_{\R} \phi(\cos(\theta) v - \sin(\theta) w) \mu_t(\rmd v)\mu_t(\rmd w) \frac{\rmd \theta}{2\pi}.
\end{align}
This is clearly a positive linear functional of $\phi$ and, hence, $\mu_t \circ \mu_t$ is uniquely determined as a Radon measure by Riesz--Markov--Kakutani representation theorem.  One can also check that if $\mu_t(\R)=1$ and $\phi=1$, the right hand side vanishes, so it is consistent to require that a solution indeed would remain a probability measure.  Finally, assuming that the second moment is finite and choosing $\phi(v)=v^2$ in the above shows that also the mean energy is conserved: $\E_{\mu_t}[e_1] =\E_{\mu_0}[e_1]$.

We will here only consider such energy and probability conserving solutions to the equation (\ref{eq:weakKacBE}).  Although not worked out in full detail for the present Kac collision operator, it is mentioned in the beginning of Section 4 in
\cite{toscani_probability_1999} that such an evolution problem is well-posed: there exists a unique global weak solution for any initial data given by a non-negative probability measure with finite variance, and these solutions conserve normalization and energy.  For convenience, and to allow direct comparison of the techniques used above for the cumulant hierarchy, we assume also that $\mu_t$ is a solution to the above evolution equation which has exponential moments for the energy $v^2$, uniformly bounded as in  \eqref{eq:momgenbound}.  In fact, this is not a restriction, as we will prove in Lemma \ref{th:Kacexpmomlemma} at the end of this Section.

For solutions of this type, we can apply the argument used in Sec.\ \ref{sect:Cumulants-in-the-Kac}, and conclude that
\begin{align}
\dv{t}\log\E_{\mu_t}[\rme^{\xi \cdot v^2}] = 2\int_{-\pi}^{\pi} \frac{\rmd \theta}{2\pi} \E_{\mu_t\otimes \mu_t}[\rme^{\xi (-\sin(\theta)^2 e_1 + 2\cos(\theta)\sin(\theta)v_1 v_2 + \sin(\theta)^2 e_2)}G^{\mu_t\otimes \mu_t}_{w}(\xi; e_1)] -2\,.
\end{align}
Here we have used the fact that
\begin{align}
\E_{\mu_t \otimes \mu_t}[g(v)] = \E_{\mu_t}[g(v)]
\end{align}
for all continuous non-negative functions \(g\) to obtain the identity 
\(G_w^{\mu_t}(\xi; e_1) = G_w^{\mu_t \otimes \mu_t}(\xi; e_1)\).
Correspondingly, the cumulants of this hierarchy satisfy
\begin{align}\label{eq:Ftildecumulanthierarchy}
\dv{t}\kappa_t^n(e_1) = 2 \sum_{\ell=1}^n \binom{n}{\ell} \int_{-\pi}^{\pi} \frac{\rmd \theta}{2\pi}  \E_{\mu_t \otimes \mu_t}[P_\theta(v_1,v_2)^\ell \wick{e_1^{n-\ell}}],
\end{align}
where, as before,
\begin{align}
P_\theta(v_1,v_2) = -\sin(\theta)^2 e_1 + 2\cos(\theta)\sin(\theta) v_1v_2 + \sin(\theta)^2 e_2.
\end{align}
We recall that then, explicitly,
\begin{align}
&
 \int_{-\pi}^{\pi} \frac{\rmd \theta}{2\pi}
P_\theta(v_1,v_2)^\ell = \sum_{\ell_1=0}^{\ell} \sum_{\ell_2 = 0}^{\ell-\ell_1} \cf{\ell_2 \in 2\N} \binom{\ell}{\ell_1}\binom{\ell-\ell_1}{\ell_2}2^{\ell_2} (-1)^{\ell_1}
I_{\ell_2/2, \ell-\ell_2/2}\,
e_1^{\ell_1 + \ell_2/2}e_2^{\ell-\ell_1-\ell_2/2} \nonumber \\
& \quad = \sum_{a=0}^\ell C(\ell,a) e_1^{a}e_2^{\ell-a}
\,,
\end{align}
where for $0\le a\le \ell$, we have set
\[
C(\ell,a) := \sum_{\ell_1 = \max(0,2a-\ell)}^{a} (-1)^{\ell_1} 2^{2a-2\ell_1-2\ell}\frac{(2\ell+2\ell_1-2 a)!}{(\ell + \ell_1 - 2a)! \ell_1!(a-\ell_1)!(\ell-a+\ell_1)!}\,.
\]
One immediate consequence of the fact that we are taking cumulants with respect to the product measure \(\mu_t \otimes \mu_t\) is that in writing the expectations
\begin{align}
\E_{\mu_t \otimes \mu_t}[e_1^{\ell_1 + \ell_2/2}e_2^{\ell-\ell_1-\ell_2/2}\wick{e_1^{n-\ell}}]
\end{align}
in terms of the cumulants, all partitions \(\pi\) for which at least one of the clusters contains both \(e_1\) and \(e_2\) is not only small but even vanishes.  Hence, the rate function depends only on the one-particle cumulants.

Let us point out that the above rate function is very similar
to what we have obtained for completely repeated cumulants, i.e., the cumulants of the one-particle marginal distribution, in the Kac model.
Namely, for the completely repeated cumulant \(s = (n,0,\dots, 0)\), we have
	\begin{align}
	\dv{t}\kappa_t[e_s] &= T_t^{\text{break}}(s) \\
	&= \frac{N-1}{N-1}\sum_{\ell=1}^{n}\binom{n}{\ell}\int_{-\pi}^{\pi} \frac{\rmd \theta}{2\pi} \E_{F_t^N}[(P_\theta(v_1,v_2)^\ell)\wick{e_1^{n-\ell}}] \\
	& \quad + \frac{N-1}{N-1} \sum_{\ell=1}^{n}\binom{n}{\ell} \int_{-\pi}^{\pi} \frac{\rmd \theta}{2\pi} \E_{F_t^N}[(Q_\theta(v_1,v_2)^\ell)\wick{e_1^{n-\ell}}] \\
	&= 2 \sum_{\ell=1}^n \binom{n}{\ell} \int_{-\pi}^{\pi} \frac{\rmd \theta}{2\pi} \E_{F_t^N}[P_\theta(v_1,v_2)^\ell \wick{e_1^{n-\ell}}].
	\end{align}
where we have recalled $\int_{-\pi}^{\pi} \frac{\rmd \theta}{2\pi}
  Q_\theta(v_1,v_2)^\ell =  \int_{-\pi}^{\pi} \frac{\rmd \theta}{2\pi}
 P_\theta(v_1,v_2)^\ell $.
Since \(F_t^N\) is not factorized, the linear term includes cumulants of the form \(\kappa_t^n(e_1,\cdots, e_1,e_2,\cdots e_2)\). These will, however, be of lower order, and the rest of the linear part yields an operator which is identical to the one in the Boltzmann case.
We also recall that the prefactor of \(\kappa_t[e_s]\) on the right hand side is equal to $-a_n$ where
	\begin{align}
	a_n := 2\left(1-2 I_{n,0}\right), \quad I_{n,0} = \int_{-\pi}^{\pi} \frac{\rmd \theta}{2\pi} \sin(\theta)^{2n}= 2^{-2n} \binom{2n}{n}\,.
	\end{align}
	If $n\ge 2$, we have $I_{n,0} \le \frac{3}{8}$, and thus then $ a_n\ge \frac{1}{2}$.

	\begin{proofof}{Theorem \ref{thm:kinetic-accuracy}}	
	Having fixed the number of particles \(N\) and a symmetric initial data \(F_0^N\) we obtain the measures
	\(F_t^N\), $t\ge 0$, from the Kac process.  We first wait a time $t_0\ge 0$ and then use the first marginal of $F_{t_0}^N$ to define an initial measure $\mu_0$ for the Boltzmann--Kac equation for which $T=t-t_0\ge 0$ serves that as the time parameter.
	This is a Borel measure with exponential moments and thus, as explained above and using the results of \cite{toscani_probability_1999} and Lemma \ref{th:Kacexpmomlemma},
	there exists a unique global weak solution  $(\mu_T)_{T\ge 0}$ to the Boltzmann--Kac equation and its cumulants satisfy
	the evolution hierarchy in (\ref{eq:Ftildecumulanthierarchy}).
	We use these measures to define the symmetric product measures $\tilde{F}^N_T := \otimes_{i=1}^N \mu_T$ on $\R^N$,
	and denote its joint energy cumulants by
    \((\tilde{\kappa}^{n,N}_T[e_s])_{s \in \mathscr{C}_n, n\in \N}\).
	If $s$ is not fully repeating, we have $\tilde{\kappa}^{n,N}_T[e_s]=0$, and for a fully
	repeating cumulant of $n$ elements we have $\tilde{\kappa}^{n,N}_T[e_s]=\kappa_T^n(e_1)$ since all one-particle marginals of
	$\tilde{F}^N_T$ are equal to $\mu_T$.  The rate function of the fully repeating cumulants is thus given by the right hand side of (\ref{eq:Ftildecumulanthierarchy}).  We can even write it in a more familiar closed form for $\tilde{F}^N_T$ after we note that
    for any two-particle observable $g$ we have
    \[
     \E_{\mu_t \otimes \mu_t}[g(e_1,e_2)]=\E_{\tilde{F}^N_T}[g(e_1,e_2)]= \frac{1}{N(N-1)}\sum_{i,j=1}^{N} \cf{i\ne j}\E_{\tilde{F}^N_T}[g(e_i,e_j)]
     \,,
    \]
    which for polynomial observables $g$ may be expanded in terms of cumulants of $\tilde{F}^N_T$.

	To facilitate the comparison, let the joint cumulants of $F_{t_0+T}$ be denoted by
    \((\kappa^{n,N}_T[e_s])_{s \in \mathscr{C}_n, n\in \N}\), and
    the goal is to control
    \[
     \delta^{n,N}_T[e_s] \coloneqq
     \kappa^{n,N}_{t_0+T}[e_s] - \tilde{\kappa}^{n,N}_T[e_s]\,.
    \]

    If $s$ is not fully repeating, we have $\tilde{\kappa}^{n,N}_T[e_s]=0$, and thus an answer follows directly from the generation of chaos bounds in Theorem \ref{thm:generation-of-alpha-chaotic-bounds} assuming that the initial data $F^N_0$ satisfies Assumption \ref{assumption:chaos-bounds} for some constant $B\ge 1$,
    with fixed $c,\alpha,n^*$,
    and the number of particles is large enough, $N\ge N_0$.
    Namely, then there is a constant \(C\), depending only on \(B\), such that for all
	\(n \leq n^*\),
		\begin{align}
		     |\delta^{n,N}_T[e_s]|\le
		\norm{\kappa_{t_0+T}^{n,N}}_\alpha (N-1)^{-\alpha (\len(s)-1)}
		\leq (N-1)^{-\alpha} C^{n^2} n! \left(N^{c(n-1)}\rme^{-\frac{1}{4}(t_0+T)} + 1\right).
	\end{align}
	Now $t_0$ is assumed to be large enough so that $N^{c(n-1)}\rme^{-\frac{1}{4}t_0}\le 1$ for any $n\le n^*$.
	Then, using the generation of chaos bounds for $t=T+t_0\ge t_0$, we find
    \begin{align}
		     |\delta^{n,N}_T[e_s]|
		\leq 2 (N-1)^{-\alpha} C^{n^2} n! ,
	\end{align}
	for all $n\le n^*$ and $T\ge 0$, proving the claim in the Theorem for repeating $s$.

	It remains to control the case with $\len(s)=1$, i.e., the completely repeated cumulants.  Now the earlier estimates do not provide decay in $N$, instead we merely have a bound which is uniform in $T$ and $N$:
	\begin{align}
		     |\kappa^{n,N}_{t_0+T}[e_{s}]|
		     \leq C^{n^2} n! \left(N^{c(n-1)}\rme^{-\frac{1}{4}(t_0+T)} + 1\right) \le 2 C^{n^2} n! \,.
	\end{align}
    Our goal, and induction assumption, is to prove that nevertheless also for these completely repeating cumulants
    \begin{align}\label{eq:deltangoal}
		     |\delta^{n,N}_T[e_s]|
		\leq 2 (N-1)^{-\alpha} C^{n^2} n! \,.
	\end{align}
	The bound is trivially true in two cases: first, if $T=0$, then by construction the marginal of $F^N_0$ is $\mu_0$, and thus the initial cumulants coincide and thus $\delta^{n,N}_0[e_s]=0$ if $s$ is completely repeating.  In addition, both evolutions conserve mean energy, and thus
	$\delta^{n,N}_T[e_s]=0$ for all $T$ if $n=1$.  Hence, the induction assumption holds at $n=1$.

	Since the first marginal of $\tilde{F}^N_T$ is $\mu_T$, we have then $\tilde{\kappa}^{n,N}_T[e_s]=\kappa_T^n(e_1)$,
	and thus for $s=(n,0,\dots, 0)$,
	\[
	 \delta_{n,N}(T) \coloneqq \delta^{n,N}_T[e_s]
	 = \kappa^{n,N}_{t_0+T}[e_{s}]- \kappa_T^n(e_1)\,.
	\]
	At $T=0$, by construction, $\delta_{n,N}(T)=0$. For $T>0$,
    we can rely on the previous discussion, and conclude that these differences satisfy an evolution equation
    \[
    	\dv{T}\delta_{n,N}(T) =
    	2 \sum_{\ell=1}^n \binom{n}{\ell}
    	\sum_{a=0}^\ell C(\ell,a)
    	\left( \E_{F_{t_0+T}^N}[ \wick{e_1^{n-\ell}}e_1^{a}e_2^{\ell-a}]
    	-  \E_{\tilde{F}_{T}^N}[\wick{e_1^{n-\ell}}e_1^{a}e_2^{\ell-a} ]\right)\,.
    \]
    Given $n,a,\ell$, let $\mathcal{P}'(n,a,\ell)$ denote those partititions of the sequence $(e_1,\ldots,e_1,e_2,\ldots,e_2)$ of
    $(n-\ell+a) + (\ell-a)=n$ elements in which there is no cluster formed out of the first $n-\ell$ elements.  This removes all partitions forbidden by the above Wick regularization from the moments-to-cumulants formula, and we have
    \[
     \E_{F_{t_0+T}^N}[ \wick{e_1^{n-\ell}}e_1^{a}e_2^{\ell-a}]
    	-  \E_{\tilde{F}_{T}^N}[\wick{e_1^{n-\ell}}e_1^{a}e_2^{\ell-a}]
	= \sum_{\pi\in \mathcal{P}'(n,a,\ell)}
	\left(\prod_{A\in \pi} \kappa^{|A|,N}_{t_0+T}[e_A] -\prod_{A\in \pi} \cf{\len(A)=1} \kappa_T^{|A|}(e_1)  \right)\,.
    \]

	If $\pi\in \mathcal{P}'(n,a,\ell)$ is such that it contains some $A\in \pi$ with $\len(A)>1$, i.e., a not completely repeating cluster,
	then the second product yields zero but the corresponding term in the first product has an ``extra'' decay factor of $(N-1)^{-\alpha (\len(A)-1)}\le (N-1)^{-\alpha}$.

	If $\pi\in \mathcal{P}'(n,a,\ell)$ is such that $\len(A)=1$ for all $A\in \pi$, we telescope the terms in the second product by substituting
	there $\kappa_T^{|A|}(e_1)
	 = \kappa^{|A|,N}_{t_0+T}-  \delta_{|A|,N}(T) $.  This yields
	 \[
	 \prod_{A\in \pi} \kappa^{|A|,N}_{t_0+T} -\prod_{A\in \pi}  \kappa_T^{|A|}(e_1) = \sum_{\emptyset\ne \pi'\subset \pi} (-1)^{|\pi'|+1}
	 \prod_{A\in \pi\setminus \pi'} \kappa^{|A|,N}_{t_0+T} \prod_{A\in \pi'}  \delta_{|A|,N}(T)
	 \,.
	 \]
	 If $|\pi|=1$, there is only one term in the sum, with $\pi'=\pi$,
	 and in this case the right hand side equals $\delta_{n,N}(T)$.
	 However, since this requires that $\len(A)=1$ for $\pi=\{A\}$, we conclude that this term is present if and only if either $\ell=a$ and $a>0$,
	 or $\ell=n$ and $a=0$.
	 If $|\pi|>1$, the partition $\pi$ has at least two clusters all of which have a size strictly less than $n$ and at least one of which will produce a factor $\delta_{|A|,N}(T)$ with $|A|<n$.  Hence, these terms may be estimated using the induction assumption and the above mentioned generation of chaos bounds.  We can thus conclude that in this case
	 \[
	 \prod_{A\in \pi} \kappa^{|A|,N}_{t_0+T} -\prod_{A\in \pi}  \kappa_T^{|A|}(e_1) = \delta_{n,N}(T) \left(\cf{\ell=a,a>0}+\cf{\ell=n,a=0}\right)+ O((N-1)^{-\alpha})\,.
	 \]

    The terms with $a=0$ and $\ell=n$ also have $\ell_1=0$, and thus
    the these terms contribute a linear term
    \[
     2 I_{n,0}  \delta_{n,N}(T)
    \]
    in the rate function.
    The terms with $a=\ell>0$ have $C(\ell,a)=(-1)^\ell I_{\ell,0}$ and thus they contribute together a term
    \begin{align*}
      2 \sum_{\ell=1}^n \binom{n}{\ell}(-1)^\ell I_{\ell,0} \delta_{n,N}(T)\,.
	  \end{align*}
    Using the earlier proven result $\sum_{\ell=1}^n  \binom{n}{\ell}(-1)^\ell I_{\ell,0} = I_{n,0}-1$, we thus arrive at the equation
    \begin{align}
	\dv{T}\delta_{n,N}(T) =
	2(2I_{n,0}-1)\delta_{n,N}(T) + f_{n,N}(T)\,,
	\end{align}
	where $f_{n,N}(T) = O((N-1)^{-\alpha})$.  Thus if $n\ge 2$,
	\begin{align}
	\delta_{n,N}(t) &= \rme^{2(2I_{n,0}-1)t}\delta_{n,N}(0)+ \int_{0}^t \rme^{2(2I_{n,0}-1)(t-s)} f_{n,N}(s) \rmd s \\
	&= \int_{0}^t \rme^{2(2I_{n,0}-1)(t-s)} f_{n,N}(s) \rmd s\,,
	\end{align}
   with $2(2I_{n,0}-1)\le -\frac{1}{2}$.  After checking that the combinatorial factors related to $n$ work out as before, we can then conclude the induction step and, hence, the bound in (\ref{eq:deltangoal}) is valid for all $n\le n^*$, as claimed.
   \end{proofof}
   
   \begin{lemma}\label{th:Kacexpmomlemma}
   Let \(f_0\) be obtained from the first marginal of a symmetric probability measure on \(\Nsphere\). If \(f_t\) solves the Boltzmann--Kac equation with the initial data \(f_0\), then \(f_t\) has exponential moments and these satisfy
   $|\E_{f_t}[\rme^{\xi v^2} -1]|\le \frac{1}{2}$ for  any small enough $\xi\in \C$ and uniformly in $t\ge 0$.
   \end{lemma}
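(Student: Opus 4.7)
The plan is to control the moments $m_n(t) := \E_{f_t}[v^{2n}]$ uniformly in $t \ge 0$ and then to read off the exponential-moment estimate by summing. Since $f_0$ is the first marginal of a symmetric probability measure on $\Nsphere$, the almost sure identity $\sum_{i=1}^N v_i^2 = N$ forces $v^2 \le N$, so $m_n(0) \le N^n$ and in particular $f_0$ already has exponential moments of arbitrary size. Following the same computation that leads to (\ref{eq:Ftildecumulanthierarchy}), but applied to $\phi(v) = v^{2n}$ in the weak form (\ref{eq:weakKacBE}) and evaluating the $\theta$-integrals (only even powers of $\sin\theta\cos\theta$ survive), one obtains the closed moment hierarchy
\begin{align*}
\dv{t} m_n(t) = - a_n\, m_n(t) + 2 \sum_{k=1}^{n-1} \binom{2n}{2k} I_{k, n-k}\, m_k(t)\, m_{n-k}(t), \qquad a_n := 2\bigl(1 - 2 I_{n,0}\bigr),
\end{align*}
with $I_{n,0} = 2^{-2n}\binom{2n}{n}$. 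Here $a_1 = 0$ recovers energy conservation (so $m_1 \equiv 1$), while $a_n \ge \tfrac{1}{2}$ for every $n \ge 2$ because $I_{n,0}$ is decreasing in $n$.

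The crucial observation is that the unit-variance Maxwellian is a stationary solution of (\ref{eq:weakKacBE}), so its moments $m_n^\ast = (2n-1)!!$ must satisfy
\begin{align*}
a_n m_n^\ast = 2 \sum_{k=1}^{n-1} \binom{2n}{2k} I_{k,n-k}\, m_k^\ast\, m_{n-k}^\ast.
\end{align*}
For the rescaled sequence $M_n := B^n (2n-1)!!$, the factor $B^k \cdot B^{n-k} = B^n$ pulls out of the quadratic sum, so the same identity holds with $m_n^\ast$ replaced by $M_n$ for any $B > 0$; hence $(M_n)$ is a super-solution (with equality) of the hierarchy. Choosing $B = N$ one has $M_n \ge N^n \ge m_n(0)$ since $(2n-1)!! \ge 1$. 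I would then argue by induction on $n$: assuming $m_k(t) \le M_k$ for all $t \ge 0$ and all $k < n$, the source term in the hierarchy is bounded by $a_n M_n$, so $\dv{t}(m_n - M_n) \le - a_n (m_n - M_n)$, and Gronwall together with $m_n(0) \le M_n$ yields $m_n(t) \le M_n$ for every $t \ge 0$.

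Finally, for complex $\xi$ with $|\xi| < 1/(2N)$, the moment bound together with the classical identity $\sum_{n \ge 0} (2n-1)!!\, y^n / n! = (1 - 2 y)^{-1/2}$ gives
\begin{align*}
\bigl| \E_{f_t}[\rme^{\xi v^2} - 1] \bigr| \le \sum_{n \ge 1} \frac{|\xi|^n m_n(t)}{n!} \le \frac{1}{\sqrt{1 - 2 N |\xi|}} - 1,
\end{align*}
which is at most $\tfrac{1}{2}$ whenever $|\xi| \le 5/(18 N)$, uniformly in $t$. The main non-routine step is the super-solution identity: it rests on the Maxwellian being a stationary solution of the hierarchy, which can be verified either by direct algebra on the sum involving $(2n-1)!!$, or more conceptually by observing that $G^\ast(\xi) = (1-2\xi)^{-1/2}$ is a fixed point of the formal Boltzmann--Kac evolution for the exponential moment generating function. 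Everything else is a routine propagation-of-moments argument.
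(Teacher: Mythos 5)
Your proposal is correct in substance and shares the paper's overall skeleton (derive the closed energy-moment hierarchy from the weak Boltzmann--Kac equation, bound the moments by induction, then sum the series), but the key step is genuinely different. The paper propagates the ansatz $\E_{f_t}[e^n]\le C^{n-1}(n-1)!$ via a Duhamel formula, which forces explicit combinatorial work: the Erd\H{o}s bound on $\binom{2n}{n}$, a Riemann-sum estimate of $\sum_{k}\frac{n}{k(n-k)}$, and a choice of $C$ (e.g.\ $C=2N$) large enough to absorb these factors. You instead observe that the standard Gaussian is stationary for the Boltzmann--Kac evolution, so its energy moments $(2n-1)!!$ satisfy the hierarchy's stationarity identity exactly, and that the scaling $M_n=B^n(2n-1)!!$ preserves this identity because the quadratic term is homogeneous of total degree $n$; the comparison $m_n(t)\le M_n$ then closes by Gronwall with no binomial estimates at all. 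That is a cleaner supersolution argument, and your resulting radius $|\xi|\le 5/(18N)$ even covers the ball $|\xi|\le\frac{1}{4N}$ used in \eqref{eq:momgenbound}, so the bound is fully adequate for the lemma as stated (I checked the $n=2$ instance of your stationarity identity, $\tfrac12\cdot 3 = 2\binom{4}{2}I_{1,1}$, and the general claim follows from rotation invariance of i.i.d.\ Gaussians or from the fixed point $(1-2\xi)^{-1/2}$ of the generating-function evolution, as you note).

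The one step you should not gloss over is the justification of the moment ODE itself. The weak formulation \eqref{eq:weakKacBE} is stated for $\phi\in C_c(\R)$, so $\phi(v)=v^{2n}$ is not directly admissible, and appealing to ``the same computation that leads to \eqref{eq:Ftildecumulanthierarchy}'' is circular here: that derivation presupposes the uniform exponential moments you are trying to prove. You need an a priori argument that $m_n(t)$ is finite and that the hierarchy (or at least its Duhamel/integral form, which is all Gronwall requires) holds; the paper does this with a cutoff $\Psi(v^2+w^2-M)$ in the collision kernel, an induction on the moment order, and a limit $M\to\infty$ in the Duhamel formula. Your comparison argument survives this regularization unchanged (the cutoff only decreases the gain term), so the gap is technical rather than structural, but it must be filled for the proof to be complete.
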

   \begin{proof}
   Since \(f_0\) is obtained from a symmetric probability measure on \(\Nsphere\), it satisfies
   \begin{align}
   \E_{f_0}[e(v)^{n}] \leq B^{n-1}\,, \qquad \E_{f_0}[\rme^{\xi e(v)}]\le \rme^{\xi B}\,,
   \end{align}
   at least for $B=N$ and any $\xi\ge 0$.
%    Given any $\xi\in \R$ and $t_0>0$ such that $h_t:=\E_{f_t}[\rme^{\xi e(v)}]<\infty$ for $0\le t<t_0$, we find that
%    \[
%    \dv{t}h_t \le -2h_t +2 h_t^2\,,
%    \]
%    since $(\cos(\theta) v - \sin(\theta) w)^2\le v^2+w^2$ for all choices of $\theta$.
% This implies that for all allowed $t$
% \[
%  1< h_t \le \frac{1}{1-\rme^{-2(t_0-t)}}\,, \quad 0\le t< \frac{1}{2}\ln\left(1-\frac{1}{h_0}\right)^{-1}\,.
% \]
% By standard regularization argument, using a cutoff function to control the tail behaviour of the measure (hence satisfying a regularized evolution equation), one can then extend the estimates to conclude that indeed exponential moments remain finite, at least for some short time.
% Hence, we can differentiate the generating function to obtain evolution equations for moments, as before.
% Namely, as long as $h_t<\infty$, the moments satisfy an equation
 To conclude the evolution equation for the moments, we can use a regularization argument: namely, for $M>0$, we may consider solutions to the weak equation
  \begin{align}\label{eq:regularizedBKac}
  & \dv{t} \inner{\phi}{\mu_t}
  \nonumber \\ &\quad
  =
   2\int_{-\pi}^{\pi} \int_{\R} \int_{\R} \left(\phi(\cos(\theta) v - \sin(\theta) w)-
   \cos^2(\theta) \phi(v) - \sin ^2(\theta) \phi(w)\right) \Psi(v^2+w^2- M)
    \mu_t(\rmd v)\mu_t(\rmd w) \frac{\rmd \theta}{2\pi}\,.
\end{align}
where $\Psi(x)$ is a smooth function approximating the step function $\cf{x\le 0}$; in particular, $\Psi(x)\in [0,1]$,
$\Psi(x)=1$, if $x\le -\frac{1}{2}$, and $\Psi(x)=0$ if $x\ge 0$.
It is clear that normalization and mean energy are conserved by such solutions, using $\phi(v)=1$ and $\phi(v)=v^2$.
Since for any $v^2+w^2\le M$ and $\theta$ we have $(\cos(\theta) v - \sin(\theta) w)^2\le v^2+w^2\le M$, solutions to this equation preserve finiteness of exponential energy moments; we note that for $\phi(v)=\rme^{\xi v^2}$, $\xi>0$, the absolute value of the integrand is bounded by $\rme^{\xi M}$.  As $M\to \infty$, the regularized solutions converge weakly to the solution of the original Boltzmann--Kac equation which we know to be unique.

In particular, we can now conlude the following evolution equation for energy moments of the Boltzmann--Kac solution:
\begin{align}
   & \dv{t}\E_{f_t}[e^n] = 2(-1 + \int_{-\pi}^{\pi} (\cos(\theta)^{2n} + \sin(\theta)^{2n})\frac{\rmd \theta}{2\pi})\E_{f_t}[e_n]
   \nonumber \\& \qquad
   + 2\sum_{k=1}^{n-1}\binom{2n}{2k} \int_{-\pi}^{\pi} \cos(\theta)^{2k} \sin(\theta)^{2n-2k} \frac{\rmd \theta}{2\pi} \E_{f_t}[e^{k}]\E_{f_t}[e^{n-k}]\,.
   \end{align}
 To prove that the energy moments are finite for all $t\ge 0$ and satisfy the above evolution equation, we can use the conservation of energy for $n=1$ and then, using as an induction assumption finiteness of moments up to order $n-1$, take the limit $M\to \infty$ in the Duhamel form which follows by exponenting all order $n$ terms in (\ref{eq:regularizedBKac}); an analogous argument will be used later to control their magnitude more precisely.

 We aim to propagate the upper bound \(\E_{f_t}[e^n] \leq C^{n-1}(n-1)!\) where the constant \(C\) depends on \(B\). This will prove propagation of exponential moments. To this end, let \(Q_n(t) = \frac{\E_{f_t}[e^n]}{C^{n-1}(n-1)!}\). We want to show that for all \(t\), we have \(\abs{Q_{n}(t)} = Q_n(t) \leq 1\) once \(C\) is picked suitably.
   
   For the base case $n=1$ this is clear, no matter what \(C\). Assuming that the assumption holds for all \(1\le k < n\), we now consider the case \(n\). Using \eqref{eq:explicitIab}, we obtain
   \begin{align}
   \dv{t}Q_n(t) &= -a_n Q_n(t) + \sum_{k=1}^{n-1} \frac{(2n)!}{(2k)!(2n-2k)!}\frac{(k-1)!(n-k-1)!}{(n-1)!} \frac{C^{k-1}C^{n-k-1}}{C^{n-1}} I_{k,n-k} Q_{k}(t)Q_{n-k}(t) \nonumber \\
  % &= -a_n Q_n(t) + \frac{2}{C}\sum_{k=1}^{n-1} \frac{(2n)!}{2^{2n}k!(n-k)!n!} \frac{(k-1)!(n-k-1)!}{(n-1)!} Q_k(t) Q_{n-k}(t) \\
   &= -a_n Q_{n}(t) + \frac{2}{C}\frac{(2n)!}{2^{2n}n!(n-1)!} \sum_{k=1}^{n-1} \frac{1}{k(n-k)} Q_k(t)Q_{n-k}(t).
   \end{align}
   Thus,
   \begin{align}
   Q_n(t) &= \rme^{-a_n t}Q_n(0) + \frac{2}{C} 2^{-2n}\binom{2n}{n} \sum_{k=1}^{n-1} \frac{n}{k(n-k)} \int_{0}^t \rme^{-a_n (t-s)} Q_k(s)Q_{n-k}(s) \rmd s\,,
   \end{align}
   and by positivity, induction assumption, and the known Erd\H{o}s upper bound $\binom{2n}{n}< 2^{2n}\frac{1}{\sqrt{\pi n}}$, we obtain the following upper bound for \(n \geq 2\)
   \begin{align}
   Q_n(t) &\leq \rme^{-a_n t}\frac{B^{n-1}}{C^{n-1} (n-1)!} + \frac{2}{C} \frac{1}{\sqrt{n}} \sum_{k=1}^{n-1} \frac{n}{k(n-k)} \int_{0}^t \rme^{-a_n (t-s)}  \rmd s\,.
   \end{align}
   Here, the factor $\sum_{k=1}^{n-1} \frac{n}{k(n-k)} $ forms a Riemann sum, and is bounded by $2 \int_{1/n}^{\frac{1}{2}}\rmd x \frac{1}{x(1-x)}\le 4 \ln \frac{n}{2}$.  Since also $\int_{0}^t \rme^{-a_n (t-s)}  \rmd s \le \frac{1}{a_n}\le 2$, for $n\ge 2$ and $t\ge 0$, we may conclude a bound
   \begin{align}
   Q_n(t) &\leq (B/C)^{n-1} + \frac{1}{C} \frac{2^4}{\sqrt{n}} \ln \frac{n}{2} \,.
   \end{align}
   If $C\ge \max_{n\ge 2}\left(2B,\frac{2^5}{\sqrt{n}} \ln \frac{n}{2}\right)  < \infty$, we thus have $Q_n(t) \le 1$ for all $t$, which concludes the induction argument.  We point out that for large enough $N$ it suffices to pick \(C = 2N\). The moments of the time-evolved solution \(f_t\) then satisfy
   \begin{align}
   \E_{f_t}[e^n] \leq (2N)^{n-1}(n-1)! \,,
   \end{align}
   whereby we have exponential moments globally in time.  For example, with any $0\le \xi< \frac{1}{2N}$, it follows that $\E_{f_t}[\rme^{\xi v^2}]\le 1+ \frac{1}{2N}\ln (1-2N\xi)^{-1}$ for all $t\ge 0$.  Since $N\ge 2$, we have
   $0\le \E_{f_t}[\rme^{\xi v^2} -1]\le \frac{1}{2N}\ln (1-2N\xi)^{-1}\le \frac{1}{2}$ for any
	$\xi\le \frac{1}{4N}$, implying $|\E_{f_t}[\rme^{\xi v^2} -1]|\le \E_{f_t}[\rme^{|\xi| v^2} -1]\le \frac{1}{2}$, if
	$\xi\in \C$ with
	$|\xi|\le  \frac{1}{4N}$.
   \end{proof}

\appendix

	\section{Joint moments, cumulants, and Wick polynomials}\label{sec:cumulantsandWick}

	Let \((S, \mathcal{B}, \mu)\) be a probability space and \(Y \colon S \to \R\) a random variable. For simplicity, let us assume that it has moments of all order, even exponential moments\footnote{Meaning that there exists \(0<r<+\infty \) such that \(\E[\rme^{\lambda \abs{Y}}]\) exists for all \(\lambda \in (-r,r)\).}.  The cumulants of \(Y\) describe the higher order variance structure of \(Y\) with respect to the randomness \(\mu\). If \(\kappa_n[Y]\) stands for the \(n\)th order cumulant of \(Y\) with respect to \(\mu\), then the first two cumulants are given by
	\begin{align}
	\kappa_1[Y] &= \E_\mu[Y] \\
	\kappa_2[Y] &= \E_{\mu}[Y^2] - \E_{\mu}[Y]^2.
	\end{align}
	These are the mean and the variance of the random variable \(Y\).

	The higher order cumulants will have more complicated expressions. In general, the \(n\)th cumulant \(\kappa_n[Y]\) can be written in terms of order \(\leq n \) moments of \(Y\):
	\begin{align}
	\kappa_n[Y] \coloneqq \sum_{\pi \in \mathcal{P}([n])} (-1)^{\abs{\pi}-1}(\abs{\pi}-1)! \prod_{A \in \pi} \E[Y^{\abs{A}}].
	\end{align}
	Here \(\pi \in \mathcal{P}([n])\) means that \(\pi = \{A_1,A_2,\dots, A_\ell\}\) is a partition of the set \([n] = \{1,2,\dots, n\}\). For instance, in computing the variance, there are only two possible partitions of \(\{1,2\}\), namely \(\{\{1,2\}\}\) and \(\{\{1\}, \{2\}\}\).

	Provided that the random variable \(Y\) has exponential moments, the cumulants can equivalently be defined in terms of the Taylor coefficients of the locally (near zero) analytic function \(\xi \mapsto \log(\E[\rme^{\xi Y}])\). Here we have taken the principal branch for the complex logarithm. The cumulants can now be obtained as the Taylor coefficients of the generating function:
	\begin{align}
	\kappa_n[Y] = \partial_{\xi}^n \log(\E[\rme^{\xi Y}])\vert_{\xi = 0}.
	\end{align}

	\begin{remark}
	A well-known fact about cumulants is that the only random variables with only finitely many non-zero cumulants are Gaussians. Moreover, each Gaussian is specified by its only non-zero cumulants: its mean and variance. This suggests that the level of Gaussianity of a random variable can be gauged by studying how small its cumulants of order \(\geq 3\) are (for further results along this direction, see \cite{doring_method_2022}). If, for a sequence of random variables, the corresponding first and second cumulants converge to a pair of fixed constants \(\mu, \sigma^2\) while the rest of the cumulants go to zero, then the random variables converge in distribution to the Gaussian \(\mathcal{N}(\mu,\sigma^2)\).
	\end{remark}

	The notion of cumulants can be generalized from random variables to an indexed family of random variables. This gives rise to the notion of \emph{joint cumulants}. Assuming that we have a finite collection of random variables \(\{Y_{i}\}_{i=1}^n\) and that their joint probability distribution has exponential moments, i.e. that there exists some constant \(\delta >0 \) such that
	\begin{align}
	\E\left[\rme^{\delta \sum_{i=1}^n \abs{Y_i}}\right]<+\infty,
	\end{align}
	the joint cumulants can be defined through the Taylor coefficients of the joint cumulant generating function, which is defined, in a suitably small ball around \(0 \in \R^n\), as
	\begin{align}
	g_c(\xi) \coloneqq \log \E[\rme^{\xi \cdot Y}], \quad \xi \in B(0, r_n).
	\end{align}
	Here \(Y\) is the random vector \(Y = (Y_1,\dots, Y_n)\).

	The joint cumulants can then be obtained from the multidimensional joint cumulant generating function using the formula
	\begin{align}
	\kappa[Y_I] \coloneqq \left(\partial^{I}_{\xi} g_c(\xi)\right)\vert_{\xi =0}.
	\end{align}
	Here \(I \colon [m] \to [n]\) is a sequence that picks out \(m\) random variables from the collection \(\{Y_i\}_{i=1}^n\), with repeated random variables allowed. The differential operator is defined as \(\partial^{I}_{\xi} \coloneqq \prod_{i=1}^{n}\partial_{\xi_i}^{I_{i}}\), and finally \(Y_I\) stands for the sequence of random variables \(Y_I = (Y_{I_i})_{i=1}^{m}\). The assumption of joint exponential moments guarantees that everything works, as it implies that the function \(\xi \mapsto g_c(\xi)\) is analytic in a small neighborhood of \(0\).

	Under the same joint moment assumptions, the Wick polynomial generating function of the collection \(\{Y_{i}\}_{i=1}^n\) may be defined as
	\begin{align}
	G(Y;\xi) \coloneqq \rme^{\xi \cdot Y-g_c(\xi)}.
	\end{align}
	Wick polynomials related to the random variables \(\{Y_1,\dots, Y_n\}\) are polynomial expressions of the random variables, with the coefficients of the individual monomials given by the underlying randomness.

	Recall that the following truncated moments-to-cumulants formula holds (see \cite{lukkarinen_wick_2016} for the proof and further details)
	\begin{align}
	\E[\wick{Y_J} Y^I] = \sum_{\pi \in \mathcal{P}(J \cup I)} \prod_{A \in \pi} \cf{A \cap I \neq \emptyset}\kappa(Y_A).
	\label{eq:truncated-moments-to-cumulants}
	\end{align}
	In particular, if $I$ has only one element, the partition can only contain the entire set $J\cup I$.  Therefore, $\E[\wick{Y_J} Y_i]=
	\kappa[Y_{J\cup \set{i}}]$.
	The notation \(\mathcal{P}(I \cup J)\) stands for the collection of partitions of \(I \cup J\). Sets \(A \in \pi\) in the partition of \(J \cup I\) failing to meet the condition \(A \cap I \neq \emptyset\) will here be called \emph{Wick-internal} sets. Note that the difference between \eqref{eq:truncated-moments-to-cumulants} and the full moments-to-cumulants formula is that in the former all partitions involving a Wick internal set are removed from the right hand side. As we see in the text, this can have a regularizing effect.
	
	\section{Trigonometric integrals}\label{sec:trigonometric-integrals}
	Let
	\begin{align}
	I_{a,b} \coloneqq \int_{-\pi}^{\pi} \frac{\rmd \theta}{2\pi} (\sin(\theta)^2)^{a}(\cos(\theta)^2)^b\,, \qquad a,b\ge 0.
	\end{align}
	Using $\sin(\theta)^2=1-\cos(\theta)^2$ and integration by parts, we can check that
	these values satisfy the following recursion relations
	\begin{align}
	I_{a,b} &= \frac{2a-1}{2a+2b}I_{a-1,b}, \quad &a \geq 1, b \geq 0\\
	I_{0,b} &= \frac{2b-1}{2b}I_{0, b-1}, \quad &b\geq 1
	\end{align}
	together with the boundary condition
	\begin{align}
	I_{0,0} = 1.
	\end{align}
	It follows that for any $a,b\in \N_0$, with the conventions $(-1)!! \coloneqq 1$ and $0!! \coloneqq 1$,
	\begin{align}\label{eq:explicitIab}
	I_{a,b} &= \frac{(2a-1)!!(2b-1)!!}{(2a+2b)!!} = \frac{(2a)!(2b)!}{2^a2^b a! b! 2^{a+b}(a+b)!} = \frac{(2a)!(2b)!}{2^{2(a+b)}a!b!(a+b)!} \nonumber \\
	&= \frac{1}{\pi}\frac{\Gamma(a+\frac{1}{2})\Gamma(b+\frac{1}{2})}{\Gamma(a+b+1)}.
	\end{align}

\bibliographystyle{plainnat}

\bibliography{Kac-model}

	\end{document}